\PassOptionsToPackage{svgnames}{xcolor}
\documentclass[acmsmall]{acmart}
\settopmatter{printfolios=true,printccs=false,printacmref=false}


\usepackage[T1]{fontenc}
\usepackage[utf8]{inputenc}
\usepackage[english]{babel}

\usepackage{graphicx}

\usepackage{xspace}

\usepackage{amsmath}
\usepackage{amsfonts}
\usepackage{amsthm}
\usepackage{thm-restate}
\usepackage{mathtools}
\usepackage{textcomp}
\usepackage{gensymb}
\usepackage{multicol}

\usepackage{listings}
\usepackage[scaled=0.85]{DejaVuSansMono}

\usepackage{tikz}
\usetikzlibrary{arrows}

\usepackage{caption}
\usepackage{subcaption}
\usepackage[inline,shortlabels]{enumitem}
\setlist{leftmargin=*,noitemsep}
\usepackage{array}
\usepackage{adjustbox}
\usepackage{bm}
\usepackage{pifont}

\usepackage{natbib}
\usepackage{mathpartir} 

\mprset{sep=1.5em}

\usepackage{tikz}
\usetikzlibrary{decorations.text,backgrounds,positioning,shapes,
  shadings,shadows,arrows,decorations.markings,calc,fit,fadings,
  tikzmark
}

\usepackage[nameinlink,capitalize]{cleveref}
\addto\extrasenglish{%

}
\newtheorem{property}{Property}

\bibliographystyle{ACM-Reference-Format}

\author{Gabriel Radanne}
\affiliation{
  \institution{Inria}
  \country{Paris}
}
\email{radanne@informatik.uni-freiburg.de}

\author{Hannes Saffrich}
\affiliation{
  \institution{University of Freiburg}
  \country{Germany}
}
\email{saffrich@informatik.uni-freiburg.de}

\author{Peter Thiemann}
\affiliation{
  \institution{University of Freiburg}
  \country{Germany}
}
\email{thiemann@informatik.uni-freiburg.de}

\newcommand\htag[1]{\shortintertext{\textbf{#1}}}

\newcommand\affe{Affe\xspace}


\usepackage{xcolor}
\definecolor{butter}{HTML}{C4A000}
\definecolor{orange}{HTML}{CE5C00}
\definecolor{chocolate}{HTML}{8F5902}
\definecolor{chameleon}{HTML}{4E9A06}
\definecolor{skyblue}{HTML}{204A87}
\definecolor{plum}{HTML}{5C3566}
\definecolor{scarletred}{HTML}{A40000}
\definecolor{lightalu}{HTML}{BABDB6}
\definecolor{darkalu}{HTML}{2E3436}

\newcommand{\kwstyle}{\bfseries}

\lstset{
        basicstyle=\scriptsize\ttfamily,
  breaklines=true,
        aboveskip=5pt,
        belowskip=5pt,
        xleftmargin=0pt,
  commentstyle=\color{orange},
  keywordstyle=\kwstyle,
  language=Caml,
  stringstyle=\color{plum},
  tabsize=2,
  numberstyle=\tiny\color{gray},
  escapeinside={(*@}{*)},
  numbers=left,
  numbersep=2pt,
  keywordstyle=\kwstyle\color{skyblue},
  keywordstyle=[1]\kwstyle\color{skyblue},
  keywordstyle=[2]\kwstyle\color{chameleon},
  keywordstyle=[3]\kwstyle\color{chameleon},
  keywordstyle=[4]\kwstyle\color{chameleon},
  keywordstyle=[5]\kwstyle\color{chameleon},
  keywordstyle=[6]\kwstyle\color{chameleon},
  keywordstyle=[7]\kwstyle\color{chameleon},
  keywordstyle=[8]\kwstyle\color{chameleon},
  otherkeywords={&,&!},
  keywords=[1]{\&,\&\!,lin,aff,un}, 
  keywords=[2]{type,open,class,module,exception,external,let,val,method,in,and,rec,private,virtual,constraint},
  keywords=[3]{fun,function,functor,match,try,with},
  keywords=[4]{as,when,of},
  keywords=[5]{if,then,else,begin,end,object,struct,sig,for,while,do,done,to,downto},
  keywords=[6]{true,false},
  keywords=[7]{include,inherit,initializer},
  keywords=[8]{new,ref,mutable,lazy,assert,raise},
}

\lstset{literate=
  {\ *\ }{$\,\times\:$}3
  {o->}{$\multimap$}1
  {->}{{{$\tarr{}$}}}2
  {->.}{{{$\multimap$}}}2
  {=>}{{{$\:\Rightarrow{}$}}}1
  {<=}{{$\le$}}1
  {_r}{{${}_r$}}1
  {_r1}{{${}_{r+1}$}}2
  {-\{?\}>}{{{$\tarr{?}$}}}2
  {-\{'k\}>}{{{$\tarr{\kvar}$}}}2
  {-\{'k_1\}>}{{{$\tarr{\kvar_1}$}}}2
  {-\{'k_2\}>}{{{$\tarr{\kvar_2}$}}}2
  {-\{un\}>}{{{$\tarr{\texttt{un}}$}}}3
  {-\{lin\}>}{{{$\tarr{\texttt{lin}}$}}}3
  {-\{aff\}>}{{{$\tarr{\texttt{aff}}$}}}3
  {-\{aff_r1\}>}{{{$\tarr{\texttt{aff}_{r+1}}$}}}5
  {-\{un_r\}>}{{{$\tarr{\texttt{un}_{r}}$}}}4
  {-\{k\}>}{{{$\tarr{k}$}}}3
  {'a}{{{$\alpha$}}}1
  {tt}{{{$\tau$}}}1
  {kk}{{{$k$}}}1
  {'b}{{{$\beta$}}}1
  {'c}{{{$\gamma$}}}1
  {'k}{{{$\kvar$}}}1
  {'S}{{{$\mathcal S$}}}1
  {'T}{{{$\mathcal T$}}}1
  {`}{{{\lq}}}1
  {_1}{{{${}_1$}}}1
  {_2}{{{${}_2$}}}1
  {\{|}{{{\bfseries\kwstyle\color{gray}\{|}}}2
  {|\}}{{{\bfseries\kwstyle\color{gray}|\}}}}2
  {\\E}{$\exists$}1
}


\input{lst-rule}
\newcommand\lang{Affe\xspace}
\newcommand\hmx{$\operatorname{HM}(X)$\xspace}

\newcommand\lk{\leq}
\newcommand\mul{Q}
\newcommand\klin{\mathbf{L}}
\newcommand\kaff{\mathbf{A}}
\newcommand\kun{\mathbf{U}}
\newcommand\karr{\operatorname{\rightarrow}}
\newcommand\kvar{\kappa}

\newcommand\Sp{{sp}}

\newcommand\lub\bigvee
\newcommand\glb\bigwedge
\newcommand\Lat{\mathcal L}
\newcommand\CL{{\mathcal C_{\Lat}}}

\newcommand\SType{\ensuremath{\mathcal S}}

\newcommand\T[1]{\mathrm{#1}}
\newcommand\tvar{\alpha}
\newcommand\tcon{\T{T}}
\newcommand\schm{\sigma}
\newcommand\kschm{\theta}
\newcommand\tarr[1]{\operatorname{\xrightarrow{#1}}}
\newcommand\tapp[2]{\T{#1}\ #2}
\newcommand\qual[2]{#1 \operatorname{\Rightarrow} #2}
\newcommand\tyPair[3][{}]{#2 \times^{#1} #3}
\newcommand\tres{R}
\newcommand\tunit{\ensuremath{\mathtt{Unit}}}

\newcommand\generalize[3]{\operatorname{\text{gen}}(#1,#2,#3)}

\newcommand\unif{\mathrm{\psi}}
\newcommand\meet{\sqcup}

\newcommand\mostgeneral{\sqcup}


\newcommand\eletfun{\mathtt{letfun}}
\newcommand\ilam[5]{\lambda[#1 \mid #3 \Rightarrow #4]#5.}
\newcommand\ivar[3]{{#1}[{#2};{#3}]}
\newcommand\iapp[3]{({#2}\ {#3})_{#1}}
\newcommand\ilet[4]{\elet\ #2 =_{#1} #3\ \ein\ #4}
\newcommand\iletfun[7]{\eletfun\ \bvar{#2}{#3} =_{#1} \lam[{#4}]{#5} #6\ \ein\ #7}
\newcommand\ipair[4]{({#3},{#4})_{#1}^{#2}}
\newcommand\imatchin[5]{\ematch_{#2}\ #3 =_{#1} #4\ \ein\ #5}
\newcommand\aiapp[2]{\iapp{}{#1}{#2}}
\newcommand\aipair[3]{\ipair{}{#1}{#2}{#3}}
\newcommand\aimatchin[4]{\imatchin{}{#1}{#2}{#3}{#4}}

\newcommand\lam[2][{}]{\lambda^{#1} #2.}
\newcommand\elet{\mathtt{let}}
\newcommand\ematch{\mathtt{match}}
\newcommand\ein{\mathtt{in}}
\newcommand\letin[3]{\elet\ #1 = #2\ \ein\ #3}
\newcommand\matchin[4][\etransfm]{\ematch_{#1}\ #2 = #3\ \ein\ #4}
\newcommand\introPair[3][{}]{({#2},{#3})^{#1}}

\newcommand\app[2]{(#1\ #2)}
\newcommand\borrow[2][\BORROW]{{\&}^{#1}#2}
\newcommand\reborrow[2][\BORROW]{{\&\&}^{#1}#2}
\newcommand\borrowty[3][\BORROW]{{\&}^{#1}(#2,#3)}
\newcommand\region[3][n]{\{\!|#3|\!\}^{#1}_{#2}}
\newcommand\regionS[1]{\{\!|#1|\!\}}

\newcommand\sborrow[2][\BORROW]{#2{#1}}
\newcommand\etransfm{\phi}
\newcommand\transfm[1]{\etransfm(#1)}

\newcommand\create{\ensuremath{\mathtt{create}}}
\newcommand\rss[1]{[#1]}
\newcommand\observe{\ensuremath{\mathtt{observe}}}
\newcommand\update{\ensuremath{\mathtt{update}}}
\newcommand\destroy{\ensuremath{\mathtt{destroy}}}


\newcommand\Rannot[4][n]{#2 \rightsquigarrow_{#1} #3, #4}
\newcommand\RannotT[2]{#1 \rightsquigarrow #2}
\newcommand\getBorrows[3]{#1 \oplus #2 = #3}

\newcommand\Ctrue{\operatorname{True}}
\newcommand\Cfalse{\operatorname{False}}
\newcommand\Cempty{\cdot}

\newcommand\Cleq[2]{(#1 \leq #2)}
\newcommand\Ceq[2]{(#1 = #2)}
\newcommand\Cand{\wedge}
\newcommand\Cproj[2]{\exists #1.#2}
\newcommand\Weaken{\operatorname{Weak}}
\newcommand\Instance{\operatorname{Inst}}

\newcommand\Eempty{\cdot}
\newcommand\E{\Gamma}

\newcommand\bnone{\emptyset}
\newcommand\svar[3][\BORROW]{[#2 : #3]_{#1}}
\newcommand\bvar[2]{(#1 : #2)}
\newcommand\bbvar[4][\BORROW]{( #2 \div #4)_{#1}^{#3}}

\newcommand\lsplit[4]{#1 \vdash_e #2 = #3 \ltimes #4}
\newcommand\bsplit[4]{#1 \Lleftarrow #2 = #3 \ltimes #4}
\newcommand\lregion[5][n]{#2 \vdash_e #4 \rightsquigarrow_{#1}^{#3} #5}
\newcommand\bregion[5][n]{#2 \Lleftarrow #4 \rightsquigarrow_{#1}^{#3} #5}
\newcommand\fv[1]{\operatorname{fv}(#1)}
\newcommand\Eflat[1]{\Downarrow\!#1}

\newcommand\Sempty{\cdot}
\newcommand\Sone[2]{\left\{#1 \mapsto #2\right\}}
\newcommand\Sunion{\operatorname{\cup}}

\newcommand\Sv{\Sigma}
\newcommand\Sdel[1]{\!\!\mathbin{\setminus}\!\!\{#1\}}
\newcommand\Sonly[1]{\big|_{#1}}
\newcommand\Sadd[1]{\cup \{#1\}}
\newcommand\Sclos[1]{{{\downarrow}#1}}

\newcommand\BAR{\operatorname{|}}

\newcommand\entail[2]{#1 \operatorname{\vdash_e} #2}
\newcommand\equivC{\operatorname{=_e}}
\newcommand\inferS[4]{#1 \BAR #2 \operatorname{\vdash_s} #3 : #4}
\newcommand\inferW[5]{#1 \BAR #2 \BAR #3 \operatorname{\vdash_w} #4 : #5}
\newcommand\inferSK[4]{#1 \BAR #2 \operatorname{\vdash_s} #3 : #4}
\newcommand\inferK[4]{#1 \BAR #2 \operatorname{\vdash_w} #3 : #4}

\newcommand\inferSS[4]{#1 \BAR #2 \operatorname{\vdash_s} #3 \le #4}

\newcommand\normalize[2]{\operatorname{normalize}(#1,#2)}

\newcommand\Dom[1]{\operatorname{dom} (#1)}


\newcommand\subst[3]{#3[#1\rightarrow#2]}

\newcommand\IF[3]{\text{if } (#1) \text{ then } #2 \text{ else } #3}

\newcommand\MBORROW{\kaff}
\newcommand\IBORROW{\kun}
\newcommand\BORROW{\textit{b}}
\newcommand\BQ{\ensuremath\beta}
\newcommand\Addr\rho           
\newcommand\Loc\ell             

\newcommand\Multi[2][{}]{\overline{#2_{#1}}}

\newcommand\addlin[1]{#1}
\newcommand\inP[1]{\textcolor{ForestGreen}{\bm{#1}}}

\newcommand\CType[1]{\operatorname{\text{CType}} (#1)} 
\newcommand\IType[2]{\operatorname{\text{IType}} (\T{#1}, #2)} 
\newcommand\Bcompatible\nwarrow
\newcommand\VEnv\gamma          
\newcommand\Store\delta 
\newcommand\SE\Delta            
\newcommand\Perm\pi

\newcommand\StClosure[4]{\text{STCLOS}(#1, #2, #3, #4)}
\newcommand\StPClosure[6]{\text{STPOLY}(#1, #2, #3, #4, #5, #6)}
\newcommand\StPair[3]{\text{STPAIR}(#1, #2, #3)}
\newcommand\StRes[1]{\text{STRSRC}(#1)}
\newcommand\StFreed{\bullet}

\newcommand\TimeOut{\operatorname{TimeOut}}
\newcommand\Ok[1]{\operatorname{Ok} (#1)}
\newcommand\Nat{\mathbb{N}}
\newcommand\Rawloc[1]{\operatorname{getloc} (#1)}
\newcommand\Reach[2]{\operatorname{reach}_{#1} (#2)}
\newcommand\RS[2]{\operatorname{reach}_0 (#2)}

\newcommand\Linear[2]{\operatorname{lin}_{#1} (#2)}
\newcommand\Affine[2]{\operatorname{aff}_{#1} (#2)}
\newcommand\Unrestricted[2]{\operatorname{unr}_{#1} (#2)}
\newcommand\Crewrite{\leadsto}

\newcommand\REACH\Theta

\newcommand\MultiNumber[2]{#2(#1)}

\newcommand\LinPart[1]{{#1}^\klin}
\newcommand\AffPart[1]{{#1}^\kaff}
\newcommand\UnrPart[1]{{#1}^\kun}

\newcommand\Active[1]{\LinPart{#1}} 
\newcommand\MutableBorrows[1]{\AffPart{#1}} 
\newcommand\ImmutableBorrows[1]{\UnrPart{#1}} 
\newcommand\Suspended[2][{}]{{#2}^{#1}_{\#}}
\newcommand\Disjoint[1]{\operatorname{dis} (#1)}

\newcommand\Matches[2]{#1 \gets #2}        


\begin{document}

\title{Kindly Bent to Free Us}
\begin{abstract}
  Systems programming often requires the manipulation of resources like
  file handles, network connections, or dynamically allocated
  memory. Programmers need to follow certain protocols to handle
  these resources correctly. Violating these protocols causes bugs
  ranging from type mismatches over data races to use-after-free
  errors and memory leaks. These bugs often lead to security  vulnerabilities.

  While statically typed programming languages guarantee type soundness and memory
  safety by design, most of them do not address issues arising
  from improper handling of resources.
  An important step towards handling resources is the adoption of 
  linear and affine types that enforce single-threaded resource usage.
  However, the few languages supporting such types require heavy type
  annotations. 

  We present \lang, an extension of ML that
  manages linearity and affinity properties using kinds and
  constrained types. In addition
  \lang{} supports the exclusive and shared borrowing of affine
  resources, inspired by features of Rust.
  Moreover, \lang{} retains the defining features of the ML family:
  it is an impure, strict, functional expression language with complete principal type
  inference and type abstraction.
  \lang{} does not require any linearity annotations in
  expressions and supports common functional programming idioms.
\end{abstract}

\maketitle

\newcommand\ruleTimeOut{%
  \inferrule[TimeOut]{}{\Store, \Perm, \VEnv \vdash e \Downarrow^0 \TimeOut}
}

\newcommand\ruleSConst[1][i]{%
  \inferrule[SConst]{}{ \Store, \Perm, \VEnv \vdash c \Downarrow^{#1+1} \Ok{\Store, \Perm, c}}
}

\newcommand\ruleSVar[1][i]{%
  \inferrule[SVar]{}{\Store, \Perm, \VEnv \vdash x \Downarrow^{#1+1} \Ok{\Store, \Perm, \VEnv(x)}}
}

\newcommand\ruleSTApp[1][i]{%
  \inferrule[STApp]{
    \Matches \Loc { \VEnv (x)} \\
    \Loc \in \Perm \\
    \Matches {(\VEnv, \ilam {\Multi[i]{\kvar}}{\Multi[j]{\tvar}}Ckx{e})}{ \Store (\Loc)}\\
    \Perm' =  \IF{\entail C {k \le \kun_\infty}}{ \Perm}{\Perm\Sdel\Loc} \\
    \Loc'\notin\Dom{\Store}  \\
    \Store' = \Store[\Loc' \mapsto (\VEnv, \subst{\Multi[j]{\tvar}}{\Multi[j]{t}}{\subst {\Multi[i]{\kvar}}{\Multi[i]{k}}{(\lam[k]xe)}}) ]
  }{\Store, \Perm, \VEnv \vdash  \ivar x{\Multi[i]{k}}{\Multi[j]{\tau}}
    \Downarrow^{#1+1} \Ok{\Store', \Perm'\Sadd{\Loc'}, \Loc'}
  }
}

\newcommand\ruleSPLam[1][i]{%
  \inferrule[SPLam]{
    \Loc'\notin\Dom\Store \\
    \Store' = \Store[\Loc' \mapsto (\VEnv, \ilam
    {\Multi[i]{\kvar}}{\Multi[j]{\tvar}}Ck xe)] \\
    \Perm' = \Perm\Sadd{\Loc'}
  }{
    \Store, \Perm, \VEnv \vdash
    \ilam {\Multi[i]{\kvar}}{\Multi[j]{\tvar}}Ck xe
    \Downarrow^{#1+1} \Ok{ \Store', \Perm', \Loc'}
  }
}

\newcommand\ruleSApp[1][i]{%
  \inferrule[SApp]{
    \Store, \Perm, \VEnv \vdash e_1
    \Downarrow^{#1} \Ok{\Store_1, \Perm_1, r_1} \\
    \Matches\Loc{ r_1} \\
    \Matches{ (\VEnv'',\lam[k]{x}{e})}{ \Store_1 (\Loc)}  \\\\
    \Perm_1' = \IF{\entail {} {k \le \kun}}{\Perm_1}{ \Perm_1\Sdel\Loc}\\
    \Store_1, \Perm_1', \VEnv \vdash e_2
    \Downarrow^{#1} \Ok{ \Store_2, \Perm_2, r_2} \\
    \Store_2, \Perm_2, \VEnv''[x\mapsto r_2] \vdash e \Downarrow^{#1}
    \Ok{\Store_3, \Perm_3, r_3}
  }{\Store, \Perm, \VEnv \vdash \app{e_1}{e_2}
    \Downarrow^{#1+1} \Ok{\Store_3,\Perm_3, r_3}
  }
}

\newcommand\ruleSLet[1][i]{%
  \inferrule[SLet]{
    \Store, \Perm, \VEnv \vdash e_1
    \Downarrow^{#1} \Ok{ \Store_1, \Perm_1, r_1} \\
    \Store_1, \Perm_1, \VEnv[x \mapsto r_1] \vdash e_2
    \Downarrow^{#1} \Ok{ \Store_2, \Perm_2, r_2}
  }{
    \Store, \Perm, \VEnv \vdash \letin{x}{e_1}{e_2}
    \Downarrow^{#1+1} \Ok{\Store_2, \Perm_2, r_2}
  }
}

\newcommand\ruleSPair[1][i]{%
  \inferrule[SPair]{
    \Store, \Perm, \VEnv \vdash e_1
    \Downarrow^{#1} \Ok{ \Store_1, \Perm_1, r_1} \\
    \Store_1, \Perm_1, \VEnv \vdash e_2
    \Downarrow^{#1} \Ok{\Store_2, \Perm_2, r_2} \\\\
    \Loc'\notin\Dom{\Store_2} \\
    \Store_2' = \Store_2[\Loc' \mapsto \introPair[k]{r_1}{ r_2}] \\
    \Perm_2' = \Perm_2 \Sadd{\Loc'}
  }{
    \Store, \Perm, \VEnv \vdash \introPair[k]{e_1}{e_2}
    \Downarrow^{#1+1}
    \Ok{\Store_2', \Perm_2', \Loc'}
  }
}

\newcommand\ruleSMatchLocation[1][i]{%
  \inferrule[SMatchLocation]{
    \Store, \Perm, \VEnv \vdash e
    \Downarrow^{#1} \Ok{ \Store_1, \Perm_1, r_1} \\
    \Matches{\Loc}{r_1}  \\
    \Matches{\introPair[k]{ \Addr_1}{\Addr_2}}{\Store' (\Loc)} \\
    \Perm_1' = \IF{\entail {} {k \le \kun}}{\Perm_1}{\Perm_1\Sdel\Loc} \\
    \Store_1, \Perm_1', \VEnv[x,y \mapsto \Addr_1, \Addr_2] \vdash e_2
    \Downarrow^{#1} \Ok{\Store_2, \Perm_2, r_2}
  }{
    \Store, \Perm, \VEnv \vdash \matchin[\text{id}]{x,y}{e_1}{e_2} \Downarrow^{#1+1}
    \Ok{\Store_2, \Perm_2,  r_2}
  }
}

\newcommand\ruleSMatchBorrow[1][i]{%
  \inferrule[SMatchBorrow]{
    \Store, \Perm, \VEnv \vdash e_1
    \Downarrow^{#1} \Ok{ \Store_1, \Perm_1, r_1} \\
    \Matches{\BORROW\Multi\BORROW\Loc}{r_1} \\
    \Matches{\introPair[k]{ \Addr_1}{\Addr_2}} {    \Store' (\Loc)} \\
    \Addr_1' = \Addr_1\BORROW \\
    \Addr_2' = \Addr_2\BORROW \\\\
    \Perm_1' = (\Perm'\Sdel{\Addr_1,\Addr_1}) \Sadd{\Addr'_2, \Addr'_2} \\
    \Store1, \Perm_1', \VEnv[x,y \mapsto \Addr'_1, \Addr'_2] \vdash e_2
    \Downarrow^{#1} \Ok{ \Store_2, \Perm_2, r_2} \\
    \Perm_2' = (\Perm_2 \Sdel{\Addr'_1, \Addr'_2}) \Sadd{\Addr_1,\Addr_2}
  }{
    \Store, \Perm, \VEnv \vdash \matchin[\&^\BORROW]{x,y}{e_1}{e_2} \Downarrow^{#1+1}
    \Ok{\Store_2, \Perm_2',  r_2}
  }
}

\newcommand\ruleSRegion[1][i]{%
  \inferrule[SRegion]{
    \Matches\Addr{\VEnv (x)} \\
    \Addr \in \Perm \\
    \Store, (\Perm \Sdel\Addr) \Sadd{\sborrow{\Addr}}, \VEnv \vdash e
    \Downarrow^{#1} \Ok{\Store', \Perm', r}
  }{
    \Store, \Perm, \VEnv \vdash \region{\Sone x \BORROW}{e}
    \Downarrow^{#1+1} \Ok{ \Store', (\Perm' \Sdel{\sborrow\Addr})\Sadd\Addr, r}
  }
}

\newcommand\ruleSBorrow[1][i]{%
  \inferrule[SBorrow]{
    \Matches\Addr{\VEnv (x)} \\ \sborrow\Addr \in \Perm
  }{
    \Store, \Perm, \VEnv \vdash \borrow{x}
    \Downarrow^{#1+1} \Ok{ \Store, \Perm, \sborrow\Addr}
  }
}

\newcommand\ruleSCreate[1][i]{%
  \inferrule[SCreate]{
    \Store, \Perm, \VEnv \vdash e
    \Downarrow^{#1} \Ok{ \Store', \Perm', r}\\
    \Loc\notin \Dom{\Store'} }{
    \Store, \Perm,\VEnv \vdash \app\create e
    \Downarrow^{#1+1} \Ok{\Store'[\Loc \mapsto \rss{r}], \Perm'\Sadd\Loc, \Loc}
  }
}

\newcommand\ruleSDestroy[1][i]{%
  \inferrule[SDestroy]{
    \Store, \Perm, \VEnv \vdash e
    \Downarrow^{#1} \Ok{ \Store', \Perm', \Loc} \\
    \Matches{\rss{r}}{\Store' (\Loc)}
  }{
    \Store, \Perm, \VEnv \vdash \app\destroy e \Downarrow^{#1+1}
    \Ok{\Store'[\Loc\mapsto \StFreed], \Perm'\Sdel\Loc, ()}
  }
}

\newcommand\ruleSObserve[1][i]{%
  \inferrule[SObserve]{
    \Store, \Perm, \VEnv \vdash e_1
    \Downarrow^{#1} \Ok{ \Store_1, \Perm_1, r_1} \\
    \Matches\Addr{r_1} \\
    \Matches{\IBORROW\Multi\IBORROW\Multi\MBORROW\Loc}\Addr \\
    \Addr \in \Perm_1 \\
    \Matches{\rss{r}}{\Store_1 (\Loc)}
  }{
    \Store, \Perm, \VEnv \vdash \app\observe e
    \Downarrow^{#1+1} \Ok{ \Store_1, \Perm_1, r}
  }
}

\newcommand\ruleSUpdate[1][i]{%
  \inferrule[SUpdate]{
    \Store, \Perm, \VEnv \vdash e_1
    \Downarrow^{#1} \Ok{ \Store_1, \Perm_1, r_1} \\
    \Matches\Addr{r_1} \\
    \Matches{ \MBORROW\Multi\MBORROW\Loc}\Addr \\
    \Store_1, \Perm_1, \VEnv \vdash e_2
    \Downarrow^{#1} \Ok{ \Store_2, \Perm_2, r_2} \\
    \Addr \in \Perm_2 \\
    \Matches{\rss{r}}{\Store_2 (\Loc)} \\
    \Store_2' = \Store_2[\Loc \mapsto \rss{r_2}]
  }{
    \Store, \Perm, \VEnv \vdash \app{\app\update{e_1}}{e_2}
    \Downarrow^{#1+1} \Ok{\Store_2', \Perm_2 \Sdel{\Addr},  ()}
  }
}


\newcommand\ruleInstance{
  \inferrule[Instance]{
    \schm = \forall \Multi[i]{\kvar} \forall (\Multi[j]{\tvar_j:k}).\
    \qual{C}{\tau} \\\\
    \unif = [\Multi[i]{\kvar_i\mapsto k},\Multi[j]{\tvar_j \mapsto \tau}] 
  }{
    \unif(C),\unif(\tau) = \Instance (\E, \schm)
  }
}

\newcommand\ruleSDConst{
  \inferrule[Const]
  {
    \entail C {\addlin{\Cleq{\E}{\kaff_\infty}}}
  }
  { \inferS{C}{\E}{c}{\CType c} }
}

\newcommand\ruleSDVar{%
  \inferrule[Var]
  { \bvar{x}{
      \schm
    }
    \in \E \\
    C_x , \tau_x = \Instance (\E, \schm) \\\\
    \entail C {C_x \Cand \addlin{\Cleq{\E\Sdel{x}}{\kaff_\infty}}}
  }
  { \inferS{C}{\E}{x}{\tau_x}
  }
}

\newcommand\ruleSDLam{%
  \inferrule[Abs]
  {
    \inferS{C}
    {\E;\bvar{x}{\tau_2}}{e}{\tau_1} \\\\
    \addlin{\entail{C}{\Cleq{\E}{k}}}
  }
  { \inferS{C}{\E}
    {\lam{x}{e}}{\tau_2\tarr{k}\tau_1} }
}

\newcommand\ruleSDPair{
  \inferrule[Pair]
  { \addlin{\lsplit{C}{\E}{\E_1}{\E_2}} \\\\
    \inferS{C}{\E_1}{e_1}{\tau_1} \\\\
    \inferS{C}{\E_2}{e_2}{\tau_2}
  }
  { \inferS{C}{\E}{\introPair{e_1}{e_2}}{\tyPair{\tau_1}{\tau_2}} }
}

\newcommand\ruleSDApp{%
  \inferrule[App]
  {
    \inferS{C}{\E_1}{e_1}{\tau_2 \tarr{k} \tau_1} \\
    \inferS{C}{\E_2}{e_2}{\tau'_2} \\\\
    \addlin{\lsplit{C}{\E}{\E_1}{\E_2}}\\
    \entail C {\Cleq{\tau_2'}{\tau_2}}
  }
  { \inferS{C}
    {\E}{\app{e_1}{e_2}}{\tau_1} }
}

\newcommand\ruleSDVApp{%
  \inferrule[VApp]
  {
    {\E_1} (x_1) = {\tau_2 \tarr{k} \tau'_1} \\
    {\E_2} (x_2) = {\tau'_2}\\
    \entail C {\Cleq{\tau_2'}{\tau_2}}
  }
  { \inferS{C}
    {\E}{\app{x_1}{x_2}}{\tau'_1} }
}

\newcommand\ruleSDRegion{%
  \inferrule[Region]
  { \svar x {\tau_x}^n \in \E \\
    \addlin{ \lregion{C}{x}{\E}{\E'} }\\\\
    \inferS{C}{\E'}{e}{\tau} \\
    \entail C {\Cleq{\tau}{\klin_{n-1}}} \\
  }  { \inferS{C}{\E}{\region{\Sone x \BORROW}{e}}{\tau} }
}

\newcommand\ruleSDBorrow{
  \inferrule[Borrow]
  { \bbvar x k\schm \in \E \\
    C_x, \tau_x = \Instance (\E, \schm) \\\\
    \entail C {C_x \Cand \addlin{\Cleq{\E\Sdel{x}}{\kaff_\infty}}} \\
  }
  { \inferS{C}{\E}{\borrow{x}}{\borrowty{k}{\tau_x}} }
}
\newcommand\ruleSDReBorrow{
  \inferrule[Reborrow]
  { \inferS{C}{\E}{x}{\borrowty{k}{\tau}} }
  { \inferS{C}{\E}{\reborrow{x}}{\borrowty{k}{\tau}} }
}
\newcommand\ruleSDCreate{
  \inferrule[Create]
  { \inferSK{C}{\E}{\tau}{k} \\\\
    \entail C {\Cleq k {\kun_0} \Cand \addlin{\Cleq{\E}{\kaff_\infty}}} }
  { \inferS{C}{\E}{\create}{\tau \tarr{} \tapp\tres\tau } }
}
\newcommand\ruleSDObserve{
  \inferrule[Observe]
  { \inferSK{C}{\E}{\tau}{k} \\\\
    \entail C {\Cleq k {\kun_0}  \Cand \addlin{\Cleq{\E}{\kaff_\infty}}} \\
  }
  { \inferS{C}{\E}{\observe}{\borrowty[\IBORROW]{k'}{\tapp\tres\tau} \tarr{} \tau} }
}
\newcommand\ruleSDUpdate{
  \inferrule[Update]
  { \inferSK{C}{\E}{\tau}{k} \\
    \entail C {\Cleq k {\kun_0}  \Cand \addlin{\Cleq{\E}{\kaff_\infty}}} \\
  }
  { \inferS{C}{\E}{\update}{\borrowty[\MBORROW]{k'}{\tapp\tres\tau} \tarr{} \tau \tarr{\kaff} \tunit } }
}
\newcommand\ruleSDDestroy{
  \inferrule[Destroy]
  { \inferSK{C}{\E}{\tau}{k} \\\\
    \entail C {\Cleq k {\kun_0}  \Cand \addlin{\Cleq{\E}{\kaff_\infty}}} \\
  }
  { \inferS{C}{\E}{\destroy}{\tapp\tres\tau \tarr{} \tunit} }
}

\newcommand\ruleSDLet{
  \inferrule[Let]
  { \inferS{C \Cand D}{\E_1}{e_1}{\tau_1} \\
    (C_\schm,\schm) = \generalize{D}{\E}{\tau_1}\\
    \entail{C}{C_\schm} \\\\
    \inferS{C}{\E;\bvar{x}{\schm}}{e_2}{\tau_2} \\
    \addlin{\lsplit{C}{\E}{\E_1}{\E_2}}\\
  }
  { \inferS{C}
    {\E}{\letin{x}{e_1}{e_2}}{\tau_2} }
}

\newcommand\ruleSDMatchPair{
  \inferrule[MatchPair]
  {
    \inferS{C}{\E_1}{e_1}{\transfm{\tyPair{\tau_1}{\tau'_1}}} \\\\
    \inferS{C}
    {\E_2;
      \bvar{x}{\transfm{\tau_1}};
      \bvar{x'}{\transfm{\tau'_1}}}
    {e_2}{\tau_2} \\\\
    \addlin{\lsplit{C}{\E}{\E_1}{\E_2}}
  }
  { \inferS{C}
    {\E}{\matchin{x,x'}{e_1}{e_2}}{\tau_2} }
}


\newcommand\ruleSDIVar{%
  \inferrule[Var] {
    \bvar{x}{\tau} \in \E \\\\
    \entail C {\addlin{\Cleq{\E\Sdel{x}}{\kaff_\infty}}}
  } {
    \inferS{C}{\E}{x}{\tau}
  }
}

\newcommand\ruleSDIVarInst{%
  \inferrule[VarInst]
  { \bvar{x}{
      \forall \Multi[i]{\kvar} \forall (\Multi[j]{\tvar_j:k}).\ \qual{C_x}{\tau}
    }
    \in \E \\\\
    \unif = [\Multi[i]{\kvar_i\mapsto k},\Multi[j]{\tvar_j \mapsto \tau}] \\\\
    \entail C {\unif(C_x) \Cand \addlin{\Cleq{\E\Sdel{x}}{\kaff_\infty}}}
  }
  { \inferS{C}{\E}{\ivar{x}{\Multi[i]k}{\Multi[j]\tau}}{\unif\tau}
  }
}

\newcommand\ruleSDILam{%
  \inferrule[Abs] {
    \inferS{C} {\E;\bvar{x}{\tau_2}}{e}{\tau_1} \\
    \addlin{\entail{C}{\Cleq{\E}{k}}}
  } {
    \inferS
      {C}
      {\E}
      {\lam[k]xe}
      {\tau_2\tarr{k}\tau_1}
  }
}

\newcommand\ruleSDIPair{
  \inferrule[Pair]
  { \Sp : \addlin{\lsplit{C}{\E}{\E_1}{\E_2}} \\\\
    \inferS{C}{\E_1}{e_1}{\tau_1} \\\\
    \inferS{C}{\E_2}{e_2}{\tau_2}
  }
  { \inferS{C}{\E}{\ipair\Sp{k}{e_1}{e_2}}{\tyPair{\tau_1}{\tau_2}} }
}

\newcommand\ruleSDIApp{%
  \inferrule[App]
  {
    \Sp : \addlin{\lsplit{C}{\E}{\E_1}{\E_2}}\\\\
    \inferS{C}{\E_1}{e_1}{\tau_2 \tarr{k} \tau_1} \\\\
    \inferS{C}{\E_2}{e_2}{\tau'_2} \\\\
    \entail C {\Cleq{\tau_2'}{\tau_2}}
  }
  { \inferS{C}
    {\E}{\iapp\Sp{e_1}{e_2}}{\tau_1} }
}

\newcommand\ruleSDICreate{
  \inferrule[Create]
  { \inferSK{C}{\E}{\tau}{k} \\
    \entail C {\Cleq k {\kun_0}} }
  { \inferS{C}{\E}{\create}{\tau \tarr{} \tapp\tres\tau } }
}
\newcommand\ruleSDIObserve{
  \inferrule[Observe]
  { \inferSK{C}{\E}{\tau}{k} \\
    \entail C {\Cleq k {\kun_0}} \\
    \kvar \text{ fresh}
  }
  { \inferS{C}{\E}{\observe}{\borrowty[\IBORROW]{\kvar}{\tapp\tres\tau} \tarr{} \tau} }
}
\newcommand\ruleSDIUpdate{
  \inferrule[Update]
  { \inferSK{C}{\E}{\tau}{k} \\
    \entail C {\Cleq k {\kun_0}} \\
    \kvar \text{ fresh}
  }
  { \inferS{C}{\E}{\update}{\borrowty[\MBORROW]{\kvar}{\tapp\tres\tau} \tarr{} \tau \tarr{\kaff} \tunit } }
}
\newcommand\ruleSDIDestroy{
  \inferrule[Destroy]
  { \inferSK{C}{\E}{\tau}{k} \\
    \entail C {\Cleq k {\kun_0}} }
  { \inferS{C}{\E}{\destroy}{\tapp\tres\tau \tarr{} \tunit} }
}

\newcommand\ruleSDILet{
  \inferrule[Let] {
    \Sp : \addlin{\lsplit{C}{\E}{\E_1}{\E_2}}\\
    \inferS{C}{\E_1}{e_1}{\tau_1} \\
    \inferS{C}{\E;\bvar{x}{\tau_1}}{e_2}{\tau_2} \\
  } {
    \inferS{C}{\E}{\ilet\Sp{x}{e_1}{e_2}}{\tau_2}
  }
}

\newcommand\ruleSDIPLet{
  \inferrule[PLet] {
    \Sp : \addlin{\lsplit{C}{\E}{\E_1}{\E_2}} \\
    \schm_1 =
      \forall \Multi[i]\kvar
              \bvar{\Multi[j]\tvar}{\Multi[j]k}.\
      \qual{D}{\tau_2\tarr{k}\tau_1} \\
    \inferS
      {C \Cand D}
      {\E_1; \bvar{\Multi[j]\tvar}{\Multi[j]k}; \bvar{x}{\tau_2}}
      {e_1}
      {\tau_1} \\
    \addlin{\entail{C \Cand D}{\Cleq{\E_1}{k}}} \\
    \entail{C}{\Cproj{(\Multi[i]{\kvar},\Multi[j]{\tvar})}{D}}\\
    \inferS{C}{\E;\bvar{f}{\schm_1}}{e_2}{\tau_2} \\
  } {
    \inferS
      {C}
      {\E}
      {\iletfun\Sp{f}{\schm_1}{k}{x}{e_1}{e_2}}
      {\tau_2}
  }
}

\newcommand\ruleSDIMatchPair{
  \inferrule[MatchPair] {
    \Sp : \addlin{\lsplit{C}{\E}{\E_1}{\E_2}} \\
    \inferS
      {C}
      {\E_1}
      {e_1}
      {\transfm{\tyPair{\tau_1}{\tau'_1}}} \\
    \inferS
      {C}
      {\E_2; \bvar{x}{\transfm{\tau_1}}; \bvar{y}{\transfm{\tau'_1}}}
      {e_2}
      {\tau_2} \\
  } {
    \inferS
      {C}
      {\E}
      {\imatchin\Sp\etransfm{x,x'}{e_1}{e_2}}
      {\tau_2}
  }
}


\newcommand\ruleSDAIPair{
  \inferrule[Pair] {
    \bvar{x_1}{\tau_1} \in \Gamma \\\\
    \bvar{x_2}{\tau_2} \in \Gamma \\\\
    \entail C {\addlin{\Cleq{\E\Sdel{x_1,x_2}}{\kaff_\infty}}}
  }
  { \inferS{C}{\E}{\aipair{k}{x_1}{x_2}}{\tyPair{\tau_1}{\tau_2}} }
}

\newcommand\ruleSDAIApp{%
  \inferrule[App] {
    \bvar{x_1}{\tau_2 \tarr{k} \tau_1} \in \Gamma \\\\
    \bvar{x_2}{\tau_2'} \in \Gamma \\\\
    \entail C {\Cleq{\tau_2'}{\tau_2}} \\\\
    \entail C {\addlin{\Cleq{\E\Sdel{x_1,x_2}}{\kaff_\infty}}}
  } {
    \inferS{C}{\E}{\aiapp{x_1}{x_2}}{\tau_1}
  }
}

\newcommand\ruleSDAIMatchPair{
  \inferrule[MatchPair] {
    \Sp : \addlin{\lsplit{C}{\E}{\E_1}{\E_2}} \\
    \E_1 = \bvar z {\transfm{\tyPair{\tau_1}{\tau_1'}}}  \\
    \inferS
      {C}
      {\E_2; \bvar{x}{\transfm{\tau_1}}; \bvar{x'}{\transfm{\tau_1'}}}
      {e_2}
      {\tau_2} \\
  } {
    \inferS
      {C}
      {\E}
      {\aimatchin\etransfm{x,x'}{z}{e_2}}
      {\tau_2}
  }
}


\newcommand\ruleIVar{%
  \inferrule[Var$_I$]
  { \bvar{x}{\forall \Multi[i]{\kvar} \forall (\Multi[j]{\tvar_j:k}).\
    \qual{C}{\tau}}\in \E \\
    \Multi[i]{\kvar'},\Multi[j]{\tvar'} \text{ fresh} \\\\
    (C,\unif) = \normalize{C_x}{[\Multi[i]{\kvar_i\mapsto \kvar'},\Multi[j]{\tvar_j \mapsto \tvar'}]}
  }
  { \inferW
    {\addlin{\bvar{x}{\sigma}}}
    {(C,\unif|_{\fv{\E}})}{\E}{x}{\unif\tau} }
}
\newcommand\ruleIAbs{%
  \inferrule[Abs$_I$]
  { \tvar,
    \kvar\text{ fresh}\\
    \inferW{\Sv_x}{(C',\unif')}
    {\E;\bvar{x}{\tvar}
    }{e}{\tau} \\\\
    D = C'\Cand
    \addlin{\Cleq{\Sv_x \Sdel{x}}{\kvar} \Cand \Weaken_{\bvar{x}{\tvar}}(\Sv_x)} \\\\
    (C,\unif) = \normalize{D}{\unif'}
  }
  { \inferW{\addlin{\Sv_x \Sdel{x}}}{(C,\unif\Sdel{\tvar,\kvar})}{\E}
    {\lam{x}{e}}{\unif(\tvar)\tarr{\unif(\kvar)}\tau} }
}
\newcommand\ruleIApp{%
  \inferrule[App$_I$]
  { \tvar,\kvar\text{ fresh}\\
    \inferW{\Sv_1}{(C_1,\unif_1)}{\E}{e_1}{\tau_1} \\\\
    \addlin{\bsplit{C_s}{\Sv}{\Sv_1}{\Sv_2}}\\
    \inferW{\Sv_2}{(C_2,\unif_2)}{\E}{e_2}{\tau_2} \\\\
    D =
    C_1 \Cand C_2 \Cand \Cleq{\tau_1}{\tau_2\tarr{\kvar}\tvar}
    \Cand \addlin{C_s} \\\\
    \unif' = \unif_1 \mostgeneral \unif_2 \\
    (C,\unif) = \normalize{D}{\unif'}\\
  }
  { \inferW{\addlin{\Sv}}{(C,\unif)}
    {\E}{\app{e_1}{e_2}}{\unif(\tvar)} }
}
\newcommand\ruleILet{%
  \inferrule[Let$_I$]
  { \inferW{\Sv_1}{(C_1,\unif_1)}{\E}{e_1}{\tau_1} \\
    (C_\schm,\sigma) = \generalize{C_1}{\unif_1\E}{\tau_1} \\\\
    \inferW{\Sv_2}{(C_2,\unif_2)}{\E;\bvar{x}{\sigma}}{e_2}{\tau_2} \\
    \addlin{\bsplit{C_s}{\Sv}{\Sv_1}{\Sv_2 \Sdel{x}}}\\
    \unif' = \unif_1 \mostgeneral \unif_2 \\\\
    D =
    C_\schm \Cand C_2 \Cand
    \addlin{C_s \Cand \Weaken_{\bvar{x}{\sigma}}(\Sv_2)}  \\
    (C,\unif) = \normalize{D}{\unif'}\\
  }
  { \inferW{\addlin{\Sv}}{(C,\unif|_{\fv{\E}})}
    {\E}{\letin{x}{e_1}{e_2}}{\unif\tau_2} }
}
\newcommand\ruleIPair{%
  \inferrule[Pair$_I$]
  { \inferW{\Sv_1}{(C_1,\unif_1)}{\E}{e_1}{\tau_1} \\
    \inferW{\Sv_2}{(C_2,\unif_2)}{\E}{e_2}{\tau_2} \\
    \unif' = \unif_1 \mostgeneral \unif_2 \\\\
    \addlin{\bsplit{C_s}{\Sv}{\Sv_1}{\Sv_2}}\\
    D =
    C_1 \Cand C_2 \Cand \addlin{C_s} \\
    (C,\unif) = \normalize{D}{\unif'}\\
  }
  { \inferW{\Sv}{(C,\unif)}{\E}{\introPair{e_1}{e_2}}{\tyPair{\tau_1}{\tau_2}} }
}
\newcommand\ruleIMatch{%
  \inferrule[MatchPair$_I$]
  { \tvar,\kvar,\tvar',\kvar'\text{ fresh}\\
    \inferW{\Sv_1}{(C_1,\unif_1)}{\E}{e_1}{\tau_1} \\
    \E' = \E;
    \bvar{x}{\transfm{\tvar}};\bvar{\tvar}{\kvar};
    \bvar{x'}{\transfm{\tvar'}};\bvar{\tvar'}{\kvar'}\\
    \inferW{\Sv_2}{(C_2,\unif_2)}
    {\E'}{e_2}{\tau_2} \\
    \unif' = \unif_1 \mostgeneral \unif_2 \\
    \addlin{\bsplit{C_s}{\Sv}{\Sv_1}{(\Sv_2 \Sdel{x,x'})}}\\
    D =
    C'_1 \Cand C_2 \Cand \Cleq{\tau_1}{\transfm{\tyPair{\tvar}{\tvar'}}}
    \Cand
    \addlin{C_s
      \Cand \Weaken_{\bvar{x}{\transfm\tvar},\bvar{x'}{\transfm\tvar'}}(\Sv_2)} \\
    (C,\unif) = \normalize{D}{\unif'}\\
  }
  { \inferW{\addlin{\Sv}}{(C,\unif|_{\fv{\E}})}
    {\E}{\matchin{x,x'}{e_1}{e_2}}{\unif\tau_2} }
}
\newcommand\ruleIBorrow{%
  \inferrule[Borrow$_I$]
  { \inferW{\_}{(C,\unif)}{\E}{x}{\tau} \\
    \kvar \text{ fresh}
  }
  { \inferW
    {\addlin{\bbvar x \kvar {\tau}}}
    {(C,\unif)}{\E}{\borrow{x}}{\borrowty{\kvar}{\tau}} }
}
\newcommand\ruleIReBorrow{%
  \inferrule[ReBorrow$_I$]
  { \inferW{\Sv}{(C',\unif')}{\E}{x}{\tau'} \\
    \kvar \text{ fresh} \\\\
    (C,\unif) = \normalize{C' \Cand \Cleq{\tau'}{\borrowty{\kvar}{\tau}}}{\unif'}\\
  }
  { \inferW
    {\addlin{\bbvar x \kvar \tau}}
    {(C,\unif)}{\E}{\borrow{x}}{\borrowty{\kvar}{\tau}} }
}
\newcommand\ruleIRegion{%
  \inferrule[Region$_I$]
  { \inferW{\addlin{\Sv'}}{(C',\unif')}{\E}{e}{\tau} \\
    \addlin{ \bregion{C_r}{x}{\Sv}{\Sv'} }\\\\
    \inferK{(C_\tau,\unif_\tau)}{\E}{\tau}{k_\tau}\\\\
    D = C' \Cand C_\tau \Cand \Cleq{k_\tau}{\klin_{n-1}} \Cand C_r\\\\
    (C,\unif) = \normalize{D}{\unif' \mostgeneral \unif_\tau}\\
  }  { \inferW{\addlin{\Sv}}{(C,\unif)}{\E}{\region{\Sone x \BORROW}{e}}{\tau} }
}

\newcommand\ruleResultConstant{%
  \inferrule{}{ \SE \vdash c : \CType{c} }
}

\newcommand\ruleResultLocation{%
  \inferrule{}{ \SE \vdash \ell : \SE (\ell) }
}

\newcommand\ruleResultBorrow{%
  \inferrule{
    \Multi\BQ \Bcompatible \BORROW_n \\
    \SE \vdash \Loc  : \tau
  }{  \SE \vdash
    \Multi\BQ\Loc : \borrowty[\BORROW]{\BORROW_n}{\tau}}
}

\newcommand\ruleStorableFreed{%
  \inferrule{}{
    \SE \vdash \StFreed : \tau
  }
}

\newcommand\ruleStorableResource{%
  \inferrule{
    \SE \vdash r : \IType{\tcon}{\Multi\tau}
  }{
    \SE \vdash {[r]} : \tapp{\tcon}{\Multi\tau}
  }
}

\newcommand\ruleStorablePair{%
  \inferrule{
    \SE \vdash r_1 : \tau_1 \\
    \SE \vdash r_2 : \tau_2 \\
    \entail\Cempty\Cleq{\tau_1}{k} \Cand \Cleq{\tau_1}{k}
  }{
    \SE \vdash \introPair[k]{r_1}{r_2} : \tyPair[k]{\tau_1}{\tau_2}
  }
}

\newcommand\ruleStorableClosure{%
  \inferrule{
    (\exists \E, C)~ \SE \vdash \VEnv : \E
    \\
    \Disjoint\E
    \\
    \inferS{C}{\E;\bvar x{\tau_2}}{e}{\tau_1}
    \\
    \addlin{\entail{C}{\Cleq{\E}{k}}}
  }{
    \SE \vdash (\VEnv, \lam[k]xe) : \tau_2\tarr{k}\tau_1
  }
}


\section{Introduction}

A large proportion of systems programming is focused on the proper
handling of resources, like file handles, network connections, or
dynamically allocated memory. Each of these resources comes with a
protocol that prescribes the correct use of its API.
For examples, a file handle appears as the result of opening a
file. If it was opened for reading, then read operations will succeed,
but write operations will fail. Once the handle is closed, it cannot
be used for reading or writing, anymore.
Dynamic allocation of memory is similar. An API call returns a
pointer to a memory area, which can then be read and written to until
the area is released by another API call.

In both cases, a resource is created in a certain state and a resource
handle is returned to the program. Depending on this state, certain API calls
can safely be applied to it. Finally, there is another API call to
release the resource, which renders the handle invalid.
Taken to the extreme, each API call changes the state so that a
different set of API calls is enabled afterwards.
Ignoring such life cycle protocols is a common source of errors.

Most type systems provide type soundness and memory safety, but neglect the
protocol aspect. Systems that can support reasoning about protocols
build on linear types \cite{DBLP:journals/tcs/Girard87} and/or
uniqueness types~\cite{DBLP:conf/plilp/BarendsenS95}. A value of linear
type is guaranteed to be consumed
exactly once. That is, a file that has been opened must be closed and
memory that has been allocated must be released. A value of unique
type is guaranteed to have a single reference to it. Thus, memory can
be reused on consumption of the value.

These systems work well if one is prepared to write programs
functionally in resource-passing style. In this style, all operations
in the resource's API take the resource as a parameter and return it
in possibly modified state~\cite{DBLP:journals/jfp/AchtenP95}. In
typestate-oriented programming, they would also modify its
type~\cite{DBLP:conf/oopsla/AldrichSSS09}. Functional session types
represent a popular
example~\cite{DBLP:journals/jfp/GayV10,lindley17:_light_funct_session_types}.

Explicit resource passing places a heavy burden on the programmer and
complicates the program structure. For imperative APIs,
resource-passing style is not an option at all. To this end,
\citet{DBLP:conf/popl/BoylandR05}  proposed the notion of
\emph{borrowing} a resource. The idea is that a linear resource can be
borrowed to a function call. The function can work with a borrow of
the resource, but it cannot release the resource. Only the original
owner of the resource has all rights to it and can release it.

The concepts of ownership and borrowing have grown popular over time
and they form the foundation of the type system of the Rust language
\cite{rust}, which considers any memory-allocated data structure a
resource. Rust supports two kinds of borrows,
shared and exclusive ones.
Exclusive borrows enable modification of the data structure
whereas shared borrows only grant read access.
At any given time, either a single exclusive borrow is active or
any number of shared borrows can be active.
Moreover, Rust makes sure that the lifetime of a
borrow is properly contained in the lifetime of its lender.

The design of Rust is geared towards programmers with a low-level
imperative programming background, like C or C++. Its management of
lifetimes supports the manual way of memory management customary in
these languages very well and makes it safe. However, programmers with
a background in managed languages feel alienated from the lack of garbage
collected data. They would prefer a setting where automatic memory
management with garbage collection was the default, but where they
could seemlessly switch to safe, manual resource management if that
was required.
As a concrete example, consider a functional programmer who wants to
safely interact with a C library. Invoking a C function is easy via
the existing foreign function interface, but managing the underlying
resources like malloc'd storage is not: it cannot be left to the
garbage collector, but proper release of the storage via calls to \texttt{free()} must be
ensured by programming conventions.

Our work provides a safe solution to programmers in this situation. We
propose an extended type system for ML-like languages that comes with
linear and affine types, exclusive and shared borrows, but all that
integrated with full principal type inference, garbage collected data, and
automatic placement of borrowing regions.
In our system, it is a type error to omit the call to release the
storage given a suitably typed API for storage allocation.

The most closely related contenders in this design space are Linear Haskell
\cite{DBLP:journals/pacmpl/BernardyBNJS18}, henceforth LH,
Quill \cite{DBLP:conf/icfp/Morris16}, and ALMS \cite{DBLP:conf/popl/TovP11}.
Compared to LH and Quill, the goals and means are
similar as these systems also permit abstraction over the number of uses of
values and retain type inference, but the details are different.
\begin{enumerate}
\item Multiplicities in LH and Quill are either linear or unrestricted whereas
  we also distinguish affine values.
\item In \lang{} and in Quill multiplicities are directly attached to the type of a
  value. For example, in \lang{} the function type \lstinline/'a-{lin}>'b/
  denotes the type of a \emph{single-use function} that can be called
  just once, whereas the multiplicities in LH choose
  between $\alpha\to\beta$ and $\alpha \multimap\beta$ where the
  latter is a function that promises to \emph{use its argument exactly
    once}.
\item \lang{} makes use of multiplicity contraints (like Quill) and kind
  subsumption (unlike Quill). Kind subsumption results in significantly simpler, more readable
  inferred types.
\item Neither LH nor Quill have borrowing whereas \lang{} supports
  two flavors: affine (exclusive, mutable) and unrestricted (shared, immutable) borrows.
\end{enumerate}
See \cref{sec:related-work} for further in-depth discussion of these
and other related works.

\subsection{First examples}
\label{sec:first-example}

\lstMakeShortInline[keepspaces,basicstyle=\small\ttfamily]@

\begin{figure}[tp]
  \begin{subfigure}[t]{0.35\linewidth}
    \lstinputlisting{code/writefiles.affe}
    \vspace{-15pt}
    \caption{File API}
    \label{fig:writing-files-api}
  \end{subfigure}~
  \begin{subfigure}[t]{0.3\linewidth}
\begin{lstlisting}
let main () =
  let h = File.fopen "foo" in
  File.write &!h "Hello ";(*@\label{line:hello}*)
  File.write &!h "world!";(*@\label{line:world}*)
  File.close h
\end{lstlisting}
    \vspace{-10pt}
    \caption{File example}
    \label{fig:writing-files-example}
  \end{subfigure}~
  \begin{subfigure}[t]{0.3\linewidth}
\begin{lstlisting}
let main () =
  let h = File.fopen "foo" in
  {| File.write &!h "Hello " |};(*@\label{line:hello}*)
  {| File.write &!h "world!" |};(*@\label{line:world}*)
  File.close h
\end{lstlisting}
    \vspace{-10pt}
    \caption{File example with regions}
    \label{fig:writing-files-example-region}
  \end{subfigure}
  \vspace{-10pt}
  \caption{Writing files}
  \label{fig:writing-files}
\end{figure}

As a first, well-known example we consider a simplified API for
writing files shown in \cref{fig:writing-files-api}.  It introduces a
linear abstract type @File.t@. A call like
@File.fopen "foo"@ returns a linear handle to a newly created
file, which \emph{must} be released later on with @File.close@
as shown in \cref{fig:writing-files-example}. Failing to do so is a
static type error.  To write to the file, we must take an exclusive
borrow @&!h@ of the handle and pass it to the
@File.write@ function. Exclusive borrows are affine:
they must not be duplicated, but there is no requirement to use
them. This affinity shows up in the annotation @-{aff}>@ of the second arrow in
the type of @File.write@: a partial application like
@File.write &!h@ captures the affine borrow and hence the
resulting function is also affine. It
would be an error to use the affine closure twice as in
\begin{lstlisting}[numbers=none]
let w = File.write &!h in w "Hello "; w "world!" (*type error*)
\end{lstlisting}
The remaining arrows in
the API are unrestricted and we write @->@ instead of the
explicitly annotated @-{un}>@.

Every borrow is restricted to a \emph{region}, i.e., a lexically
scoped program fragment from which the borrow must not escape. In
\cref{fig:writing-files-example}, there are two regions visualized in \cref{fig:writing-files-example-region}, one
consisting of \cref{line:hello} and another consisting of
\cref{line:world}. Both are fully contained in the scope of the linear
handle @h@, hence we can take one exclusive borrow @&!h@ in each
region. In both regions the borrow is consumed immediately by passing
it to @File.write@. \lang elaborates regions automatically before type
inference. Alternatively, programmers may mark regions explicitly (See
\cref{sec:imper-progr}).

This example demonstrates three features of our system:
\begin{enumerate}
\item type and region inference without annotations in user code (\cref{fig:writing-files-example}),
\item types carry multiplicity annotations in the form of kinds,
\item resource APIs can be written in direct style as linearity is a
  property of the type @File.t@.
\end{enumerate}

Direct style means that there is a function like @fopen@ that creates
and returns a linear resource. In contrast, LH forces programmers to
use resource-passing style because, in LH, linearity is a property of
a function, rather than a property of a value that restricts the way
that value can be handled (as in \lang). An LH API analogous to @File@
might provide functions like
\begin{itemize}
\item @withFile : path -> (handle o-> Unrestricted r) -> r@, which creates a new
  file handle and takes a continuation that uses the @handle@
  linearly, but returns an unrestricted value\footnote{For technical
    reasons, LH  requires the programmer to use a type like
    \lstinline/Unrestricted/ at this  point.},
\item @writeFile : string -> handle o-> handle@, which
  returns the transformed resource @handle@, and
\item @closeFile : handle o-> unit@, which consumes the @handle@ by
  closing the file.
\end{itemize}

In general, kinds can be polymorphic and constrained. Function
application and composition are the archetypical
examples for functions exploiting that feature.\footnote{Compared to
  Quill \cite{DBLP:conf/icfp/Morris16} the signatures of application and
  composition are simpler because \lang{} supports kind subsumption.}
For application, \lang{} infers the following type.
\begin{lstlisting}[numbers=none]
let app f x = f x
# app : ('a -{'k}> 'b) -> ('a -{'k}> 'b)
\end{lstlisting}
The reading of the inferred type is straightforward. If
@f@ is a @'k@-restricted function, then so is
@app f@. The multiplicities of @'a@ and
@'b@ play no role. As usual in ML-like languages, we
implicitly assume prenex quantification by
$\forall\kappa\forall\alpha\forall\beta$. Internally, the
type checker also quantifies over the kinds of $\alpha$ and $\beta$,
but the full prefix
$\forall\kappa\kappa_1\kappa_2\forall(\alpha:\kappa_1)\forall(\beta:\kappa_2)$
of the type of @app@ is only revealed as much as necessary for
understanding the type.

For @compose@, \lang{} infers this type.
\begin{lstlisting}[numbers=none]
let compose f g x = f (g x)
# compose : ('k <= 'k_1) => ('b -{'k}> 'c) -> ('a -{'k_1}> 'b) -{'k}> ('a -{'k_1}> 'c)
\end{lstlisting}
Like in @app@, the multiplicities of the type variables
@'a,'b,'c@ do not matter. However, the multiplicity
@'k@ of @f@ reappears on the second to last arrow
because @compose f@ is a closure that inherits
@f@'s multiplicity. The multiplicities of @g@  and
@f@ both influence the multiplicity of the last arrow, so
we would expect its annotation to be the least upper bound
$\kappa \sqcup \kappa_1$. Thanks to subsumption of multiplicities, it
is sufficient to assume $\kappa \le \kappa_1$ and @g@'s
actual multiplicity gets subsumed to $\kappa_1$. This constraint
simplification is part of our type inference algorithm. As before,
printing the type scheme only mentions the non-trivial constraint
$\kappa\le\kappa_1$ and omits the prenex quantification over $\kappa,
\kappa_1$ as well as the kinds of @'a,'b,'c@.

\lstDeleteShortInline@

\subsection{Contributions}
\label{sec:contributions}

\begin{itemize}
\item A polymorphic type system that encodes linearity and affinity with
  borrowing in lexical regions. Polymorphism covers types and kinds that
  express multiplicity constraints on the number of uses of a
  value. This type system is a conservative extension of systems for
  existing ML-like languages.
\item Expressive type soundness theorem with respect to a big-step linearity-aware semantics.
\item An extension of the \hmx framework
  \cite{DBLP:journals/tapos/OderskySW99} for constrained type
  inference to equip the type system with full, principal type inference.
\item Soundness proof of the inference algorithm.
\item Automatic inference of regions for borrows.
\item A prototype implementation of the type inference algorithm, including all
  constraint simplification and extended with algebraic datatypes and
  pattern matching,
  available at \url{https://affe.netlify.com/}.
\end{itemize}

As \lang{} is built on top of the \hmx{} framework, which is a general
framework for expressing constraint-based typing and type inference,
the extension of our work with features like typeclasses, ad-hoc overloading,
traits, etc is possible and orthogonal to the presentation in this paper.
While the system is geared towards type inference, it is nonetheless
compatible with type annotations and thereby amenable to extensions
where type inference may no longer be possible.


\section{Linearity, Affinity, and Borrows at Work}
\label{motivation}

\lstMakeShortInline[keepspaces,basicstyle=\small\ttfamily]@

\lang{} supports the resource-passing style
common in functional encodings of session types (e.g.,
\cite{DBLP:journals/jfp/Padovani17}; see also
\cref{sec:session-linearity} in the supplement) as well as other
functional resource handling. But it really shines
when manipulating mutable resources like buffers or connection pools
using a mix of functional and imperative programming styles.
To support this usage pattern of linearity, \lang{} relies on
the notion of borrowing \cite{DBLP:conf/popl/BoylandR05}. Our first
example of linear arrays demonstrates simple borrowing and
imperative programming; the second example demonstrates reborrowing
and the interaction between closures and borrowing by implementing a
Sudoko solver based on a hybrid copy-on-write data structure; the
third example demonstrates advanced uses 
of regions with iterators on linear values and the low-level
primitives needed to implement them.
Further examples are available in \cref{sec:extra-example}.

\subsection{Imperative programming with linear arrays}
\label{sec:imper-progr}

\begin{figure}[tp]
  \centering
\begin{lstlisting}[numbers=none]
module Array : sig
  type ('a : 'k) t : lin
  val create : ('a : un) => int * 'a -> 'a t
  val free : ('a : aff) => 'a t -> unit
  val length : &('a t) -> int
  val get : ('a : un) => &('a t) * int -> 'a
  val set : ('a : aff) => &!('a t) * int * 'a -> unit
  val map : (&'a -> 'b) * &('a t) -> 'b t
  val iter : ('a -> unit) * 'a t -> unit
end
\end{lstlisting}
  \vspace{-15pt}
  \caption{Linear arrays}
  \label{ex:array}
  \label{sig:array}
  \vspace{-10pt}
\end{figure}

The API for mutable linear arrays (\cref{sig:array})
aims to safely handle manual allocation and
deallocation of arrays that may contain affine or linear elements.
A program would first use @create (n, v)@ to create
an array of size @n@ initialized with value
@v@. The value @v@ must be unrestricted as it is duplicated to
initialize all array elements.
To @free@ an array the elements must be affine. Thanks to subkinding,
the type of @free@ is pleasingly simple: any type @'a@ whose kind is
less than or equal to @aff@ is acceptable. 
The @length@ function is always applicable.
The @get@ function is only applicable
if the element type is unrestricted as one element is duplicated.
To @set@ an array element displaces the previous content, which must
be at least affine.

The @map@ function can transform arrays with arbitrary elements. In 
particular, it can turn unrestricted elements into linear (affine)
ones. It takes a borrow of the input array and returns a newly created
output array. As @free@ing requires affine elements, we provide the
@iter@ function which takes a
suitable finalizer and an array with arbitrary elements, which is
consumed. Indeed such an iteration is the only way to free an array
with linear elements. A real-life
API would provide a combination of @map@ and @iter@ as a
``destructive'' map that consumes its input array. Assuming a uniform
representation, such a destructive map might be implemented by
in-place update.

To manage the different kinds of accessing the array we distinguish between constructors, destructors,
observers, and mutators.
Constructors and destructors like @create@ and @free@ manipulate the whole
array. 
The constructor
@create@ yields a linear resource which is consumed by @free@.
During the lifetime of the array resource @a@, we can split
off \emph{shared borrows} @&a@ that provide a read-only
view or \emph{exclusive borrows} @&!a@ for read-write views.
Observer functions such as @length@ and @get@ expect a shared borrow
argument, mutator functions such as @set@ expect
an exclusive borrow.

Each borrow is tied to a region whose lifetime is properly
contained in the lifetime of the resource.
In a region, we can split off as many shared borrows of a resource as
we like, but we can take only one exclusive borrow. In a
subsidiary region, we can take shared borrows of any
borrow or we can take an exclusive borrow of an exclusive borrow from an
enclosing region. Borrows are confined to their regions. Inside the region,
shared borrows are unrestricted (@un@) whereas exclusive
borrows are affine (@aff@).

Using the API 
we can create an
array of Fibonacci numbers in an imperative coding style:
\begin{lstlisting}
let mk_fib_array n =
  let a = create (n, 1) in
  for i = 2 to n - 1 do
    let x = get (&a, i-1) + get (&a, i-2) in(*@\label{ex:array:get}*)
    set (&!a, i, x)(*@\label{ex:array:set}*)
  done;
  a(*@\label{ex:array:return}*)
# mk_fib_array : int -> int Array.t
\end{lstlisting}

After
creation of the array, the presence of a borrow in the for loop
prevents access to the ``raw'' resource inside the loop's
body. In particular, the resource cannot be freed through a
borrow. \cref{ex:array:get} contains two shared borrows 
in the same expression which forms a region by itself (recall that
shared borrows are unrestricted and may thus be duplicated). These borrows
are split off the exclusive borrow used in \cref{ex:array:set} which
belongs to the next enclosing region corresponding to the loop body.
The whole array can be returned in \cref{ex:array:return} because  the
borrows are no longer in scope.
More precisely, here is is an annotated excerpt with regions explicitly
marked by braces @{| ... |}@:
\begin{lstlisting}[firstnumber=3]
  for i = 2 to n - 1 do {|
    let x = {| get (&a, i-1) + get (&a, i-2) |} in(*@\label{ex:array:region:get}*)
    set (&!a, i, x)
  |} done;
\end{lstlisting}

One region consists of the header expression of the @let@ in
\cref{ex:array:region:get}. It is contained in another region
spanning the body of the @for@ loop. \lang guarantees that borrows
never escape the smallest enclosing region. It employs a system of
\emph{indexed kinds} like @aff_r@ and @un_r@ where
$r$ is a positive integer that corresponds to the lexical nesting
depth of regions. For instance, the type of @&!a@ in
\cref{ex:array:set} has kind @aff_1@ whereas the type of
@&a@ in \cref{ex:array:region:get} has kind
@un_2@ and the typing of the inner region is such that types with
kind indexes greater than or equal to $2$ cannot escape.
In the example, borrows cannot escape  because they are consumed
immediately by @get@ and @set@.

\subsection{Solving sudokus with hybrid data-structures}

This section presents an implementation of a
backtracking Sudoku solver using a safe API for persistent arrays that
supports both mutable updates and immutable versioning.
The implementation showcases safe mixing of resource allocation and
deallocation in the presence of exclusive (mutable) and immutable
borrows. It also demonstrates two new aspects: the interaction between
closures and borrows and the notion of reborrowing.

Recently introduced persistent data structures
permit transient mutations where
non-linear uses lead to degraded performance
\cite{DBLP:conf/ml/ConchonF07} or to
dynamic and static checks \cite{DBLP:journals/pacmpl/Puente17}.
In particular, persistent Hash-Array-Mapped-Tries (HAMT) have been used with similar
APIs in several non-pure functional languages (OCaml, Clojure, \dots).
Affine types help formalize the performance contract between the programmer
and the library, while borrows avoid the need to thread state explicitly,
as usually required by an API for immutable data types.

Our implementation of a backtracking Sudoku solver abstracts this scenario.
The solver maintains a two-dimensional array to represent the state of
the game and uses backtracking when there are several choices to proceed.
As choice points may be revisited several times, it seems advantageous
to select a persistent data structure for the array.
However, local changes between choice points may be implemented as
cheap in-place mutations.

\cref{sig:hybarray} contains an API @HYBARRAY@ along with an
implementation @CowArray@
that enables using mutable and immutable modifications to
the board through affine types and borrows.
The signature differs slightly from the @Array@
signature. As our application requires the @get@ function, the array
elements must be unrestricted, but
the structure itself remains linear so as to be implemented in terms of @Array@.
The in-place mutation function @set_mut@ with type
@&!('a t) * int * 'a -> unit@ works on an exclusive borrow  whereas the persistent
@set@ operation has type @&('a t) * int * 'a -> 'a t@. It
takes a shared borrow because it only reads from the
argument array, but returns a fresh,  modified structure.
The module @CowArray@, also in \cref{ex:cow}, contains a very simple
implementation of 
@HYBARRAY@ that represents hybrid arrays
as regular arrays and uses copy-on-write for persistent
modifications. The function
@mapi: (int * &'a -> 'b) * &('a t) -> 'b t@
is a simple variation on @Array.map@ where the mapping function also
takes the position of the element. Recall that @Array.map@ always
creates a new array for the result.

In the example code, we make use of a two-dimensional version
@Matrix@ of the @CowArray@ data\-structure. The only difference is
that the API functions @get@, @set@, and @set_mut@ now take two index
parameters of type @int@ instead of one. The internal working is exactly the same.

\begin{figure}[tp]
  \centering
  \begin{subfigure}[t]{0.45\linewidth}
\begin{lstlisting}[numbers=none]
module type HYBARRAY = sig
  include ARRAY
  val set : &('a t) * int * 'a -> 'a t
  val set_mut : &!('a t) * int * 'a -> unit
end
\end{lstlisting}
  \end{subfigure}\hfill
  \begin{subfigure}[t]{0.55\linewidth}
\begin{lstlisting}[numbers=none]
module CowArray : HYBARRAY = struct
  include Array
  let set (a, i0, x0) =
    Array.mapi ((fun (i, x) -> if i = i0 then x0 else x), a)
  let set_mut = Array.set
end
\end{lstlisting}
  \end{subfigure}
  \vspace{-15pt}
  \caption{Signature and Implementation of hybrid arrays}
  \label{sig:hybarray}
  \label{ex:cow}

\begin{lstlisting}[numbers=none]
val propagate : int -> int -> &!Matrix.t -> int -> unit
val solve : int -> int -> Matrix.t -> unit
\end{lstlisting}

  \begin{subfigure}[t]{0.45\linewidth}
  \lstinputlisting[linerange=1-13]{code/sukodu.affe}
  \end{subfigure}\hfill
  \begin{subfigure}[t]{0.55\linewidth}
  \lstinputlisting[linerange=14-31,firstnumber=last]{code/sukodu.affe}
  \end{subfigure}
  \vspace{-15pt}
  \caption{Excerpt of the Sudoku solver}
  \label{ex:sudoku}
\end{figure}


Our implementation of a Sudoku solver (~\cref{ex:sudoku}) represents
the board as a 2D-matrix (\cref{line:boardtype}).
Each cell contains an integer set of type @IntSet.t@ that represents
admissible solutions so far. This type is immutable, i.e.,
@IntSet.remove@ produces a new value.

The main functions are @solve@ and @propagate@ with the typings shown
on top of \cref{ex:sudoku}. The types of @propagate_line@ etc are the
same as for @propagate@. From the types, we can see that @solve@ takes
ownership of the board, whereas @propagate@ only takes a mutable
borrow. Hence, the @propagate@ functions can only modify or read the board,
whereas @solve@ has full control. 

The Sudoku solver @solve@ iterates over the cells and tries each possible
solution (\cref{line:try_solution}). 
When a value is picked for the current cell, it creates a choice point
in @new_board@, where the current cell is updated with an immutable modification (\cref{line:immutset}), and propagate
the changes with the @propagate@ function.
The @propagate@ function uses direct mutation through an
exclusive borrow of the matrix as it need not preserve the previous
version of the board.
The implementation of @propagate@ is split into three parts
for lines, columns, and square, which are all very similar to function
@propagate_lines@ (\cref{line:propline}).

As the @board@ parameter to the @propagate@ function is an exclusive
borrow, it should be handled 
in an affine manner. To pass it safely to the three helper functions,
the body of @propagate@ \emph{reborrows} (i.e., it takes a borrow of a
borrow) the board three times in 
\cref{line:reborrow1}-\cref{line:reborrow3}.
The function @propagate_line@ also contains two reborrows of the
exclusive borrow argument @board@, an
immutable one (\cref{line:reborrow_i}) and an exclusive one
(\cref{line:reborrow_m}). It demonstrates the facility 
to take immutable borrows from exclusive ones.

The typing ensures that the mutations do not compromise the state at
the choice point, because they operate on a new state @new_board@ created for one
particular branch of the choice.
As the @set@ function only requires an unrestricted shared borrow,
the closure @try_solution@ remains unrestricted even though
it captures the borrow @&board@.
The price is that @try_solution@ cannot escape from
@&board@'s region. In this example, the inferred region corresponds to the
@begin@/@end@ scope. 
Hence, @try_solution@ can be used in
the iteration in \cref{line:iter}.
As @board@ is linear we must free it outside of the region
before returning (\cref{line:free:g}).

While presented for copy-on-write arrays, the API
can easily be adapted to other persistent data structures with
transient mutability such as Relaxed-Radix Balance Vectors (RRB) \cite{DBLP:journals/pacmpl/Puente17}
or persistent HAMTs \cite{bagwell2001ideal,clojurehamt} to provide  a
convenient programming style without compromising performance.





\subsection{Iterators and regions}
\label{tuto:vector}

In the examples so far, regions do not appear in type signatures.
But for certain programming idioms, we want to extend the scope of a
region across a function boundary.
For instance, how should we fold on an array of linear objects?
Here is a first attempt at the type of a fold function:
\begin{lstlisting}[numbers=none]
val fold : ('a:'k) => ('a -> 'b -{'k}> 'b) -> 'a Array.t -> 'b -{lin}> 'b
\end{lstlisting}
This type puts no restrictions on the element type of the array, but
it requires the @fold@ function to consume the array and all its
elements. The last function arrow is linear because the array type (from
\cref{sec:imper-progr}) is linear. 

If we want to work on borrows of linear and affine resources, then the
typing gets more involved because
we must make sure those borrows are not leaked in the result.
We obtain the following signature for @bfold@, the borrowing fold operation:
\begin{lstlisting}[numbers=none]
val bfold : ('b:'k),('k <= lin_r) => (&(aff_r1,'a) -> 'b -{aff_r1}> 'b) -> &('k_1,'a Array.t) -> 'b -{'k_1}> 'b
\end{lstlisting}

The folded function receives a shared borrow of the element in the array.
The typing of the callback ensures
that this borrow is neither captured nor returned by the function.
This encapsulation is implemented with a universally quantified \emph{kind index variable} $r$.
The signature prescribes the type @&(aff_r1,'a)@ for the
shared borrow of the resource with an affine kind at region nesting $r+1$. The return
type of the callback is constrained to kind @'k <= lin_r@. The
important part of this constraint is the $_r$ index, which ensures
that the callback cannot return the borrowed argument from the more
deeply nested scope. 
The input of the fold is a shared borrow of the array,
which ensures that we have the right to share borrows of the inner content and
make multiple concurrent folds.

As an easy example, we fold over an array of files
@all_files : File.t Array.t@ to compute the sum of their sizes:
\begin{lstlisting}[numbers=none]
let total_size_of files = bfold (fun f s -> File.size f + s) files 0
let total_size = total_size_of &all_files
\end{lstlisting}

This approach is not sufficient if we want to mutate the elements of
the array during iteration.
To this end, we need to take an exclusive borrow of the structure to iterate
on:
\begin{lstlisting}[numbers=none]
val iter_mut : (&!(aff_r1,'a) -> unit) -> &!('k_1,'a Array.t) -> unit
\end{lstlisting}

While the distinction between mutable and immutable iteration functions
seems unfortunate, it is typical of
programming with borrows and is also present in the Rust standard library.
It enables the programmer to explicitly state how different iterations
may be composed and optimized.
It also enables different implementations such as
using parallel iterations in the immutable case.
\lang's region variables ensure that the content iterated on can never
be leaked outside of the iteration function.
This pattern is essential in many use cases of linearity such
as pools of linear objects (see \cref{tuto:pool}).

To close this discussion, let's see which primitives are needed to implement
functions like @bfold@ and @iter_mut@. The naive sequential implementations of
both functions boil down to a loop over the index range of the array:
\begin{flushleft}\vspace{-\baselineskip}
  \begin{minipage}[t]{0.49\linewidth}
\begin{lstlisting}
let rec bfold_helper f a z i =
  if i < 0 then z
  else bfold_helper f &&a (f (get_sb &&a i) z) (i-1)
let bfold f a z =
  let l = length &&a - 1 in
  bfold_helper f &&a z l
\end{lstlisting}
  \end{minipage}
  \begin{minipage}[t]{0.49\linewidth}
\begin{lstlisting}
let rec iter_helper f a i =
  if i < 0 then ()
  else (f (get_eb &&!a i); iter_helper f &&!a (i-1))
let iter_mut f a =
  let l = length &&a - 1 in
  iter_helper f &&!a l
\end{lstlisting}
  \end{minipage}
\end{flushleft}
Observe that we are proposing two different primitives
\begin{itemize}
\item
  @get_sb : &('k, 'a Array.t) -> int -{'k}> &('k, 'a)@ \\to get a
  shared borrow from a shared borrow of an array and
\item
  @get_eb : &!('k, 'a Array.t) -> int -{'k}> &!('k, 'a) (* Unsafe! *)@
\\to get an
  exclusive borrow from an exclusive borrow of  an array.
\end{itemize}
These primitives have the same underlying implemention (the same as
@get@ from \cref{sec:imper-progr}). Their types arise from the
intuition that the borrow of a structure should entitle to borrows of
its substructures, roughly, the borrow of an array could be considered as
array of borrows.
However, only @get_sb@ is safe: the shared borrow of the array entitles us to
obtain shared borrows for the elements \emph{in the same region} as
the shared borrow for the array freezes the array inside the
region. This freeze extends to the elements because 
lifetime of the array fully overlaps with the lifetime of its
elements. 
Considering @get_eb@, we see that we may obtain \emph{two different
exclusive borrows} of the same array element inside a region. Clearly,
the exclusive borrow for the element should live in a nested region
where the array is not accessible. Hence, the safe alternative is
to use a different function for obtaining borrows of elements
\begin{lstlisting}[numbers=none]
with_eborrow : ('b:'k),('k <= lin_r) => &!('k_1,'a Array.t) -> int -> (&!(aff_r1,'a) -> 'b) -{'k_1}> 'b
\end{lstlisting}
Hence, the helper function for @iter_mut@ should read like this
\begin{lstlisting}
let rec iter_helper f a i =
  if i < 0 then ()
  else (with_eborrow &!a i f; iter_helper f &!a (i-1))
\end{lstlisting}
which also explains the occurrence of @aff_r1@ in the type of
@iter_mut@.

We conclude that borrows of datastructures create the need for
differently typed access functions that are tailored for different use
cases. The argumentation whether such an access function is safe is
sometimes subtle and gives rise to nonobvious types.

\lstDeleteShortInline@


\section{The \lang language}

\lang is a minimal ML-like language with
let-polymorphism and abstract types. Its type system manages
linearity and borrowing using kinds and (kind-) constrained types.
For ease of presentation, we consider a simplified internal language.
\begin{itemize}[topsep=0pt]
\item Pattern matching is demonstrated on pairs rather than algebraic
  datatypes.
\item There are separate operators for borrowing and reborrowing (taking
  the borrow of a borrow); the surface language unifies these
  operators using ad-hoc polymorphism / typeclasses.
\item Regions are explicit in the code and must be annotated using the
  algorithms presented in \cref{regionannot}.
\item Regions are identified using nesting levels instead of region
  variables.
\end{itemize}

The rest of this section formalizes  \lang: the syntax (\cref{syntax}),
the statics in terms of a region annotation pass (\cref{regionannot}) and
syntax-directed typing (\cref{sdtyping}),
and a dynamics that is linearity- and resource-aware (\cref{sec:sem}).

\subsection{Syntax}
\label{syntax}

\begin{figure}[!tb]
  \input{grammar}
  \vspace{-7pt}
  \caption{Syntax}
  \label{grammar}
  \vspace{-10pt}
\end{figure}

\Cref{grammar} defines the syntax of \lang. Expressions are as usual
in an ML-like language.  The novel aspects are match
specifications, regions, and borrows.

A borrow $\borrow{x}$ is always taken from a variable $x$. The
\emph{borrow annotation} $\BORROW$ indicates whether the borrow is exclusive/affine
($\MBORROW$) or shared/unrestricted ($\IBORROW$). If $x$ is already borrowed,
then we need to use the reborrow expression $\reborrow{x}$.
A region $\region{\Sone x\BORROW}{e}$ is annotated with its nesting $n$, the variable $x$ that may be borrowed in $e$, and the kind of borrow $\BORROW$.
A match is indexed with a \emph{match specification} $\etransfm$ that indicates
whether the match operates on borrows ($\etransfm=\&^b$) or not ($\etransfm=id$).
We consider four primitive operations to manipulate resources:
$\create$, $\observe$, $\update$ and $\destroy$.
They serve a prototypes to demonstrate the typing and handling of
resources. For a concrete type of resource, there are further
arguments and perhaps different versions of the operations. But the
typing and behavior of the operations is analogous to the prototype
operations. Moreover, $\observe$ and $\update$ serve as eliminators
for borrow types.

Many types are indexed with kinds.
A kind $k$ is either a kind variable $\kvar$ or a constant
(linear $\klin$, affine $\kaff$, or unrestricted $\kun$) indexed
by a nesting level $n \in \Nat \cup \{\infty\}$.

A type $\tau$ is either a type variable, a pair type, a function type indexed by a
kind,
a type application $\tapp{\tcon}{\Multi{\tau}}$ of an abstract type constructor
$\T T$, or a borrowed type  $\borrowty{k}{\tau}$.

Type schemes $\schm$ add quantification over kind
variables $\kappa$, kinded type variables $(\alpha:k)$, and constraints
$C$ to a type, where a constraint is a list of inequalities over
kinds.

Abstract type constructors possess kind schemes $\kschm$ which relate
the kinds of the type constructors' arguments to the kind of the
constructed type.

Generally, we write lists with an overbar and (sometimes) an index as
in $\Multi[i]\tau$.

\subsection{Automatic region annotation}
\label{regionannot}

In the surface language (\cref{motivation}) region annotations are optional.
In the internal language, regions must be syntactically explicit and
annotated with a nesting index and a scoped variable.
This section defines a transformation $\RannotT{p}{p'}$ which
automatically inserts region annotations in  programs.
The input $p$ is a program with optional region annotations of the form $\region[]{}{e}$.
The output $p'$ is a program with explicit annotations
of the form $\region{\Sone x\BORROW}{e}$ such that no borrow occurs
outside a region.
We give an informal presentation of our code transformation and defer
the complete definition to \cref{appendix:regionannot}. This code transformation
aims to find, for each borrow $\borrow{x}$, the biggest region satisfying
the following rules:
\begin{enumerate}
\item The region should contain at least $\borrow{x}$.
\item The region must be contained in the scope of $x$.
\item An exclusive borrow $\borrow[\kaff]{x}$ should never share a region with any other borrow of $x$.
\item The variable $x$ cannot occur in the region of $\borrow{x}$.
\end{enumerate}
The transformation starts from each borrow $\borrow x$ and grows its associated
region until enlarging it would include the binding for $x$ or lead to a
conflicting use of $x$ or a conflicting borrow for $x$.
As an example, consider the following program:
\[
  \setlength{\jot}{-1pt}
  \RannotT{
\begin{aligned}
  \lam{a}{}\ &
  \letin{x}{\app{f}{\borrow[\kaff]{a}}}{}\\
  &g\ (\borrow[\kaff]{x});\\
  &f\ (\borrow[\kun]{x})\ (\borrow[\kun]{x})\\
\end{aligned}
}{
\begin{aligned}
  \lam{a}{}\region[1]{\Sone a\MBORROW}{\ &%
  \letin{x}{\app{f}{\borrow[\kaff]{a}}}{}\\
  &\region[2]{\Sone x\MBORROW}{g\ (\borrow[\kaff]{x})};\\
  &\region[2]{\Sone x\IBORROW}{f\ (\borrow[\kun]{x})\ (\borrow[\kun]{x})}\\
  &\!\!\!}
\end{aligned}
}
\]
As variable $a$ only has one borrow, its region covers its whole lexical scope.
Variable $x$ has multiple conflicting borrows and requires more consideration.
We place a first
region around the exclusive borrow and its function application,
and a second region around both shared borrows. This placement of region
is optimal: making any region bigger would cause a borrowing
error.
In particular, it is essential to place the second borrow around \emph{both}
occurrence of $(\borrow[\kun]{x}$: we want the function to receive both
borrows, but its result must not contain any borrow (otherwise it would escape).
Region indices are assigned after placement, so it is trivial to
ensure well-nested regions where the inner regions have
higher indices than the outer ones.

Programmers may also annotate regions explicitly.
The transformation considers an annotation as an
upper bound on the contained regions.
In the following program, a manual annotation has been inserted to ensure
no borrow enters the reference $r$:
\[
  \setlength{\jot}{-1pt}
  \RannotT{
\begin{aligned}
  \letin{&r}{\operatorname{ref}\ 0}{}\\
  \lam{a}{}\ &
  set\ r\ \region[]{}{g\ (\borrow[\kun]{a})};\\
  &f\ (\borrow[\kun]{a})
\end{aligned}
}{
\begin{aligned}
  \letin{&r}{\operatorname{ref}\ 0}{}\\
  \lam{a}{}\ &
  set\ r\ \region[1]{\Sone a\IBORROW}{g\ (\borrow[\kun]{a})};\\
  &\region[1]{\Sone a\IBORROW}{f\ (\borrow[\kun]{a})}
\end{aligned}
}
\]
The rules allow merging the two regions around the borrows
$\borrow[\kun]{a}$. However
the explicit annotation indicates that the region should stay around the closure
passed as argument of $set$. This feature is useful to control captures
by imperative APIs.

The code transformation is purely syntactic and must be used before
typing. It only produces well-nested annotations: if $\region[n]{b}{\dots}$
is nested inside $\region[n']{b'}{\dots}$, then $n > n'$. Furthermore, there
is at most one region per borrow, and exactly one region per exclusive borrow.
In the rest of this article, we assume
that all terms have been annotated by this code transformation and respect
these properties.

\subsection{Typing}
\label{sdtyping}

To avoid distraction, this section focuses on the essential and novel
parts of the type system. A complete
description is available in \cref{appendix:sdtyping}.
%
Here we only discuss the following judgments:
\begin{flushleft}
  \begin{tabular}{l@{ --- }l}
    $\inferS{C}{\E}{e}{\tau}$&
                               Expression $e$ has type $\tau$ in
                               environment $\E$ under constraints
                        $C$. \\
    $\inferSK{C}{\E}{\tau}{k}$&
                                Type $\tau$ has kind $k$ in
                                environment $\E$ under constraints
                                $C$. \\
    $\entail{D}{C}$ & Constraint $D$ entails constraint $C$.\\
    $D \equivC C$ & Constraints $C$ and $D$ are equivalent.
  \end{tabular}
\end{flushleft}

\paragraph{Kinds and constraints}

\affe uses kinds and constrained types to indicate
linear and affine types.
A kind $k$ is either a kind variable, $\kvar$, or a constant $\mul_n$.
The quality $\mul$ describes the use pattern of the type:
unrestricted $\kun$, affine $\kaff$, or linear $\klin$. The level $n \in \Nat \cup \{\infty\}$
describes the nested regions in which the value can be used.
Level $0$ refers to the top-level scope outside any region; we often elide it
and write $\kaff$ for $\kaff_0$. Level
$\infty$ refers to an empty region that is infinitely nested.
For instance, the constraint $\Cleq{\kvar}{\kun_\infty}$ indicates that
$\kvar$ must be unrestricted, but can be local to a region.
Kinds form a lattice described in \cref{sdtyp:lattice}.
Unrestricted values may be used
where affine ones are expected and affine ones are less restricted
than linear ones as reflected in \TirName{Lat-UAL}.
Values usable at level $n$ may be used at any more deeply
nested level $n'$ as defined in the \TirName{Lat-Level} axioms.
Constraints are conjunctions of inequality constraints over this kind
lattice, i.e., they specify upper or lower bounds for
kind variables or relate two kind variables.

\begin{figure}[tp]
  \begin{minipage}{0.65\linewidth}
  \begin{mathpar}
    \inferrule[Lat-UAL]{}{\kun \lk \kaff \lk \klin}
    \and
    \inferrule[Lat-Level]{\mul \lk \mul' \and n \lk n'}{\mul_n \lk_\Lat \mul'_{n'}}
  \end{mathpar}
\end{minipage}~
\begin{minipage}{0.2\linewidth}
  \centering
  \begin{tikzpicture}
    [->,auto,semithick, every node/.style={scale=0.7}]
    \node(U) {$\kun_0$} ;
    \node(A) [above left of=U] {$\kaff_0$} ;
    \node(L) [above left of=A] {$\klin_0$} ;
    \node(Un) [above right of=U] {$\kun_n$} ;
    \node(An) [above left of=Un] {$\kaff_n$} ;
    \node(Ln) [above left of=An] {$\klin_n$} ;
    \node(Uinf) [above right of=Un] {$\kun_\infty$} ;
    \node(Ainf) [above left of=Uinf] {$\kaff_\infty$} ;
    \node(Linf) [above left of=Ainf] {$\klin_\infty$} ;
    \path
    (U) edge (A)
    (A) edge (L)
    (Un) edge (An)
    (An) edge (Ln)
    (Uinf) edge (Ainf)
    (Ainf) edge (Linf)
    ;
    \path[dotted]
    (U) edge (Un)
    (A) edge (An)
    (L) edge (Ln)
    (Un) edge (Uinf)
    (An) edge (Ainf)
    (Ln) edge (Linf)
    ;
  \end{tikzpicture}
\end{minipage}

  \vspace{-10pt}
  \caption{Lattice ordering -- $k \lk_\Lat k'$}
  \label{sdtyp:lattice}
  \vspace{-10pt}
\end{figure}

\paragraph{Environments and bindings}
\label{sdtyping:envs}

\begin{figure}[tp]
  \begin{minipage}{0.36\linewidth}
  \begin{align*}
    \E ::=&\ \Eempty \mid \E;B\tag{Environments}\\
    B ::=&\ \bnone \tag{Empty} \\
    |&\ \bvar{\tvar}{\kschm} \tag{Types}\\
    |&\ \bvar{x}{\schm} \tag{Variables} \\
    |&\ \svar{x}{\schm}^n \tag{Suspended} \\
    |&\ \bbvar x k \schm \tag{Borrows}
  \end{align*}
  \vspace{-15pt}
  \caption{Type environments}
  \label{grammar:env}
  \end{minipage}\hfill
  \begin{minipage}{0.6\linewidth}
    \begin{tabular}
  {@{}>{$}r<{$}@{ $\vdash_e$ }
  >{$}c<{$}@{ $=$ }
  >{$}c<{$}@{ $\ltimes$ }
  >{$}c<{$}r}

  \Cleq{\schm}{\kun_\infty}
  &\bvar{x}{\schm}&\bvar{x}{\schm}&\bvar{x}{\schm}
  &(Both)\\

  {\Cempty}&
             {\bbvar[\IBORROW]{x} k \schm}&
                                            \bbvar[\IBORROW]{x}
                                            k{\schm}&{\bbvar[\IBORROW]{x}
                                                      k {\schm}}
  &(Borrow)
  
  \\[2mm]

  {\Cempty}&{B_x}&{B_x}&{\bnone}
  &(Left)\\
  {\Cempty}&{B_x}&{\bnone}&{B_x}
  &(Right)\\[2mm]

  {\Cempty}&{\bvar x \schm}&{\svar x \schm^n}&{\bvar x \schm}
  &(Susp)\\

  \Cleq{\BORROW'}{\BORROW}&
             {\bbvar x k \schm}&{\svar[\BORROW'] x \schm^n}&{\bbvar x k \schm}
  &(SuspB)\\

  {\Cempty}&
             {\svar{x} \schm^{n'}}&{\svar[\IBORROW] x \schm^n}&{\svar{x} \schm^{n'}}
  &(SuspS)\\
\end{tabular}

    \vspace{-5pt}
    \caption{Splitting rules for bindings -- $\lsplit{C}{B}{B_l}{B_r}$}
    \label{sdtyp:split}
  \end{minipage}
\end{figure}

\begin{figure*}[tp]
    \begin{mathpar}
      \ruleInstance
      \and
      \ruleSDVar
      \and
      \ruleSDLam
      \and
      \ruleSDApp
      \and
      \ruleSDRegion
      \and
      \ruleSDBorrow
      \and
      \ruleSDReBorrow
      \and
      \inferrule[BorrowBinding]{
        \entail{C}{(\BORROW_n\lk k) \wedge (k \lk \BORROW_\infty)}\\\\
        \BORROW\in\left\{\kun,\kaff\right\}
      }{
        \lregion{C}{x}
        {\svar[\BORROW]{x}{\schm}^n}
        {\bbvar x k \schm}
      }
    \end{mathpar}
    \vspace{-10pt}
    \caption{Selected typing rules ($\inferS{C}{\E}{e}{\tau}$)
      and borrowing rules ($\lregion{C}{x}{\E}{\E'}$)}
    \label{selectrules:borrow}
    \label{selectrules:binders}
    \label{sdtyp:app}
    \label{selectrules:region}
    \label{env:rule:borrow}
    \vspace{-5pt}
\end{figure*}

\lang controls the use of variables by supporting new modes of
binding in type environments $\E$, as defined in \cref{grammar:env}.
Environments contain standard bindings of type variables to kind schemes,
$\bvar{\tvar}{\kschm}$, value bindings $\bvar{x}{\schm}$, but also
suspended and borrow bindings.
A suspended binding, $\svar{x}{\schm}^n$, indicates that $x$ is
earmarked for a borrowing use in a nested region
marked with $x$ 
but
cannot be used directly.
A borrow binding,
$\bbvar x k \schm$,
replaces
such a suspended binding on entry to the $x$-region. It indicates
that the borrow $\borrow{x}$ can be used directly. Kind $k$
restricts the lifetime of the borrow to the region (see rules
\TirName{Region} and \TirName{BorrowBinding} in
\cref{selectrules:borrow} and the upcoming discussion of these rules).


Constraints on an environment control substructural properties by
restricting the types of variables.  The constraint $\Cleq{\E}{k}$
stands for the conjunction of $\Cleq\schm k$, for all $\bvar{x}{\schm}$
in $\E$, which in turn means that $k' \lk k$ where $k'$ is the kind of
the variable's type scheme $\schm$.
Borrow bindings follow the same rules, but suspended bindings
are forbidden in an environment $\E$ constrained like that.
This intuitive explanation is sufficient to understand
the \TirName{Var} and \TirName{Abs} rules shown in
\cref{selectrules:binders}.

Rule {\sc Var} looks up the type scheme of the variable $x$ in
the environment $\E$
and instantiates it with $\Instance(\E,\schm)$.
Instantiation follows the \hmx formulation and
takes as input a scheme $\schm$ and an environment $\E$
and returns a constraint $C$ and a type $\tau$.
The rule also
checks that the other bindings in $\E$ can be safely discarded by
imposing the constraint $\Cleq{\E\Sdel{x}}{\kaff_\infty}$.
It enforces that all remaining bindings (except $x$) are affine or
unrestricted and can therefore be discarded.

Rule {\sc Abs} ensures the kind annotation on the arrow type
($\tau_2 \tarr{k} \tau_1$) reflects the restrictions on captured variables
via the constraint $\Cleq\E k$.
If, for instance, any binding in $\E$ is affine, it gives rise to the
constraint $\Cleq{\kaff_n}{k}$ and the arrow kind is at least
affine at nesting level $n$.
Capturing a borrow is perfectly fine: the kind of the borrow is also a
lower bound of the arrow kind $k$ which restricts the closure
to the region of the borrow.
Capturing a suspended binding is forbidden.

\paragraph{Copying and Splitting}
\label{sdtyping:split}

The {\sc App} typing rule in \cref{sdtyp:app} demonstrates how \lang
deals with duplication and dropping of values.
The splitting $\lsplit{C}{\E}{\E_1}{\E_2}$ in the rule decomposes the
type environment $\E$ in two parts, $\E_1$ and $\E_2$, which are used
to typecheck the components of the application.

\cref{sdtyp:split} shows the action of splitting rules on single
bindings. If $x$'s type is unrestricted,
rule {\sc Both} indicates that we can duplicate it.
Similarly, unrestricted borrows can be duplicated with rule
{\sc Borrow}.
{\sc Left} and {\sc Right} rules are always applicable and move a binding
either to the left or right environment.
The rules {\sc Susp}, {\sc SuspB} and {\sc SuspS}
split off suspended bindings to
the left while conserving access to the binding on the right.
A suspended binding can later be turned
into a borrow inside a region. Splitting of suspended bindings is
asymmetric. It must follow the order of execution from left to right,
which means that a resource can be used first as a borrow on the left
and then later as a full resource on the right. The {\sc Susp} rule
works with a full resource, rule {\sc SuspB}
with a borrow, and rule {\sc SuspS} with a suspended binding.

Splitting applies whenever an
expression has multiple subexpressions:  function applications,
let bindings and pairs. In the
expression
$\letin{a}{\text{create}\ 8\ x}
{f\ \region[{}]{\Sone a\IBORROW}{\text{length}\ \borrow[\IBORROW]{a}}\ a}$,
the rule {\sc Susp} splits off a borrow from  the resource
$a$ to use it in the left argument.
As usual, a borrow cannot be active in the same scope as its resource.
The \emph{region} around its use ensures that the borrow in the left argument
does not
escape, which brings us to the next topic.

\paragraph{Regions}
\label{sdtyping:regions}




Borrowing is crucial to support an imperative programming style.
To guarantee the validity of a borrow, its lifetime must be properly contained in its
ancestor's lifetime. \lang ensures proper nesting of lifetimes by using
regions. The expression $\region{\Sone x\BORROW}{e}$ indicates a
region at nesting level $n$ in which a $\BORROW$-borrow can be taken of $x$.

The typing for a region (rule {\sc Region} in \cref{selectrules:region})
replaces suspended bindings by borrow bindings
(rule {\sc BorrowBinding}), typechecks the body
of the region, and ensures that the borrow does not leak outside.
This last check is done with indices that correspond to the nesting
level of the region. The kind $k$ of the borrow is indexed with the level $n$
corresponding to its region, thanks to the constraint $(\BORROW_n\lk
k)$. The constraint $\Cleq{\tau}{\klin_{n-1}}$ ensures that
the return type of the region must live at some enclosing, lower level.

As an example, consider the expression $\region{\Sone x\IBORROW}{f(\borrow[\IBORROW]{c})}$
where $c$ is a linear channel in an environment $\E$.
The first step is to check that $\svar[\IBORROW]{c}{\text{channel}}$
is in $\E$.
When entering the region, rule \TirName{Region} imposes
$\lregion{C}{x}{\E}{\E'}$, which defines $\E'$
corresponding to $\E$ where the suspended binding is replaced by the
borrow binding  $\bbvar[\IBORROW]{c}{k}{\text{channel}}$.
To constrain the borrow to this region we impose the constraint
$\entail{C}{(\kun_n\lk k) \wedge (k \lk\kun_\infty)}$, which affirms
that the borrow is unrestricted, but can only be used in nesting
levels $n$ and higher.
%
%
Rule {\sc Region} also imposes the constraint
$\Cleq{\tau}{\klin_{n-1}}$, which prevents the borrow, having kind $k$ of level
$\geq n$, from escaping the region's body of type $\tau$.




\paragraph{Pattern matching}
\label{sdtyping:matching}

Elimination of pairs is done using a matching construct
$(\matchin{x,x'}{e_1}{e_2})$.
This construct is mostly standard, except it can operate both
on a normal pair and a borrow of a pair.
The intuition is as follows:
A syntactic marker $\etransfm$ indicates if it applies to
a pair ($\etransfm = \operatorname{id}$) or a borrow ($\etransfm = \&^\BORROW$).
If $\etransfm = \operatorname{id}$, the typing simplifies to
the usual elimination of a pair.
Otherwise, $e_1$ is expected to be a borrow of type
$\borrowty{k}{\tyPair{\tau_1}{\tau'_1}}$
and the variables $x$ and $x'$ have type
$\borrowty{k}{\tau_1}$ and $\borrowty{k}{\tau'_1}$,
respectively. Thus, the borrow of a pair is considered as a pair of
borrows of its components.


\paragraph{Resource management}

To demonstrate how \lang{} deals with resources, we introduce an abstract
type $\tapp{\tres}{\tau}$ whose content of type $\tau$ must be
unrestricted ($\kun_0$) and which is equipped with the four
operations introduced in \cref{syntax}:
\begin{itemize}[topsep=0pt]
\item
$\create$:
$\forall{\kvar_\tvar}\bvar{\tvar}{\kvar_\tvar}.\
\qual{\Cleq {\kvar_\tvar} {\kun_0}}
{\tvar \tarr{} \tapp\tres\tvar}$
\item
$\observe$:
$\forall{\kvar\kvar_\tvar}\bvar{\tvar}{\kvar_\tvar}.\
\qual{\Cleq {\kvar_\tvar} {\kun_0}}
{\borrowty[\IBORROW]{\kvar}{\tapp\tres\tvar} \tarr{} \tvar}$
\item
$\update$:
$\forall{\kvar\kvar_\tvar}\bvar{\tvar}{\kvar_\tvar}.\
\qual{\Cleq {\kvar_\tvar} {\kun_0}}
\borrowty[\MBORROW]{\kvar}{\tapp\tres\tvar} \tarr{} \tvar \tarr{\kaff} \tunit$
\item
$\destroy$:
$\forall{\kvar_\tvar}\bvar{\tvar}{\kvar_\tvar}.\
\qual{\Cleq {\kvar_\tvar} {\kun_0}}
\tapp\tres\tvar \tarr{} \tunit$
\end{itemize}

\subsection{Semantics}
\label{sec:sem}
\begin{figure}[tbp]
  \begin{subfigure}[t]{0.5\linewidth}
  \begin{align*}
    \htag{Elaborated expressions}
    e ::=~& x \mid \ivar x {\Multi k} {\Multi\tau} \mid \lam[k]{x}{e} \mid \iapp\Sp{e}{e'} \\
    \mid~& \ipair\Sp{k}{e}{e'} \mid \imatchin\Sp\etransfm{x,y}{e}{e'} \tag{Pairs}\\
    \mid~& \ilet\Sp{x}{e}{e'} \tag{Mono let}\\
    \mid~& \iletfun\Sp{x}\schm{k}{y}{e}{e'} \tag{Poly let}\\
    \mid~& \region{\Sone x\BORROW}{e} \tag{Region}\\
    \mid~& \borrow{x} \mid \reborrow{x} \tag{Borrows}\\
    \mid~& \create \mid \observe \\\mid~& \update \mid \destroy \tag{Resources}\\
    \htag{Splittings}
    \Sp :~& (\lsplit{C}{\E}{\E_l}{\E_r})
\end{align*}
\end{subfigure}\hfill
\begin{subfigure}[t]{0.49\linewidth}
  \begin{align*}
    \htag{Storables}
    w ::=~& \StPClosure \VEnv {\Multi\kvar} C k x e \tag{Poly Closures}\\
    \mid~& \StClosure \VEnv k x e \tag{Closures} \\
    \mid~& \StPair k r {r'} \tag{Pairs} \\
    \mid~& \StRes r \tag{Resources} \\
    \mid~& \StFreed \tag{Freed Resource}
    \htag{Environment}
    \Addr ::=~& \Multi\IBORROW\hspace{0.5mm}\Multi\MBORROW\hspace{0.5mm}\Loc \tag{Locations}
    \\
    \Perm ::=~& \{\} \mid \Perm + \Addr \tag{Permissions}
    \\
    r ::=~& \Addr \mid c \tag{Results}\\
    \VEnv ::=~& \Eempty \mid \VEnv( x \mapsto r) \tag{Enviroments} \\
    \Store ::=~& \Eempty \mid \Store( \Loc \mapsto w) \tag{Stores}
  \end{align*}
\end{subfigure}
  \vspace{-15pt}
\caption{Syntax of internal language}
\label{fig:syntax-internal-language}
  \vspace{-10pt}
\end{figure}


\lstMakeShortInline[keepspaces,basicstyle=\normalsize\normalfont]@

It is straightforward to give a standard semantics for \lang, but such
a semantics would not be very informative. In this section, we give a
big-step semantics that performs explicit bookkeeping of the number of
times a value is used and of the mode in which a reference to a
resource is used (e.g., borrowed or not). This bookkeeping is based on a set of permissions
that regulate the currently allowed mode of access to resources and
closures. It enables us to state and prove a highly informative type
soundness result (see \cref{sec:metatheory}) with expressive invariants that ensure proper
resource usage.

The dynamics of \lang is given in big-step
functional
style~\cite{siek13:_type_safet_three_easy_lemmas,DBLP:conf/esop/OwensMKT16,
  DBLP:conf/popl/AminR17}.  A function
@eval@
manipulates the semantic objects defined in
\cref{fig:syntax-internal-language}.
%
The semantics is defined in terms of \emph{elaborated expressions} $e$
with kind, constraint, and splitting annotations inserted by the typechecker.
A splitting $\Sp$ is evidence of the splitting relation for type environments
used in the typing rules.

Let-polymorphism in the surface language gives rise to elaborated
$\eletfun$ expressions annotated with a type scheme $\schm$ and a kind $k$ indicating their
usage restriction (linear, affine, etc) relative to the variables and
constraints of $\schm$. Their use
gives rise to explicit instantiation of the kind and type variables.
Pairs come with a kind tag $k$ indicating the usage restriction.

Addresses $\Addr$ are composed of a raw location $\Loc$, which is just
a pointer into a store, and a stack of modifiers that indicates the
borrows and reborrows that have been taken from $\Loc$.
Once we have taken an unrestricted borrow (from a raw location or a borrowed
one), then we can take further unrestricted borrows from it, but no more affine
ones.

A permission $\Perm$ is a set of addresses that may be accessed during
evaluation. A well-formed permission contains at most one address for each raw
location.

Non-trivial results are boxed in the  semantics. So, a result
$r$ is either an address or a primitive constant (e.g., a number).

A value environment $\VEnv$  maps variables to results.

A storable $w$ describes the content of a location in the store. There are five
kinds of storables. A \emph{poly closure} represents a polymorphic
function. It consists of an environment and the components of an
elaborated abstraction. A \emph{closure} represents a monomorphic
function in the usual way.
A \emph{resource} contains
a result and the \emph{hole} $\StFreed$ fills a released location.

A store $\Store$ is a partial map from raw locations to
storables. The function
@salloc: store -> storable -> (loc *store)@ is such that
@salloc delta@ @w@ allocates an unused location in @delta@ and fills it with
@w@. It returns the location and the extended store.

The evaluation function is indexed by a step count @i@ so that each
invocation is guaranteed to terminate either with an error, a timeout,
or a result. Its return type is a monad
@'a sem@ which combines error reporting and timeout:
\begin{lstlisting}
type 'a sem = Error of string | TimeOut | Ok of 'a
val eval: store->perm->venv->int->exp->(store *perm *result) sem
\end{lstlisting}
Function @eval@ evaluates the given expression in the context of an initial store, a
permission to use addresses in the store, a value environment, and a
step count. If successful, it returns the final store, the remaining
permissions, and the actual result.

\begin{figure*}
  \begin{minipage}[t]{0.5\linewidth}
    \lstinputlisting[style=rule,linerange=eval\ header-(**)]
    {syntax/semanticsannotated.ml}
  \lstsemrule{sapp}
  \medskip
  \lstsemrule{sborrow}
  \end{minipage}%
  \begin{minipage}[t]{0.5\linewidth}
  \lstsemrule{varinst}
  \medskip 
  \lstsemrule{sregion}
  \end{minipage}
  \vspace{-1em}
  \caption{Big-step interpretation}
  \label{fig:big-step-interpretation}
  \vspace{-1em}
\end{figure*}


We give some excerpts of the definition of @eval@ in
\cref{fig:big-step-interpretation} and leave the full
definition for \cref{sec:semant-defin}.
The definition uses OCaml syntax with extensive pretty
printing. The pervasive @let*@ operator acts as monadic bind
for the @sem@ monad. The operator
@let*? : bool -> (unit -> 'a sem) -> 'a sem@
 converts a boolean
argument into success or failure in the monad.
\begin{lstlisting}
let (let*?) : bool -> (unit -> 'b sem) -> 'b sem =
  fun b f -> if b then f () else Error ("test failed")
\end{lstlisting}
The function header of @eval@ checks
whether time is up and otherwise proceeds processing the expression.

The @Varinst@ case corresponds to instantiation. It
obtains the variable's value, checks that it is a location, checks the
permission (the @let*?@ clause), obtains the storable $w$ at that
location, and checks that it is a poly closure (STPOLY). Next, it updates the
permission: if the poly closure is unrestricted, then the location
remains in the permission set, otherwise it is removed. Finally, we
allocate a new monomorphic closure, add it to the permissions, and
return the pointer as the result along with the updated store and
permissions.

The @App@ case implements (elaborated) function application.
We first apply the splitting @sp@ to @gamma@ and
evaluate subterm @e_1@ with its part of the environment and the
decremented timer @i'@. The result must be a location that we are
permitted to use. Moreover, there must be a monomorphic STCLOS stored
at that location. The permission to further use this closure  remains
in force only if the closure is unrestricted. Finally, we evaluate the
argument, then the function body, and return its result.


The @Region@ case implements a region. It obtains the address for @x@,
the suspended binding, and extends it with the intended borrow
$\BORROW$. This extension may fail if we try to take an affine borrow
of an unrestricted borrow. Next, we rebind @x@ to the borrow's
address, extend the permission accordingly, and execute the region's
body.  Finally, we withdraw the permission and return the result.

The @Borrow@ case obtains the address for @x@, checks that it is a
borrow of the correct mode $\BORROW$ and whether it is permitted to
use it. It just returns the address.

\lstDeleteShortInline@


\section{Inference}
\label{inference}

An important contribution of \affe is its principal type inference.
Our type inference algorithm is
based on the \hmx framework~\citep{DBLP:journals/tapos/OderskySW99},
a Hindley-Milner type system for a language
with constrained types where constraints are expressed in an arbitrary
theory $X$.
If $X$ has certain properties, then
\hmx guarantees principal type inference.
We apply \hmx to a concrete constraint language which we name $\CL$.
We adapt and extend \hmx's rules to support kind inference,
track linearity, and handle borrows and regions. We
formulate constraint solving and simplification algorithms for
$\CL$. Finally, we prove that the inference algorithm computes
principal types. 

\subsection{Preliminaries}

In the context of inference, it is critical to know which elements
are input and output of inference judgments.
In the following, when presenting a new judgment,
we write input parameters in \textbf{\textcolor{ForestGreen}{bold
    green}}. The remaining parameters are output parameters.

\paragraph{Usage Environments}


To determine if a variable is used in an affine manner, we track
its uses and the associated kinds. In the expression
$f\ x\ x$, $x$ is used twice. If $x$ is of type $\tau$, which is of kind $k$,
we add the constraint $\Cleq{k}{\kun}$.
To infer such constraints, our inference judgment not only
takes an environment as parameter but also returns a \emph{usage
  environment}, denoted $\Sv$, 
which summarizes usages of variables and borrows.
Usage environments are defined like normal environments.
In \cref{sdtyping}, we use relations to split environments and to
transform suspended bindings into borrows inside a region.
These relations take a constraint parameter which validates
the transformations.
In the context of inference, we define new judgments which \emph{infer}
the constraints.
\begin{itemize}[leftmargin=*,topsep=0pt]
\item $\bsplit{C}{\Sv}{\inP{\Sv_1}}{\inP{\Sv_2}}$.
  Given two usage environments $\inP{\Sv_1}$ and $\inP{\Sv_2}$,
  we return $\Sv$, the merged environment, and $C$, a set
  of constraints that must be respected.
\item $\bregion[\inP{n}]{C}{\inP{x}}{\Sv}{\inP{\Sv'}}$.
  Given a usage environment $\inP{\Sv'}$, a nesting level $\inP{n}$,
  and a variable name $\inP{x}$, we return
  $\Sv$ where the borrow binding of $x$ in $\Sv'$, if it exists,
  is replaced by
  a suspended binding. We also return the constraints $C$.
\end{itemize}
Both relations are total and non-ambiguous in term of their input
(i.e., functions), and use
the rules presented in \cref{sdtyping:split,sdtyping:regions}.
The relations used for syntax-directed typing can trivially be defined
in terms of these new relations by using constraint entailment.
All relations are fully described in \cref{typ:extra:envs}.

\paragraph{Constraint Normalization}

The \hmx framework assumes the existence of a function
``$\operatorname{normalize}$'' which takes a constraint $\inP{C}$ and a
substitution $\inP\psi$ and returns a 
simplified constraint $C'$
and an updated substitution $\unif'$.
Normalization returns a normal form such that $\unif'$ is a most general unifier.
For now, we simply
assume the existence of such a function for our constraint system
and defer details to \cref{infer:solving}.

\subsection{Type Inference}

We write $\inferW{\Sv}{(C,\unif)}{\inP{\E}}{\inP{e}}{\tau}$ when
$\inP{e}$\ has type $\tau$ in $\inP{\E}$ under the constraints $C$ and unifier $\unif$
with a usage environment $\Sv$. $\inP\E$ and $\inP{e}$ are the input parameters of our
inference algorithm.
Unlike in the syntax-directed version, $\E$ contains only regular and type bindings.
Suspended and borrow bindings can only be present in $\Sv$.
We revisit some of the syntax-directed rules presented in \cref{sdtyping}
to highlight the novelties
of our inference algorithm and the differences with the syntax-directed
system in \cref{rule:infer:envs}.
The complete type inference rules are shown in \cref{appendix:infer}.

\paragraph{Environments and Bindings}
\label{infer:envs}
\begin{figure*}[tb]
  \begin{mathpar}
    \ruleIVar
    \and
    \ruleIAbs
    \and
    \ruleIRegion
    \and
    \ruleIApp
    \and
    \inferrule{}
    { \Weaken_{\bvar{x}{\sigma}}(\Sv) =
      \text{if } (x\in\Sv) \text{ then} \Ctrue \text{else }
      \Cleq{\sigma}{\kaff_\infty}
    }
  \end{mathpar}
  \vspace{-15pt}
  \caption{Selected inference rules -- $\inferW{\Sv}{(C,\unif)}{\inP{\E}}{\inP{e}}{\tau}$}
  \label{rule:infer:envs}
  \label{rule:infer:envrules}
  \label{rule:infer:let}
\end{figure*}
In the syntax-directed system, the {\sc Var} rule ensure
that linear variables are not discarded at the \emph{leaves}.
In the inference algorithm, we operate in the opposite
direction: we collect data from the leaves and enforce linearity
at \emph{binders}. This policy is reflected in the {\sc Var$_I$} and
{\sc Abs$_I$} rules.
Typing for variables is very similar to traditional Hindley-Milner
type inference. To keep track of linearity, we record
that $x$ was used with the scheme $\schm$ by returning
a usage environment $\Sv = \{ \bvar{x}{\schm} \}$.
This usage environment is in turn used at each binder to enforce proper
usage of linear variable via the $\Weaken$ property as shown for
lambda expressions in the {\sc Abs$_I$} rule.
First, we typecheck the body of the lambda and obtain a usage
environment $\Sv_x$. As in the syntax-directed type system,
we introduce the constraint
$\Cleq{\Sv\Sdel{x}}{\kvar}$ which properly accounts for captures in
the body of the lambda expression. We then introduce the constraint
$\Weaken_{\bvar{x}{\sigma}}(\Sv)$, which fails if we try
to abandon a linear variable. 
The $\Weaken$ constraint is introduced at each binding construct.
Finally, we normalize constraints to ensure
that the inference algorithm always return the simplest possible
constraints and unifiers.

\paragraph{Splitting and Regions}
\label{infer:split}
\label{infer:regions}

Inference versions
of the {\sc App} and {\sc Region} rules
are similar to the original ones, but now \emph{return} the usage
environment $\Sv$.
As such, we use the ``inference'' version of the relations on
the environment,
$\bsplit{C}{\Sv}{\inP{\Sv_1}}{\inP{\Sv_2}}$
and $\bregion{C}{\inP{x}}{\Sv}{\inP{\Sv'}}$,
which returns the necessary constraints.
We then collect all constraints and normalize them.

\subsection{Constraints}
\label{infer:solving}

\newcommand\A{\mathcal A}
\newcommand\SC{\mathcal S}

%

To properly define our type system, we need to define $\CL$, a constraint
system equipped with an entailment relation noted $\operatorname{\vdash_e}$
and a normalizing function.
For concision, we first demonstrate the constraint solving
algorithm with an example. We then state the various properties
that make it suitable for use in the \hmx framework.
The complete constraint system is
defined in \cref{appendix:constraints}.

\subsubsection{Constraints normalization by example}

\label{solving:example}

\begin{figure}[tbp]
  \centering
  \begin{tikzpicture}[node distance=6mm,xscale=1.2,yscale=0.7,.every edge/.style=[link,->,>=latex,thick]]
    \node (U) {$\kun$} ;
    \node[above left=of U] (x) {$\kvar_x$} ;
    \node[above=of x] (r) {$\kvar_r$};
    \node[left=of x] (g) {$\kvar_\gamma$} ;
    \node[right=of x] (b) {$\kvar_\beta$} ;
    \node[right=of b] (3) {$\kvar_3$} ;
    \node[above=of 3] (f) {$\kvar_f$} ;
    \node[above=of f] (1) {$\kvar_1$} ;

    \draw (x) to[bend right] (U) ;
    \draw (x) -> (r) ;
    \draw (b) -> (r) ;
    \draw (g) -> (x) ;
    \draw (3) -> (f) ;
    \draw (f) -> (1) ;

    \draw[blue] (3) to[bend right] (1);

    \node at (-0.5,-0.5) {Before};
    
    \begin{scope}[dashed,gray]
      \draw (U) to[bend right] (x) ;
      \draw (U) to (b) ;
      \draw (U) to[bend left] (g) ;
      \draw (U) to (r) ;
      \draw (U) to (1) ;
      \draw (U) to (3) ;
      \draw (U) to (f) ;
    \end{scope}
  
    \begin{scope}[on background layer]
      \node[fill=green!20,draw=green,
      inner sep=-1pt,ellipse,rotate fit=-20,fit=(U.south) (x) (g)] {};
      \node[fill=red!20,draw=red, inner sep=-2pt, circle,fit=(r)] {};
      \node[fill=red!20,draw=red, inner sep=-2pt, circle,fit=(f)] {};
    \end{scope}
  \end{tikzpicture}~
  \vrule~
  \begin{tikzpicture}[node distance=7mm,every edge/.style=[link,->,>=latex,thick]]
    \node (b) {$\kvar_\beta$} ;
    \node[right=of b] (3) {$\kvar_3$} ;
    \node[above=of 3] (f) {} ;
    \node[above=of f] (1) {$\kvar_1$} ;

    \draw (3) to[bend right] (1);

    \node at (0.5,-1.5) {After};
  \end{tikzpicture}
  \vspace{-5pt}
    \caption{Graph representing the example constraints}
    \label{example:graph}
    \label{example:graph:final}
\end{figure}

Consider the expression $\lam{f}{\lam{x}{(\app{f}{x},x)}}$.
The inference algorithm yields the following constraints:
\begin{align*}
  \E &= (\tvar_f : \kvar_f)
  (\tvar_x : \kvar_x)\dots\\
  C &= (\tvar_f \leq \gamma \tarr{\kvar_1} \beta )
  \Cand
  (\gamma \leq \tvar_x)
  \Cand
  (\beta \times  \tvar_x \leq \alpha_r)
  \Cand
  (\kvar_x \leq \kun)
\end{align*}

The first step of the algorithm uses Herbrand unification to obtain
a type skeleton. 
$$
(\gamma \tarr{\kvar_3} \beta) \tarr{\kvar_2} \gamma \tarr{\kvar_1} \beta \times  \gamma$$

In addition, we obtain the following kind constraints: 
\[\begin{aligned}
    &(\kvar_x \leq \kun)
    \Cand
    (\kvar_\gamma \leq \kvar_x)
    \Cand
    (\kvar_x \leq \kvar_r)
    \Cand
    (\kvar_\beta \leq \kvar_r)
    \Cand
    (\kvar_3 \leq \kvar_f)
    \Cand
    (\kvar_f \leq \kvar_1)
\end{aligned}\]

We translate these constraints into a relation whose graph
is shown in \cref{example:graph}.
The algorithm then proceeds as follow:
\begin{itemize}[noitemsep]
\item From the constraints above, we deduce the graph shown
  with plain arrows on the left of \cref{example:graph}.
\item We add all the dashed arrows by saturating
  lattice inequalities. For clarity, we only show $\kun$.
\item We identify the connected component circled in
  {\color{ForestGreen} green}.
  We deduce $\kvar_\gamma = \kvar_x = \kun$.
\item We take the transitive closure, which adds the
  arrow in {\color{blue} blue} from $\kvar_3$ to $\kvar_1$.
\item We remove the remaining nodes not present in the type skeleton (colored in {\color{red} red}): $\kvar_r$ and $\kvar_f$.
\item We clean up the graph (transitive reduction, remove unneeded constants, \dots),
  and obtain the graph shown on the right.
  We deduce $\kvar_3 \leq \kvar_1$.
\end{itemize}

The final constraint is thus
$$\kvar_\gamma = \kvar_x = \kun \Cand \kvar_3 \leq \kvar_1$$
If we were to generalize, we would obtain the type scheme:
$$\forall \kvar_\beta \kvar_1 \kvar_2 \kvar_3
(\gamma : \kun) (\beta : \kvar_\beta).\ %
\qual
{\Cleq{\kvar_3}{\kvar_1}}
{(\gamma \tarr{\kvar_3} \beta) \tarr{\kvar_2} \gamma \tarr{\kvar_1} \beta \times  \gamma}$$

We can further simplify this type by exploiting variance. As $\kvar_1$
and $\kvar_2$ are only used in covariant position, they can be
replaced by their lower bounds, $\kvar_3$ and $\kun$. 
By removing the unused quantifiers, we obtain a much simplified equivalent type:
$$
\forall \kvar
(\gamma : \kun).
{(\gamma \tarr{\kvar} \beta) \tarr{} \gamma \tarr{\kvar} \beta \times  \gamma}$$


\subsubsection{Properties of the constraint system}

To apply \hmx to $\CL$, normalize must compute principal normal forms
and $\CL$ must be regular.

\begin{property}[Principal normal form]
  Normalization computes principal normal forms for $\CL$, i.e.
  given a constraint $D\in\CL$, a substitution $\phi$ and
  $(C,\unif) = \normalize{D}{\phi}$,
  then $\phi\leq\unif$,
  $C \equivC \unif D$ and
  $\unif C = C$.
\end{property}

\begin{property}[Regular constraint system]
  $\CL$ is regular, ie, for $x, x'$ two types or kinds,
  $\entail{}{\Ceq{x}{x'}}$ implies
  $\fv{x} = \fv{x'}$
\end{property}

These properties
are sufficient to state that $HM(\CL)$ provides principal type inference.
The next section shows that these properties carry over to the
inference algorithm for our extension
of \hmx
with kind inference, affine types, and borrows.
This algorithm includes sound and complete constraint simplification.
In addition, we may add ``best-effort'' simplification
rules which help reduce the size of inferred signatures 
\citep{DBLP:conf/aplas/Simonet03}.

\subsection{Soundness and Principality}

The extended inference algorithm is sound
and complete with respect to our extension of \hmx.
%
The first theorem states that inference is sound
with respect to the syntax-directed type system.

\begin{theorem}[Soundness of inference]
  Given a type environment $\E$ containing only value bindings,
  $\E|_\tau$ containing only type bindings, and a term $e$:\\
  if $\inferW{\Sv}{(C,\unif)}{\E;\E_\tau}{e}{\tau}$\\
  then $\inferS{C}{\unif(\Sv;\E_\tau)}{e}{\tau}$, $\unif C = C$ and $\unif \tau = \tau$
\end{theorem}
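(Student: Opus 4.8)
The plan is to prove the statement by structural induction on the derivation of $\inferW{\Sv}{(C,\unif)}{\E;\E_\tau}{e}{\tau}$. The invariant maintained at every node is that the corresponding syntax-directed judgment holds in the environment $\unif(\Sv;\E_\tau)$: the usage environment returned by inference, with its suspended and borrow bindings reinstated and the normalized unifier applied, is exactly the syntax-directed typing environment. Before entering the induction I would isolate the auxiliary facts every case relies on: (i) a \emph{weakening lemma} for $\operatorname{\vdash_s}$, i.e.\ $\inferS{C}{\E}{e}{\tau}$ together with $\entail{C}{\Cleq{\schm}{\kaff_\infty}}$ and $x$ fresh imply $\inferS{C}{\E;\bvar{x}{\schm}}{e}{\tau}$; (ii) stability of $\operatorname{\vdash_s}$ under substitution and under strengthening the constraint context, i.e.\ $\inferS{C}{\E}{e}{\tau}$ implies $\inferS{\unif C}{\unif\E}{e}{\unif\tau}$, and $\inferS{C'}{\E}{e}{\tau}$ with $\entail{C}{C'}$ implies $\inferS{C}{\E}{e}{\tau}$; (iii) soundness of kind inference, turning $\inferK{(C,\unif)}{\E}{\tau}{k}$ into $\entail{C}{\Cleq{\unif\tau}{\unif k}}$; (iv) the observation (implicit in the way $\lsplit{}{}{}{}$ and $\lregion{}{}{}{}$ are defined from $\bsplit{}{}{}{}$ and $\bregion{}{}{}{}$) that an inference split/region fact, plus entailment of its constraint, yields the syntax-directed one; and (v) the principal-normal-form property of $\operatorname{normalize}$, which gives $\unif C = C$ directly and, with idempotency of the returned unifier, also $\unif\tau = \tau$ in every case (the output type of each rule is $\unif$ applied to a type skeleton or to an already-$\unif$-stable subresult). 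Thus the only real content is the typing judgment.

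For the leaf rules ({\sc Var$_I$}, the resource primitives) the reconstructed environment $\unif(\Sv;\E_\tau)$ is a single value binding together with type bindings, so the discarding side condition $\Cleq{\E\Sdel{x}}{\kaff_\infty}$ of rule {\sc Var} is vacuous and the instantiation constraint is emitted verbatim by $\operatorname{normalize}$. For the structural rules with splitting ({\sc App$_I$}, {\sc Pair$_I$}, {\sc MatchPair$_I$}, the region rule) the pattern is uniform: apply the induction hypothesis to each subderivation; push the global unifier through each subderivation using (ii); use (iv) to lift $\bsplit{C_s}{\Sv}{\Sv_1}{\Sv_2}$ (resp.\ the $\bregion{}{}{}{}$ fact) to the syntax-directed relation over $\unif(\Sv;\E_\tau)$, which is legitimate because $C_s$ is a conjunct of the pre-normalization constraint $D$ and $C \equivC \unif D$; and then discharge the residual side conditions of the syntax-directed rule — the coercion $\Cleq{\tau_2'}{\tau_2}$ of {\sc App}, the level bound $\Cleq{\tau}{\klin_{n-1}}$ of {\sc Region} (via (iii) and the conjunct $\Cleq{k_\tau}{\klin_{n-1}}$ of $D$) — each of which appears, up to $\unif$ and entailment, as a conjunct of $D$. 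The region case additionally uses that $\bregion$ installs a suspended binding for $x$ in $\Sv$, so that the premise $\svar{x}{\unif\tau_x}^n \in \unif(\Sv;\E_\tau)$ of rule {\sc Region} holds.

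The delicate cases are the binders {\sc Abs$_I$}, {\sc Let$_I$}, and the bound-variable side of {\sc MatchPair$_I$}, because inference enforces linearity at binders whereas the syntax-directed system enforces it at leaves. In {\sc Abs$_I$} the induction hypothesis gives a syntax-directed derivation of $e$ in $\unif'(\Sv_x;\E_\tau)$; if $x \in \Sv_x$ this already has the shape $\E';\bvar{x}{\unif'\tvar}$ that rule {\sc Abs} expects, and if $x \notin \Sv_x$ I apply the weakening lemma (i), whose hypothesis $\entail{C}{\Cleq{\unif\tvar}{\kaff_\infty}}$ is precisely $\unif$ applied to the $\Weaken_{\bvar{x}{\tvar}}(\Sv_x)$ conjunct of $D$; the arrow-kind condition $\Cleq{\unif(\Sv_x\Sdel{x};\E_\tau)}{\unif\kvar}$ of {\sc Abs} comes from the conjunct $\Cleq{\Sv_x\Sdel{x}}{\kvar}$, using that type bindings in $\E_\tau$ contribute nothing to such a bound. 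The {\sc Let$_I$} case layers the standard \hmx{} let-soundness step on top of this: the pair $(C_\schm,\schm)$ returned by $\operatorname{\text{gen}}$ matches the generalization premise of rule {\sc Let} with that rule's $D$ taken to be the generalized residual of $C_1$, and $\entail{C}{C_\schm}$ holds because $C_\schm$ is a conjunct of the pre-normalization constraint. I expect the principal obstacle to be exactly this interplay of generalization, constraint propagation, and the restriction of the output unifier to $\fv{\E}$ — tracking which residual constraints are generalized, which are exported, and checking that applying the global unifier commutes with generalization — rather than any single typing rule, since each of the latter reduces to matching the conjuncts assembled in the inference rule's $D$ against the side conditions of the corresponding syntax-directed rule.
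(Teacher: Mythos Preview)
Your proposal is correct and follows essentially the same approach as the paper: induction on the inference derivation, with the same auxiliary lemmas (weakening for unused binders, stability of $\vdash_s$ under substitution and constraint strengthening, the principal-normal-form property of $\operatorname{normalize}$), and the same case split on $x \in \Sv_x$ versus $x \notin \Sv_x$ in the {\sc Abs$_I$} case. The paper only spells out {\sc Var$_I$}, {\sc Abs$_I$}, and {\sc Pair$_I$} and defers the {\sc Let$_I$}/generalization machinery to the standard \hmx{} soundness argument, exactly as you anticipate.
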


The syntax-directed derivation holds with the usage environment $\Sv$ instead of the originally provided environment $\E$. Indeed,
$\E$ does not contain suspended and borrow bindings. Those
are discovered on the fly and recorded in $\Sv$. Type bindings
are taken directly from the syntax-directed derivation.

The second theorem states that inference is complete: for any given
syntax-directed typing derivation, our inference algorithm can find
a derivation that gives a type at least as general.


\begin{definition}[Instance relation]
  Given a constraint $C$ and two schemes
  $\schm = \forall \Multi\tvar. \qual{D}{\tau}$ and
  $\schm' = \forall \Multi\tvar'. \qual{D'}{\tau'} $.
  Then $\entail{C}{\schm \preceq \schm'}$
  iff $\entail{C}{\subst{\tvar}{\tau''} D}$
  and $\entail{C\Cand D'}{\Cleq{\subst{\tvar}{\tau''}\tau}{\tau'}}$
\end{definition}

\begin{definition}[Flattened Environment]
A flattened environment,
written as $\Eflat\E$, is the environment
where all the binders are replaced by normal ones. More formally:
\begin{align*}
  \Eflat\E
  =& \left\{\bvar{x}{\tau}\in\E \mid
    \vee \bvar{\borrow{x}}{\borrowty{k}{\tau}}\in\E
    \vee \svar{x}{\tau}^n\in\E
    \right\}
     \cup \left\{ \bvar{\tvar}{k} \mid \bvar{\tvar}{k}\in\E \right\}
\end{align*}
\end{definition}




\begin{theorem}[Principality]
  Let $\inferS{\Ctrue}{\E}{e}{\schm}$ a closed typing judgment.
  Then $\inferW{\Sv}{(C,\unif)}{\Eflat\E}{e}{\tau}$
  such that:
  \begin{align*}
    (\Ctrue,\schm_o) &= \generalize{C}{\unif\E}{\tau}
    &\entail{&}{\schm_o \preceq \schm}
  \end{align*}

\end{theorem}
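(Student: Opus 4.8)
The plan is to prove a stronger \emph{completeness lemma} from which the theorem follows by specialisation, and to establish that lemma by induction on the syntax-directed typing derivation. The stronger statement drops the ``closed'' hypothesis and allows the environment to be an arbitrary syntax-directed environment, including suspended and borrow bindings: whenever $\inferS{C}{\E}{e}{\tau}$, the algorithm run on $\Eflat\E$ and $e$ succeeds with $\inferW{\Sv}{(C_0,\unif_0)}{\Eflat\E}{e}{\tau_0}$, and there is a witnessing substitution $\unif_1 \geq \unif_0$ such that (i) $\entail{C}{\unif_1 C_0}$; (ii) $\entail{C}{\Cleq{\unif_1\tau_0}{\tau}}$; (iii) $\Sv$ binds exactly the free variables of $e$, each to a scheme of which the corresponding binding of $\E$ is a $(C,\unif_1)$-instance; and (iv) every binding of $\E$ that does \emph{not} occur in $\Sv$ satisfies the discardability constraint $\Cleq{\schm}{\kaff_\infty}$ under $C$. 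Clause (iv) is exactly what bridges the two linearity disciplines: the syntax-directed system rejects dropped linear variables at the \emph{leaves} (the $\Cleq{\E\Sdel{x}}{\kaff_\infty}$ side conditions of {\sc Var}, {\sc Borrow}, etc.), whereas inference defers this check to the $\Weaken$ constraints emitted at \emph{binders}; the invariant makes the two coincide, and also explains why inference works on $\Eflat\E$ rather than $\E$ — the suspended and borrow bindings are rediscovered along the way and recorded in $\Sv$.

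The induction is routine for the purely Hindley--Milner cases, where it follows the standard $\hmx$ principality argument; the substantive work is in the rules touching linearity, borrows, and regions, namely {\sc Abs}, {\sc App}, {\sc Let}/{\sc PLet}, {\sc MatchPair}, {\sc Region}, and {\sc Borrow}/{\sc Reborrow}. For these I would first prove two auxiliary principality lemmas about the environment relations. For splitting: given a syntax-directed split $\lsplit{C}{\E}{\E_1}{\E_2}$ and usage environments $\Sv_1,\Sv_2$ related (via clauses (iii)--(iv)) to $\E_1,\E_2$, the inference merge $\bsplit{C_s}{\Sv}{\Sv_1}{\Sv_2}$ produces $\Sv$ related to $\E$ with $\entail{C}{C_s}$ — i.e.\ the merge emits the most general constraint compatible with any admissible split, which is possible precisely because the inference relation is total and deterministic on its inputs (as noted in \cref{typ:extra:envs}), and requires care with the asymmetric rules {\sc Susp}, {\sc SuspB}, {\sc SuspS}. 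The analogous statement for $\bregion$ versus $\lregion$ handles the suspended-to-borrow conversion at region entry and its inverse as the usage environment flows back out, together with the index arithmetic $\Cleq{\tau}{\klin_{n-1}}$ that confines a borrow to its region. With these in hand, each inference rule applies its constructor, assembles the pre-normalisation constraint $D$, and finishes with $(C,\unif) = \normalize{D}{\unif'}$; by the principal-normal-form property the returned $(C,\unif)$ is principal for $D$, so clauses (i)--(ii) are preserved by transporting the inductive witnesses along $\unif$, and regularity of $\CL$ keeps the free-variable bookkeeping in clause (iii) consistent.

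Finally I would specialise: for a closed judgment $\inferS{\Ctrue}{\E}{e}{\schm}$ with $\E$ free of suspended and borrow bindings, $\Eflat\E$ relates trivially, the lemma yields $\inferW{\Sv}{(C,\unif)}{\Eflat\E}{e}{\tau}$, and since the ambient constraint is $\Ctrue$ every residual constraint is generalised away, so $\generalize{C}{\unif\E}{\tau} = (\Ctrue,\schm_o)$; comparing this generalisation with the one implicit in the derivation of $\schm$ and using clauses (i)--(ii) gives $\entail{}{\schm_o \preceq \schm}$, and the Soundness-of-inference theorem already proved above certifies that $\schm_o$ is itself derivable, so it is a genuine principal type. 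I expect the main obstacle to be precisely the two auxiliary lemmas about split and region together with clause (iv): one must show that the binder-based, usage-environment discipline of the inference algorithm is complete for the leaf-based discipline of the syntax-directed system, and this completeness argument interacts subtly with the asymmetry of suspended-binding splitting and with the nesting-index machinery for borrows, rather than with the constraint solver, which is black-boxed through the two stated properties of $\CL$.
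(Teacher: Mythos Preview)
Your approach is essentially the one the paper takes: Principality is obtained as a corollary of a stronger Completeness theorem proved by induction on the syntax-directed derivation, with auxiliary lemmas about the environment-splitting and region relations (the paper's Lemmas \ref{split:flat}, \ref{split:gen}, \ref{infer:extend}), and with the purely Hindley--Milner cases deferred to the standard \hmx argument of Sulzmann.

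The only notable difference is in the packaging of the inductive invariant. The paper phrases Completeness in the \hmx house style, quantifying over a more general environment $\E$ with $\entail{}{\unif'\E \preceq \E'}$ and comparing the \emph{generalised} schemes via the instance relation and the ``more general unifier'' relation $\unif \le^{\phi}_{\fv{\E}} \unif'$; your clauses (i)--(iv) unbundle the same information differently. In particular, your clause~(iv) makes explicit something the paper leaves implicit: in its {\sc Abs} case the paper writes ``otherwise we can show by induction that $\entail{C'}{\unif'\Weaken_{\bvar{x}{\tvar}}(\Sv_x)}$'' without strengthening the stated hypothesis, whereas you carry the discardability of unused bindings as part of the invariant from the outset. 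Both formulations work; yours is arguably cleaner for the binder-versus-leaf linearity bookkeeping, while the paper's stays closer to the original \hmx proofs it cites.
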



\section{Metatheory}
\label{sec:metatheory}

\lstMakeShortInline[keepspaces,style=rule]@

There are several connections between the type system and
the operational semantics, which we state as a single type soundness
theorem.
The theorem relies on several standard notions like store typing
$\vdash \Store : \SE$ and agreement of the results in the value environment
with the type environment $\SE \vdash \VEnv:\E$ that we define
formally in~\cref{sec:metatheory:proofs} where we also present selected cases of the
proofs.
The non-standard part is the handling of permissions. With
$\Rawloc\Perm$ we extract the underlying raw locations from the
permissions as in $\Rawloc{\Multi\IBORROW\hspace{0.5mm}\Multi\MBORROW\hspace{0.5mm}\Loc} = \Loc$
and with $\Reach\Store\VEnv$ we transitively trace the
addresses reachable from $\VEnv$ in store $\Store$. We write
$\SE\le\SE'$ and $\Store \le \Store'$ for extending the domain of the
store type and of the store, respectively.
The permission set contains the set
of addresses that can be used during evaluation. It is managed by the
region expression as well as by creation and use of resources as
shown in \cref{sec:sem}.
We distinguish several parts of the value
environment $\VEnv$ that correspond to the different kinds of bindings in the
type environment: $\Active\VEnv$ for active entries of direct
references to linear resources, closures, etc; $\MutableBorrows\VEnv$ for
affine borrows or resources;
$\ImmutableBorrows\VEnv$ for unrestricted values including
unrestricted borrows;
and $\Suspended\VEnv$ for suspended entries. The judgment
$\SE \vdash \VEnv:\E$ is defined in terms of this structure.
We treat
$\Reach\Store\VEnv$ as a multiset to properly discuss linearity and
affinity. We use the notation $\MultiNumber x M$ for the number of
times $x$ occurs in multiset $M$.

\newcommand\resultOk[2]{
  $R#2 = \Ok{\Store#2, \Perm#2, r#2}$
}
\newcommand\resultEnv[2]{
  $\SE#1 \le \SE#2$, $\Store#1 \le \Store#2$, $\vdash \Store#2 : \SE#2$
}
\newcommand\resultPermDom[2]{
  $\Perm#2$ is wellformed and
  $\Rawloc{\Perm#2} \subseteq \Dom{\Store#2} \setminus {\Store#2}^{-1}
  (\StFreed)$.
}
\newcommand\resultReachPerm[2]{
  $\Reach0{r#2} \subseteq \Perm#2$,
  $\Reach{\Store#2}{r#2} \subseteq \Sclos{\Perm#2}$
    $\cap (\Reach{\Store#2}{\VEnv#1} \setminus
    \Reach{\Store#2}{\Suspended{\VEnv#1}}
    \cup \Dom{\Store#2} \setminus \Dom{\Store#1})$.
}
\newcommand\resultFrame[3]{
  Frame: \\
  For all $\Loc \in \Dom{\Store#1} \setminus
  \Rawloc{\Reach{\Store#3}{\VEnv#2}}$ it must be that
  \begin{itemize}
  \item $\Store#3 (\Loc) = \Store#1 (\Loc)$ and
  \item  for any $\Addr$ with $\Rawloc\Addr = \{\Loc\}$,
    $\Addr \in \Perm#1 \Leftrightarrow \Addr\in\Perm#3$.
  \end{itemize}
}
\newcommand\resultImmutables[3]{
  Unrestricted values, resources, and borrows: \\
  For all $\Addr \in
  \Reach{\Store#3}{\ImmutableBorrows{\VEnv#2}, \Suspended[\kun]{\VEnv#2}}$ with
  $\Rawloc\Addr = \{\Loc\}$, it must be that
  $\Loc\in\Dom{\Store#1}$,
  $\Store#3 (\Loc) = \Store#1 (\Loc) \ne \StFreed$
  and $\Addr\in\Perm#3$.
}
\newcommand\resultMutables[3]{
  Affine borrows and resources:\\
  For all $\Addr \in  \Reach{\Store#3}{\MutableBorrows{\VEnv#2}, \Suspended[\kaff]{\VEnv#2}}$
  with $\Rawloc\Addr = \{\Loc\}$, it must be that
  $\Loc\in\Dom{\Store#1}$. If $\Addr\ne\Loc$, then
  $\Store#3 (\Loc) \ne \StFreed$.
  If $\Addr \in \Reach{\Store#3}{\Suspended[\kaff]{\VEnv#2}}$, then
  $\Addr \in \Perm#3$.
}
\newcommand\resultSuspendedXXX[3]{
  Suspended borrows: \\
  For all $\Addr \in \Reach{\Store#3}{\Suspended{\VEnv#2}}$ with
  $\Rawloc\Addr = \{\Loc\}$ it must be that $\Addr \in \Perm#3$
  and  $\Store#3 (\Loc) \ne \StFreed$.
}
\newcommand\resultResources[3]{
  Resources: Let $\REACH#1 = \Reach{\Store#1}{\Active{\VEnv#2}}$.  Let
  $\REACH#3 =\Reach{\Store#3}{\Active{\VEnv#2}}$.
  \\
  For all $\Loc\in \REACH#1$ it must be that
  $\MultiNumber\Loc{\REACH#1}=\MultiNumber\Loc{\REACH#3}=1$,
    $\Loc\notin\Perm#3$, and if $\Store#1(\Loc)$ is a resource, then
    $\Store#3 (\Loc) = \StFreed$.
}
\newcommand\resultThinAir[2]{
  No thin air permission: \\
  $\Perm#2 \subseteq
  \Perm#1 \cup ( \Dom{\Store#2} \setminus \Dom{\Store#1})$.
}
\newcommand\assumeWellformed[1]{
  $\Perm#1$ is wellformed and $\Rawloc{\Perm#1} \subseteq
  \Dom{\Store#1} \setminus {\Store#1}^{-1} (\StFreed)$
}
\newcommand\assumeIncoming[2]{
  Incoming Resources:
  \begin{enumerate}
  \item $\forall \Loc\in \Rawloc{\Reach{\Store#1}{\VEnv#2}}$,
    $\Store#1 (\Loc) \ne \StFreed$.
  \item $\forall \Loc \in \REACH#1
    =\Rawloc{\Reach{\Store#1}{\Active{\VEnv#2},\MutableBorrows{\VEnv#2},
      \Suspended[\kaff]{\VEnv#2}}}$,
    $\MultiNumber\Loc{\REACH#1}= 1$.
  \end{enumerate}
}
\newcommand\assumeReachable[2]{
  $\Reach0{\VEnv#2} \subseteq {\Perm#1}$, 
  $\Reach{\Store#1}{\VEnv#2} \subseteq \Sclos{\Perm#1}$.
}

\begin{restatable}[Type Soundness]{theorem}{SoundnessThm}\label{thm:soundness}
  Suppose that
  \begin{enumerate}[({A}1)]
  \item $\inferS{C}{\E}{e}{\tau}$
  \item\label{item:32} $\SE \vdash \VEnv : \E$
  \item\label{item:33} $\vdash \Store : \SE$
  \item\label{item:11} \assumeWellformed{}
  \item\label{item:12} \assumeReachable{}{}
  \item $\Rawloc{\Active\VEnv}$,
    $\Rawloc{\MutableBorrows\VEnv}$,
    $\Rawloc{\ImmutableBorrows\VEnv}$, and
    $\Rawloc{\Suspended\VEnv}$ are all disjoint
  \item\label{item:15} \assumeIncoming{}{}
  \end{enumerate}
  For all $i\in\Nat$, if
  \ \lstinline[style=rule]{R' = eval} $\Store$ $\Perm$ $\VEnv$ \lstinline[style=rule]{i e}
  and $R'\ne \TimeOut$,
  then
  $\exists$ $\Store'$, $\Perm'$, $r'$, $\SE'$ such that
  \begin{enumerate}[({R}1)]
  \item \resultOk{}{'}
  \item \resultEnv{}{'}
  \item $\SE' \vdash r' : \tau$
  \item \resultPermDom{}{'}
  \item \resultReachPerm{}{'}
  \item \resultFrame{}{}{'}
  \item \resultImmutables{}{}{'}
  \item \resultMutables{}{}{'}
  \item \resultResources{}{}{'}
  \item \resultThinAir{}{'}
  \end{enumerate}
\end{restatable}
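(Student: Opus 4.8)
The proof goes by strong induction on the fuel $i$, with an inner case analysis on the shape of the (elaborated) expression $e$ --- equivalently, on the last rule of the big-step derivation, since both the evaluator and the syntax-directed type system are directed by the term and its annotations. The statement is set up to be (almost exactly) the induction hypothesis: assumptions (A1)--(A7) are precisely what one wants to reconstruct from conclusions (R1)--(R10) of a sub-evaluation, so most of the work is propagating these invariants through one evaluation step and, where needed, re-deriving the auxiliary facts that are not listed verbatim among the (R)'s (notably the disjointness of (A6) and the per-location multiplicity bound of (A7), both of which follow from (R5), (R6), (R9), (R10)). For $i=0$ the only applicable rule is \TirName{TimeOut}, so $R'=\TimeOut$ and the claim is vacuous; for $i=i'+1$ we inspect $e$.

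The leaf cases --- constants, $x$, $\ivar{x}{\Multi k}{\Multi\tau}$, $\lam[k]{x}{e}$, $\borrow{x}$, $\reborrow{x}$ --- are discharged directly: the evaluator does not recurse, it only reads $\VEnv$, checks $\Loc\in\Perm$, and at most allocates one fresh storable (a closure, or a monomorphic instance of an \text{STPOLY}). These cases use $\SE\vdash\VEnv:\E$ together with a \emph{store-extension monotonicity} lemma: $\SE\vdash r:\tau$ and $\vdash\Store:\SE$ are preserved under $\SE\le\SE'$, $\Store\le\Store'$, and the fresh location lies in $\Dom{\Store'}\setminus\Dom{\Store}$, which is exactly what (R5) and (R10) require of a ``new'' permission. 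The \emph{compound} cases --- $\letin{x}{e_1}{e_2}$, $\iapp\Sp{e_1}{e_2}$, $\introPair[k]{e_1}{e_2}$, $\matchin{x,y}{e_1}{e_2}$ --- carry a splitting witness $\Sp:\lsplit{C}{\E}{\E_1}{\E_2}$, and the central tool is a \textbf{splitting lemma}: if $\SE\vdash\VEnv:\E$, (A4)--(A7) hold, and $\entail{}{C}$, then the reachable multiset $\Reach\Store\VEnv$, the live part of $\Perm$, and the disjointness/linearity data decompose compatibly along $\E_1$ and $\E_2$ as prescribed by the binding-splitting rules of \cref{sdtyp:split} --- \TirName{Both} duplicating only unrestricted entries, and \TirName{Susp}/\TirName{SuspB}/\TirName{SuspS} moving suspended access to the left while keeping the resource itself usable on the right. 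One then applies the induction hypothesis to $e_1$ at fuel $i'$; conclusions (R2), (R4), the \emph{frame} property (R6) and \emph{no thin-air permissions} (R10) re-establish (A2)--(A7) for $e_2$ in the updated state $(\Store_1,\Perm_1',\VEnv)$; a second invocation handles $e_2$ (and, for \TirName{App}, the closure body with $x\mapsto r_2$, using canonical forms: an address of arrow type stores an \text{STCLOS}). The ten output invariants for the whole term are then re-assembled from those of the sub-runs, using (R6) to carry over the untouched parts of $\Store$ and $\Perm$ and re-composing the reachable multisets. The \TirName{Region} case additionally needs a \textbf{region lemma} for \TirName{BorrowBinding}: evaluating $\region{\Sone x\BORROW}{e}$ turns the suspended binding for $x$ into a borrow binding, rewrites the permission entry $\Addr\mapsto\Addr\BORROW$ (which may fail, matching the typing side-condition $\BORROW_n\lk k$), and on exit restores $\Addr$; one checks that inside the region the raw location of $x$ is unreachable except through the borrow (so the resource cannot be freed there) and that the kind-index discipline ($\BORROW_n\lk k$, $\Cleq{\tau}{\klin_{n-1}}$) forbids the borrowed address from occurring in $r'$ or in the returned permissions. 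The resource primitives need only a short \textbf{canonical-forms-for-$\tapp\tres\tau$} argument: \create{} allocates a fresh $\StRes{r}$ and adds it to $\Perm$; \destroy{} overwrites it with $\StFreed$ and removes it; \observe{} and \update{} require the appropriate shared resp.\ exclusive borrow in $\Perm$ and leave the $\StRes{r}$ cell live, with \update{} additionally updating its contents.

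The main obstacle will be the \emph{linearity and permission bookkeeping} concentrated in (R5), (R6), (R9) and (R10): that every raw location reachable from the active part $\Active\VEnv$ has multiplicity exactly one both before and after evaluation, that resources reachable from $\Active\VEnv$ end as $\StFreed$, that permissions are neither conjured nor silently dropped, and that all of this survives the \emph{asymmetric} splitting of suspended bindings and the stacking of borrow modifiers across nested regions. The three delicate points are: (i) showing the \TirName{Susp}-family rules keep the ``$\MultiNumber\Loc{\REACH{}}=1$'' invariant of (A7)/(R9) intact even though the same resource is now reachable from both sub-environments --- it is \emph{active} in one and only \emph{suspended} in the other, and the multiset count must track that distinction; (ii) reasoning about how the $\Addr\mapsto\Addr\BORROW$ rewrites interact with $\Rawloc{\cdot}$, $\Sclos{\cdot}$ and well-formedness of $\Perm$ on region entry and exit; and (iii) the \TirName{MatchPair}-on-a-borrow case, where one borrowed address is decomposed into two component borrows and later re-merged, and the permission accounting of \TirName{SMatchBorrow} must be shown to respect the frame and thin-air conditions. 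Once the splitting lemma, the region lemma, and the frame/thin-air consequences are stated with sufficient precision, the per-case work is a lengthy but routine verification that the ten invariants re-compose; selected cases, together with the formal definitions of $\vdash\Store:\SE$ and $\SE\vdash\VEnv:\E$, are deferred to \cref{sec:metatheory:proofs}.
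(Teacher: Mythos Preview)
Your proposal is correct and follows essentially the same route as the paper: induction on the fuel index (equivalently, functional induction on \lstinline[style=rule]{eval}), base case $i=0$ vacuous, and a per-expression case analysis in which the splitting evidence $\Sp$ is used to decompose $\VEnv$ and the (R)-invariants of sub-evaluations are threaded back into the (A)-assumptions of subsequent ones via framing and store extension. Two small points where the paper differs from your write-up: first, the paper carries out the proof only for terms in A-normal form (so \TirName{App}, \TirName{Pair}, and \TirName{MatchPair} have variables rather than arbitrary sub-expressions as operands), which removes the need to sequence two inductive calls in those cases and pushes all the ``two sub-evaluations'' bookkeeping into the \TirName{Let} case; second, what you call the ``region lemma'' is stated separately as a \emph{Containment} lemma (\cref{lemma:containment}) whose content is exactly your observation that the kind bound $\Cleq{\tau}{\klin_{n-1}}$ prevents any address of the form $\BORROW_n\Addr'$ from appearing in $\Reach{\Store}{r}$, and this lemma is the only non-local ingredient in the \TirName{Region} case.
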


The proof of the
theorem is by functional induction on the evaluation judgment, which
is indexed by the strictly decreasing counter $i$.

The assumptions A1-A3 and results R1-R3 state the standard soundness properties
for lambda calculi with references.

The rest of the statement accounts for the substructural properties and
borrowing in the presence of explicit resource management.
%
%
%
Incoming resources are always active (i.e., not freed).
Linear and affine resources as well as suspended affine borrows have
exactly one pointer in the environment.
The Frame condition states that only store locations reachable from
the current environment can change and that all permissions outside
the reachable locations remain the same.
Unrestricted values, resources, and borrows do not change their
underlying resource and do not spend their permission.
Affine borrows and resources may or may not spend their
permission. Borrows are not freed, but resources may be freed.
Incoming suspended borrows have no permission attached to them and
their permission has been retracted on exit of their region.
A linear resource is always freed.
Outgoing permissions are either inherited from the caller or they
refer to newly created values.

\lstDeleteShortInline@


\lstMakeShortInline[keepspaces,basicstyle=\small\ttfamily]@
\section{Limitations and Extensions}

\subsection{Flow sensitivity}

The type system defined so far does not support any
form of flow sensitivity. Therefore, code patterns that rely on
subtle flow-sensitive usage of permissions and linearity will most likely not
typecheck in \lang. For example, the following merge function on linear lists
cannot be expressed directly, because matching against
\lstinline/l1/ and \lstinline/l2/ consumes both lists.

\begin{lstlisting}
let rec merge l1 l2 = match l1, l2 with
  | h1::t1, h2::t2 ->
    if &h1 < &h2 
    then h1::(merge t1 l2) (* Must expand l2 to h2::t2 here *)
    else h2::(merge l1 t2)
  | ....
\end{lstlisting}

Patterns like this require a richer logic, such as provided by
Mezzo~\citep{DBLP:phd/hal/Protzenko14}.
However,
\citet{DBLP:journals/corr/abs-1903-00982} formalize Rust's
notion of non-lexical lifetimes which partially supports such
code patterns. We believe this notion can be adapted to \lang's notion of
regions.

\subsubsection*{Non-Lexical Regions}

The notion of non-lexical lifetimes is 
a recent addition to Rust.
With this feature code is acceptable even if borrowing does not respect
lexical scoping as in this example:

\begin{lstlisting}[numbers=none]
let a = &x in (f a; g &!x)
\end{lstlisting}

This code pattern is dynamically safe because $a$ is not used after
the function call @f a@.
Here, this can be made explicit by transforming
the code to @(let a = &x in f a); g &!x@. However, this is not possible
in programs with branches who uses different dynamic patterns.
Non-lexical lifetimes (NLL) handle such a pattern by removing expressions
that do not mention $a$ from its region; in this example,  NLL removes
the last expression.
In \lang, regions are lexical and marked by the
expression $\region{b}{e}$.
During inference, kind constraints prevent escaping from
a region.

To add support for non-lexical lifetimes, we could replace the
lexical region by an annotation on each expression indicating which borrows are
live in this expression.
When exploring a subexpression, we would compare the annotations, and automatically
apply the $\textsc{Region}$ rule when they differ.
This approach is equivalent to inlining the $\textsc{Region}$ rule in all the other
rules.

Applied to the program above, only the first two expressions would be annotated
to be ``in the region associated with @&x@'', but not the last expression.
Thanks to these annotations, type checking the sequence would check
that the borrow does not escape the left-hand side (i.e., the second
expression @f a@).

\subsection{Capabilities and Identity}
\label{identity}

In \lang the tracking of linearity does
not rely on any notion of ``identity'': the type system cannot specify that two
objects are the same, simply that they share the same usage pattern with
regards to linearity.
A language like Alms~\citep{DBLP:conf/popl/TovP11}, on the other hand,
often relies on a notion of identity to express capabilities.
For instance, the Alms typing
@Array.create : int -> 'a -> \E 'b. ('a, 'b) array@ uses
@'b@ as a  unique identification of the array.
Functions such as @Array.acquire : ('a, 'b) array -> 'b cap@
are used to obtain capabilities to operate on the array.

While such uses are partially covered by borrows and regions,
a notion of identity associated to regions
would enable us to express regions directly in type signatures.
For instance, the @get_eb@ function shown in
\cref{tuto:vector} could be made safe
by creating a restricted inner region on function application,
with the signature:
@&!('k, 'a Array.t) -> int -> \E ('k' < 'k) &!('k', 'a)@

This approach relies on existential types to model identities.
At present, \lang does not support existentials as it would
forgo principal type inference.
However, existentials
are compatible with the HM(X) framework~\citep{DBLP:conf/icfp/Simonet03}
and would make a very desirable addition to \lang. Work on GADTs in
OCaml and Haskell demonstrates
that existential types can be put to use without compromising
inference in the rest of the language, by integrating unpacking
and pattern matching.

\subsection{Ad-hoc Polymorphism and Borrows}

In our formalization, we use two operators, $\borrow{x}$ and $\reborrow{x}$ to
distinguish between borrows and borrows of borrows.
Such a distinction is inconvenient for programming.
Using a typeclass-like mechanism, we can replace these operators
by a single overloaded operator, $\borrow{x}$, which expects $x$ to be @Borrowable@ and
would then desugar to the more precise operators.
A similar solution is used in Rust through the @Borrow@ and @Defer@
traits.
This approach also enables method calls on objects without
explicit borrows, such as @foo.len()@ where len expects a shared borrow.

Ad-hoc polymorphism fits demonstrably in the \hmx framework of constrained
types and preserves all properties of our language such
as principal type inference. Its soundness is orthogonal to linear types
and has been explored in the literature~\citep{DBLP:conf/fpca/OderskyWW95}.

\subsection{A Richer Region System}

\lang requires that each region is identified by an index drawn from a partial
order that is compatible with the nesting of regions.
This order can be implemented in many ways, including region variables
as often used in algebraic effects systems, existentials, etc.

For simplicity, the formalization uses the concrete implementation with 
natural numbers for indices. The proofs only rely on
the existence of a partial order and could be adapted to one of the
more abstract approaches.
In particular, \lang could reuse regions variables provided by
the ongoing work on effect systems
for OCaml~\cite{DBLP:conf/sfp/DolanEHMSW17}.

\subsection{Standard Features}

\subsubsection*{Algebraic Datatypes}

Algebraic data types are a staple of functional programming and fit nicely
in our paradigm. Indeed, it is sufficient to ensure  that the kinds of
the constructor arguments are less than or equal to the kind of the datatype.
Hence, it is forbidden to include affine elements in an unrestricted
datatype, whereas the elements in a linear list may be linear or
unrestricted.
Our prototype implements non-recursive algebraic datatypes with
pattern matching.

\subsubsection*{Branching constructs}

Our formalization of \lang does not cover conditionals. In the
typing rules, a conditional is supported as usual by checking all
branches with the same constraint and typing environment and requiring
the return types to match.
It materializes in the inference algorithm as a straightforward
(symmetric) join relation which is used for all elimination rules on
sum types.
This extension is implemented in our prototype.

\subsection{Concurrency}
\label{sec:concurrency}

While our present semantic model does not consider concurrency, some
design decisions were taken with a possible extension to concurrency
in mind. The main impact of this foresight is the distinction between
exclusive borrows and shared borrows, which materializes in the
metatheory. The intended contract of the shared 
borrow is that it can be duplicated and that the program consistently
observes the same state of the underlying resource inside its region
even in the presence of concurrency.

The exclusive borrow, on the other hand, is propagated according to
the evaluation order with the intention that any suspended binding
split of from an exclusive borrow has finished its action on the
resource before the borrow gets exercised. In the presence of
concurrency, this intended semantics of the exclusive borrow should
guarantee freedom of data races.

That is, if a thread closes over a borrow, that thread should have
terminated before the parent thread leaves the borrow's region. Rust
addresses this lifetime issue with the \lstinline/move/ qualification
for a thread which transfers ownership of the free variables to the
thread. However, moving (in Rust) only applies to the resource itself, but not
to borrows. A more discerning kind system would be needed for \lang to
enable safe sharing of synchronizable resources or borrows analogously
to Rust's \texttt{Sync} trait.

\lstDeleteShortInline@


\section{Related work}
\label{sec:related-work}




\newcommand\YC{\color{Green}}
\newcommand\NC{\color{Red}}
\newcommand\MC{\color{Orange}}
\newcommand\Y{{\YC{\ding{52}}}\xspace}
\newcommand\N{{\NC{\ding{56}}}\xspace}
\newcommand\M{{\MC{\textasciitilde}}\xspace} 
\begin{figure}[tp]
  \begin{center}

    \newcolumntype{R}[2]{%
      >{\adjustbox{angle=#1,lap=\width-(#2)}\bgroup}%
      c%
      <{\egroup}%
    }
    \newcommand*\rot{\multicolumn{1}{R{45}{1em}}}
    
    \begin{tabular}{l|c@{~}cccccccccc}
      \textbf{Language}
      & UAL
      & \rot{State}
      & \rot{Borrows}
      & \rot{Multiplicity Subsumption}
      & \rot{Multiplicity Polymorphism}
      & \rot{Identity}
      & \rot{Concurrency}
      & \rot{Escape hatch}
      & \rot{Inference} & \rot{Formalisation}
      & \rot{Basis}
      \\\hline
      System F\degree~\citep{DBLP:conf/tldi/MazurakZZ10}
      & UL & \N & \N & \N & \N & \N & \N & \N & \N & \YC Coq & System F
      \\
      Alms~\citep{DBLP:conf/popl/TovP11}
      & UA & \Y & \N & \M & \Y & \Y & \Y & \Y & \MC Local & \N & ML \\
      Quill~\citep{DBLP:conf/icfp/Morris16}
      & UL & \N & \N & \N & \Y & \N & \N & \N & \YC Principal & \MC Manual & Qual. types\\
      Lin. Haskell~\citep{DBLP:journals/pacmpl/BernardyBNJS18}
      & UL & \M & \N & \N & \Y & \N & \M & \M & \MC Non-pr. & \MC Manual & Haskell \\
      Mezzo~\citep{DBLP:phd/hal/Protzenko14,DBLP:journals/toplas/BalabonskiPP16}
      & UA & \Y & \M & \M & \Y & \Y & \Y & \Y & \MC Local & \YC Coq & ML \\
      \hline
      Rust~\citep{rust,DBLP:journals/pacmpl/0002JKD18}
      & UA & \Y & \Y & \Y & \M & \N & \Y & \Y & \MC Local & \YC Coq & --- \\
      Plaid~\citep{DBLP:conf/oopsla/AldrichSSS09,DBLP:journals/toplas/GarciaTWA14}
      & UA & \Y & \N & \Y & \Y & \Y & \N & \Y & \N & \MC Manual & Java \\
      \hline
      \lang
      & UAL & \Y & \Y & \Y & \Y & \N & \N & \M & \YC Principal & \MC Manual & ML/HM(X)
      \\
    \end{tabular}
  \end{center}
  \caption{Comparison matrix}
  \label{fig:comparison-matrix}
\end{figure}

The comparison matrix in \cref{fig:comparison-matrix} gives an
overview over the systems discussed in this section.
Each column indicates whether a feature is present (\Y), absent (\N), or
partially supported (\M), i.e., if the feature is limited or can
only be obtained through a non-trivial encoding.
Features are selected according to their relevance for type-based resource
management and programmer convenience.

The column UAL
specifies the substructural features supported (Unrestricted, Affine,
Linear). The columns ``State'' and ``Borrows'' indicate support for the respective
feature. In an ideal world, the presence of linearity and state
indicates that the system is able to support safe manual memory
management as linearity enforces manual deallocation. True affinity
and state only works with garbage collection, which eventually
automatically finalizes an object no longer referenced. In practice, this
distinction is often watered down. For example, Rust automatically
destructs objects at the end of their lifetime, creating the
illusion of affinity while the low-level code is strictly linear.
However, there are ways to consume an object at the source
level without invoking its destructor (using
\lstinline/mem::forget/)\footnote{See
  \url{https://doc.rust-lang.org/nomicon/leaking.html} which contains
  further examples and discussion. Thanks to
  Derek Dreyer and Ralf Jung for pointing this out.}
where the high-level code exhibits linearity, but the low-level code
is affine.

``Multiplicity Subsumption'' indicates that unrestricted elements 
can be promoted to affine and then linear. This promotion applies to
objects, resources, borrows, and closures. ``Multiplicity Polymorphism'' refers to polymorphism over
substructural features: a function can be parameterized over the multiplicity
restriction of an object. For instance, the type of function
composition should express that applies to functions with linear,
affine, and unrestricted multiplicity and returns a function with the
same multiplicity. 
``Identity'' indicates that the language supports a notion
of identity, usually through existential types, as described in
\cref{identity}.
``Concurrency'' indicates whether the language supports
concurrency. For example, the implementation of Linear Haskell
supports state and concurrency, but its theory covers neither.
``Escape hatch'' indicates whether a programmer can (locally) opt out of
resource management through language-integrated means such
as Rust's \texttt{unsafe}. Partial support ``\M'', in the case of \lang
for instance, indicates that this feature is available, but not formalized.
Type ``inference''  can be local, principal, or non-principal (if the
inferred type is not necessarily the most general one). ``Formalization''
refers to the existence of a formal semantics and type soundness
proof. ``Basis'' indicates the heritage or inspiration of the language.

\subsection{Substructural type-systems in functional languages}

Many systems propose combinations of
functional programming and linear types in a practical setting.
The goal of \lang is to combine key ingredients
from these proposals while still preserving
complete type inference.
Many of the following languages support linear or affine types, but rarely
both. In many cases, it is easy to adapt a system to support both, as
\lang does.
None of the following languages support borrows.

System F\degree~\citep{DBLP:conf/tldi/MazurakZZ10}
extends System F with kinds to distinguish
between linear and unrestricted types.
The authors provide
a linearity-aware semantics with a soundness proof.
Unlike \lang, System F\degree{} does not allow
quantification over kinds which limits its expressivity. For instance, it
does not admit a most general type for function composition.
Being based on System F, it does not admit
principal type inference.

Quill~\citep{DBLP:conf/icfp/Morris16} is a Haskell-like language with linear
types.
Quill does not expose a kind language, but
uses the framework of qualified types to govern linearity annotations on arrows.
Its type inference algorithm is proven sound and complete.
\lang infers type signatures for all Quill examples, but often with
simpler types because Quill does not support subkinding.
Quill comes with a linearity-aware semantics and soundness proof.
Quill does not support borrows.


Alms~\citep{DBLP:conf/popl/TovP11} is an ML-like language with rich, kind-based
affine types and ML modules, similar to \lang.
Alms examples often rely on existential types to track the identity
of objects. For instance, consider the signature
\lstinline/Array.create : int -> 'a -> \E 'b. ('a, 'b) array/ where
\lstinline/'b/ uniquely identifies the array.
Due to the reliance on existentials, Alms does not support complete type inference.
Furthermore, Alms does not support borrows and often relies
on explicit capability passing.
In our experience, \affe's limited support for existential types through
regions is sufficient to express many of Alms' examples and leads to
a more convenient programming style for imperative code.
Alms kind structure features unions, intersections and dependent kinds while
\lang uses constrained types.
We believe most of Alms' kind signatures can be expressed equivalently in
our system: for instance the pair type constructor
has kind $\Pi\alpha\Pi\beta. \langle\alpha\rangle \sqcup \langle\beta\rangle$
(where $\alpha$ and $\beta$ are types and $\Pi$ is the dependent function)
in Alms compared to $\kvar\to\kvar\to\kvar$ in \lang thanks
to subkinding.
Finally, Alms provides excellent support for abstraction through
modules by allowing to keep some type unrestricted inside a module, but
exposing it as affine. \lang supports
such programming style thanks to subsumption.

The goal of Linear Haskell~\citep{DBLP:journals/pacmpl/BernardyBNJS18}
(LH) is to retrofit linear types to Haskell.
Unlike the previously discussed approaches, LH relies on ``linear
arrows'', written $\multimap$ as in linear logic, 
which are functions that \emph{use} their argument exactly once.
This design is easy to retrofit on top of an existing compiler
such as GHC, but has proven quite controversial\footnote{
  See the in-depth discussion attached to the GHC proposal for LH on GitHub: \url{https://github.com/ghc-proposals/ghc-proposals/pull/111\#issuecomment-403349707}.}.
Most relevant to \lang:
\begin{itemize}[leftmargin=*]
\item LH does not admit subtyping for arrows and requires
  $\eta$-expansion to pass unrestricted functions in linear
  contexts. This approach is acceptable in a non-strict language such as
  Haskell but changes the semantics in a strict setting.
\item
  While the LH paper specifies a full type system along with a
  linearity-aware soundness proof, there is neither formal description of
  the type inference algorithm nor a proof of the properties of inference.
  Subsequent work~\cite{DBLP:journals/corr/abs-1911-00268}
  formalizes the inference for rank 1 qualified-types.
  However, there is an implementation of the inference as part of GHC.
\item
  LH promotes a continuation-passing style with functions such as
  \lstinline/withFile : path -> (file ->. Unrestricted r) ->. r/
  to ensure linear use of resources. This style leads to problems with
  placing the annotation on, e.g., the IO monad.
  \lang follows System F\degree, Quill, and Alms, all of which support
  resource handling in direct style, where types themselves are
  described as affine or linear. (Of course, continuation-passing
  style is also supported.)
  We expect that the direct approach eases modular reasoning about linearity.
  In particular, using abstraction through modules,
  programmers only need to consider the module
  implementation to ensure that linear
  resources are properly handled.
\end{itemize}

Mezzo~\citep{DBLP:phd/hal/Protzenko14,DBLP:journals/toplas/BalabonskiPP16} is an ML-like language
with a rich capability system which is able to encode numerous
properties akin to separation logic~\citep{DBLP:conf/lics/Reynolds02}.
Mezzo explores the  boundaries of the design space of type systems for
resources. Hence, it is more expressive than \lang, but
much harder to use. The Mezzo typechecker relies on explicit
annotations and it is not known whether type inference for Mezzo is possible.

\citet{DBLP:journals/corr/abs-1803-02796} presents
an extension of OCaml for resource management in the style of C++'s RAII
and Rust's lifetimes. This system assumes
the existence of a linear type system and develops the associated compilation
and runtime infrastructure. We believe our approach is
complementary and aim to combine them in the future.

\subsection{Other substructural type-systems}

\lang uses borrows and regions
which were initially developed in the context of linear and affine
typing for  imperative and
object-oriented
programming~\citep{DBLP:conf/popl/BoylandR05,DBLP:conf/pldi/GrossmanMJHWC02}.

Rust~\citep{rust} is the first
mainstream language that builds on the concepts of borrowing and ownership
to enable safe low-level programming.
\lang is inspired by Rust's borrowing system and transfers some of its
ideas  to a functional setting with type inference, garbage collection, and
an ML-like module system.
Everything is affine in Rust and marker traits like \lstinline/Copy/,
\lstinline/Send/, and \lstinline/Sync/ are used to modulate the characteristics 
of types. \lang relies on kinds to express substructural properties of
types and marker traits may be considered as implementing a
fine-grained kind structure on Rust types.
Rust's lifetime system is more explicit and more expressive than
\lang's regions.
While Rust provides partial lifetime inference, it
does not support full type inference.
Moreover, Rust programmers have full control over memory allocation
and memory layout of objects; they can pass arguments by value or by reference.
These features are crucial for the efficiency goals of Rust.
In contrast, \lang is garbage collected, assumes a uniform object
representation, and all arguments are passed by reference. This choice
forgoes numerous  
issues regarding interior mutability and algebraic data types.
In particular, it
allows us to easily nest mutable references inside objects, regardless
whether they are linear or unrestricted.

In Rust, programmers can implement their low-level abstractions by
using unsafe code fragments. Unsafe code is not typechecked with the
full force of the Rust type system, but with a watered down version
that ignores ownership and lifetimes. This loophole is needed to implement
datastructures like doubly-linked lists or advanced concurrency
abstractions. When unsafe code occurs as part of a 
function body, the Rust typechecker leaves the adherence of the unsafe
code to the function's type signature as a proof obligation to
the programmer. The RustBelt project
\cite{DBLP:journals/pacmpl/0002JKD18} provides a formal 
foundation for creating such proofs by exhibiting a framework for
semantic soundness of the Rust type system in terms of a low-level
core language that incorporates aspects
of concurrency (i.e., data-race freedom). Similar proof obligations
would be needed in \lang to check that an implementation of the module
types or the type of fold shown in \cref{motivation} matches the
semantics of the typings. We aim to develop a suitable framework for
this task for \lang. At present, the metatheory of \lang does not
cover concurrency.

\citet{DBLP:journals/corr/abs-1903-00982}
formalize Rust's ownership discipline from a source-level
perspective. Their approach is purely syntactic and is therefore not
able to reason about unsafe fragments of Rust code. However, their
flow-sensitive type discipline enables soundness proofs for
non-lexical lifetimes, which have been adopted in Rust, but cannot be
expressed in \lang at present.

Vault~\citep{DBLP:conf/pldi/DeLineF01}
and Plaid~\citep{DBLP:conf/oopsla/AldrichSSS09,DBLP:journals/toplas/GarciaTWA14}
leverage typestate and capabilities
to express rich properties in objects and protocols.
These systems are designed for either low-level or object-oriented
programming and do not immediately lend themselves to a more functional
style. While these systems are much more
powerful than \affe's, they require programmer annotations
and do not support inference.
It  would be interesting to extend \lang with limited
forms of typestate as a local, opt-in feature to provide
more expressivity at the cost of inference.

\subsection{Type-system features}
\lang relies on constrained types
to introduce the kind inequalities required for linear types.
\hmx~\citep{DBLP:journals/tapos/OderskySW99} 
allows us to use constrained types in an ML-like language with complete
type inference.
\hmx has been shown to be compatible with subtyping,
bounded quantification and existentials~\citep{DBLP:conf/icfp/Simonet03},
GADTs~\citep{DBLP:journals/toplas/SimonetP07},
and there exists a syntactic soundness proof~\citep{DBLP:journals/entcs/SkalkaP02}.
These results make us confident that the system developed in \lang
could be applied to larger and more complex languages such as OCaml
and the full range of features based on ad-hoc polymorphism.

\lang's  subtyping discipline is similar
to structural subtyping, where the only subtyping (or here, subkinding)
is at the leaves.
Such a discipline is known to be friendly to inference and has been used in many
contexts, including OCaml, and has been combined
with constraints~\citep{DBLP:journals/tapos/OderskySW99,DBLP:conf/sas/TrifonovS96}.
It also admits classical simplification rules
\citep{DBLP:conf/aplas/Simonet03,DBLP:conf/popl/PottierS02} which we partially use
in our constraint solving algorithm.
\affe's novelty is a kind language
sufficiently simple to make
all simplification rules complete, which allows us to keep type signatures simple.


\section{Conclusions}
\label{sec:conclusions}

\lang is an ML-like language extended with sound handling of linear and affine resources. Its main novel feature is the combination of full type inference and a practically useful notion of shared and exclusive borrowing of linear and affine resources.
Although the inferred types are much richer internally than plain ML types, most of that complexity can be hidden from user-level programmers. On the other hand, programmers of libraries dealing with resources have sufficient expressiveness at their fingertips to express many resource management schemes.

The main restriction of the current system is that the lifetime of borrows is determined by lexical scoping. Overcoming this restriction is subject of future work and will probably require extending the type system by some notion of effect, which is currently discussed in the OCaml community. 
Moreover, other systems rely on existential types for extra expressiveness. We chose not to include existentials to preserve complete type inference, but our design can be extended in this direction.
Finally, our matching construct is very simplistic.
Our implementation supports full algebraic data types and we believe it
can be further extended to support manipulating borrows of data-structures and
internal mutability.


\begin{acks}
  This material is based upon work supported by the German Research
  Council, \grantsponsor{sponsor01}{DFG}{https://www.dfg.de/}, project reference number
  \grantnum{sponsor01}{TH 665/11-1}. We are indebted to the anonymous reviewers
  for their thoughtful and constructive comments.
\end{acks}

\bibliography{biblio}

\clearpage
\appendix

\section{Further Examples}
\label{sec:extra-example}
\lstMakeShortInline[keepspaces,basicstyle=\small\ttfamily]@
\subsection{A session on linearity}
\label{sec:session-linearity}

Session typing \cite{Honda1993,DBLP:conf/esop/HondaVK98} is a type
discipline for checking protocols statically. A session type is
ascribed to a communication channel and describes
a sequence of interactions. For instance, the type
\lstinline{int!int!int?end} specifies a protocol
where the program must send two integers, receive an integer, and then
close the channel.
In this context, channels must be used linearly, because every use of
a channel ``consumes'' one interaction and  changes the type of the
channel. That is, after sending two integers on the channel,
the remaining channel has type \lstinline{int?end}.
Here are some standard type operators for session types \SType:
\begin{center}
  \begin{tabular}[t]{rl}
    $\tau ! \SType$ & Send a value of type $\tau$ then continue with $\SType$.\\
    $\tau ? \SType$& Receive a value of type $\tau$ then continue with $\SType$.\\
    $\SType \oplus \SType'$& Internal choice between protocols $\SType$ and $\SType'$.\\
    $\SType \operatorname{\&} \SType'$
                    & Offer a choice between protocols $\SType$ and $\SType'$.
  \end{tabular}
\end{center}

\citet{DBLP:journals/jfp/Padovani17}  has shown how to encode this
style of session typing in ML-like languages, but his implementation
downgrades linearity to a run-time check for affinity. Building on that
encoding we can provide a safe API in \lang that statically enforces
linear handling of channels:
\begin{lstlisting}
type 'S st : lin (*@\label{line:session1}*)
val receive: ('a ? 'S) st -> 'a * 'S st (*@\label{line:session2}*)
val send : ('a ! 'S) st -> 'a -{lin}> 'S st
val create : unit -> 'S st * (dual 'S) st
val close : end st -> unit
\end{lstlisting}

Line~\ref{line:session1} introduces a parameterized abstract
type \lstinline{st} which is linear as indicated
by its kind \lstinline{lin}. Its low-level implementation would wrap a
handle for a socket, for example. The \lstinline{receive}  operation
in Line~\ref{line:session2} takes a channel that is ready to receive a
value of type \lstinline{'a} and returns a pair of the value a the
channel at its residual type \lstinline{'S}. It does not matter
whether \lstinline{'a} is restricted to be linear, in fact
\lstinline{receive} is polymorphic in the kind of \lstinline{'a}.
This kind polymorphism is the
default if no constraints are specified.
The \lstinline{send} operation takes a linear channel and returns a
single-use function that takes a value of type \lstinline{'a} suitable
for sending and returns the channel with updated type.
The \lstinline{create} operation returns a pair of channel
endpoints. They follow dual communication protocols, where the
\lstinline{dual} operator swaps sending and receiving operations.
Finally, \lstinline{close} closes the channel.

In \cref{fig:sessiontype} we show how to use these primitives to
implement client and server for an addition service.
No linearity annotations are needed in the code, as all linearity
properties can be inferred from the linearity of the \texttt{st} type.

The inferred type of the server, \lstinline{add_service}, is
\lstinline{(int ! int ! int ? end) st -> unit}.
The client operates by sending two messages
and receiving the result.
This code is polymorphic in both argument and return types, so it could
be used with any binary operator.
\begin{figure}[!h]
  \begin{subfigure}[t]{0.5\linewidth}
\begin{lstlisting}
let add_service ep =
  let x, ep = receive ep in
  let y, ep = receive ep in
  let ep = send ep (x + y) in
  close ep
# add_service : (int ! int ! int ? end) st -> unit
\end{lstlisting}
    \vspace{-5pt}
    \caption{Addition server}
  \end{subfigure}
  \hfill
  \begin{subfigure}[t]{0.45\linewidth}
\begin{lstlisting}
let op_client ep x y =
  let ep = send ep x in
  let ep = send ep y in
  let result, ep = receive ep in
  close ep;
  result
# op_client : 
   ('a_1 ? 'a_2 ? 'b ! end) st -> 'a_1 -{lin}> 'a_2 -{lin}> 'b
\end{lstlisting}
    \vspace{-5pt}
    \caption{Binary operators client}
  \end{subfigure}
  \vspace{-10pt}
  \caption{Corresponding session type programs in \lang}
  \label{fig:sessiontype}
\end{figure}
Moreover, the \lstinline/op_client/ function can be partially applied
to a channel. Since the closure returned by such a partial application
captures the channel, it can only be used once.  This restriction is
reflected by the arrow of kind \lstinline{lin}, \lstinline/-{lin}>/,
which is the type of a single-use function.
The general form of
arrow types in \lang is \lstinline/-{k}>/, where \lstinline/kk/ is a
kind that restricts the number of uses of the function.  For
convenience, we shorten \lstinline/-{un}>/ to \lstinline/->/.  \lang
infers the
single-use property of the arrows without any user
annotation. In fact, the only difference between the
code presented here and Padovani's examples
\cite{DBLP:journals/jfp/Padovani17} is the kind annotation on the
type definition of \lstinline/st/.

To run client and server, we can create a channel and apply
\lstinline{add_service} to one end and \lstinline{op_client} to the other.
Failure to consume either channel endpoints (\lstinline/a/ or \lstinline/b/)
would result in a type error.
\begin{lstlisting}
let main () =
  let (a, b) = create () in
  fork add_service a;
  op_client b 1 2
# main : unit -> int
\end{lstlisting}



\subsection{Pool of linear resources}
\label{tuto:pool}

We present an interface and implementation of a pool of linear resources where the
extended scope of the region enforces proper use of the resources.

\cref{intf:pool} contains the interface of the @Pool@ module.
A pool is parameterized by its content. The kind of the pool
depends on the content: linear content implies
a linear pool while unrestricted content yields an unrestricted pool.
The functions @Pool.create@ and @Pool.consume@
build/destroy a pool given creators/destructors for the elements
of the pool.
The function @Pool.use@ is the workhorse of the API, which
borrows a resource from the pool to a callback.
It takes a shared borrow of a pool (to enable concurrent access) and a
callback function.
The callback receives a exclusive borrow of an arbitrary resource from the pool.
The typing of the callback ensures
that this borrow is neither captured nor returned by the function.

This encapsulation is implemented with a universally quantified \emph{kind index variable} $r$.
The signature prescribes the type @&!(aff_r1,'a_1)@ for the
exclusive borrow of the resource with an affine kind at region nesting $r+1$. The return
type of the callback is constrained to kind @'k_2 <= aff_r@
so that the callback certainly cannot return the borrowed argument.
In a specific use of @Pool.use@, the index $r$ gets unified
with the current nesting level of regions so that the region for the
callback effectively gets ``inserted'' into the lexical nesting at the callsite.
\cref{ex:pool} shows a simple example using the @Pool@ module.

The implementation in \cref{impl:pool} represents a bag of resources
using a concurrent queue with atomic add and remove operations.
The implementation of the @Pool.create@ and @Pool.consume@
functions is straightforward.
The function @Pool.use@ first draws
an element from the pool (or creates a fresh element),
passes it down to the callback function @f@, and returns
it to the pool afterwards.
For clarity,
we explicitly delimit the region in \cref{line:pool:region} to ensure that
the return value of @f &!o@ does not capture @&!o@.
In practice, the type checker inserts this region automatically.

\begin{figure}[tp]
  \centering
  \begin{subfigure}[t]{0.5\linewidth}
\begin{lstlisting}
type ('a:'k) pool : 'k
create : (unit -> 'a) -> 'a pool
consume : ('a -> unit) -> 'a pool -> unit
use : ('a_1:'k_1),('a_2:'k_2),('k_2 <= aff_r) =>
  &('a_1 pool) -> (&!(aff_r1,'a_1) -{lin}> 'a_2) -{'k_1}> 'a_2
\end{lstlisting}
    \vspace{-8pt}
    \caption{Signature}
    \label{intf:pool}

\begin{lstlisting}
(*Using the pool in queries.*)
let create_user pool name =
  Pool.use &pool (fun connection ->
    Db.insert "users" [("name", name)] connection)

let uri = "postgresql://localhost:5432"
let main users =
  (*Create a database connection pool.*)
  let pool = Pool.create (fun _ -> Db.connect uri) in
  List.parallel_iter (create_user &pool) users;
  Pool.consume (fun c -> Db.close c)
\end{lstlisting}
    \vspace{-8pt}
    \caption{Example of use}
    \label{ex:pool}
  \end{subfigure}
  \hfill
  \begin{subfigure}[t]{0.45\linewidth}
\begin{lstlisting}
type ('a:'k) pool : 'k =
  { spawn: unit -> 'a; queue: 'a CQueue.t }
let create spawn = 
  { spawn ; queue = CQueue.create () }
let consume f c = CQueue.iter f c.queue
let use { spawn ; queue } f =
  let o = match CQueue.pop &queue with
    | Some x -> x
    | None () -> spawn ()
  in
  let r = {| f &!o |} in(*@\label{line:pool:region}*)
  CQueue.push o &queue;
  r
\end{lstlisting}
    \vspace{-8pt}
    \caption{Implementation}
    \label{impl:pool}
  \end{subfigure}

  \caption{The \texttt{Pool} module}
  \label{fig:pool}
\end{figure}


\lstDeleteShortInline@

\section{Automatic region annotation}
\label{appendix:regionannot}

We now define our automatic region annotation
which is presented in \cref{regionannot}.
First, we extend the region annotation to $\region{S}{E}$ where $S$ is
a map from variables to borrow indicator $b$. This annotation, defined below, is equivalent to nested
region annotations for each individual variable.
\begin{align*}
  \region{\Sone x\BORROW;S}{e} &= \region{\Sone x\BORROW}{\region{S}{e}}& \region{\emptyset}{e} &= e\\
\end{align*}
\begin{figure*}[!tb]
  \centering
  \centering
\begin{tabular}
  {@{}>{$}c<{$}@{ $\oplus$ }
  >{$}c<{$}@{ $=$ }
  >{$}c<{$}@{ $,$ }
  >{$}c<{$}@{ $,$ }
  >{$}c<{$}l}
  {\Sone{x}{b}}
  &{\Cempty}
  &\Cempty&\Sone{x}{b}&\Cempty
  &AnnotRegion-Left\\
  {\Cempty}
  &{\Sone{x}{b}}
  &\Cempty&\Sone{x}{b}&\Cempty
  &AnnotRegion-Right\\
  {\Sone{x}{\IBORROW}}
  &{\Sone{x}{\IBORROW}}
  &\Cempty&\Sone{x}{\IBORROW}&\Cempty
  &AnnotRegion-Immut\\
  {\Sone{x}{\IBORROW}}
  &{\Sone{x}{\MBORROW}}
  &\Sone{x}{\IBORROW}&\Sone{x}{\MBORROW}&\Cempty
  &AnnotRegion-MutLeft\\
  {\Sone{x}{\MBORROW}}
  &{\Sone{x}{b}}
  &\Sone{x}{\MBORROW}&\Cempty&\Sone{x}{b}
  &AnnotRegion-MutRight
\end{tabular}

\hrulefill
\begin{mathpar}
  \inferrule
  { e = \borrow{x} \mid \reborrow{x}}
  { \Rannot{e}{e}{\Sone{x}{b}} }

  \inferrule{e = c\ |\ x}
  { \Rannot{e}{e}{\Sempty} }

  \inferrule
  { \forall i,\ \Rannot[n+1]{e_i}{e'_i}{B_i} \\
    \getBorrows{B_1}{B_2}{S_1,S,S_2}
  }
  { \Rannot{\app{e_1}{e_2}}{\app{\region[n+1]{S_1}{e'_1}}{\region[n+1]{S_2}{e'_2}}}{S} }

  \inferrule
  { \forall i,\ \Rannot[n+1]{e_i}{e'_i}{B_i} \\
    \getBorrows{B_1}{(B_2\Sdel{x})}{S_1,S,S_2} \\
    S'_2 = S_2\Sunion B_2\Sonly{x}
  }
  { \Rannot
    {\letin{x}{e_1}{e_2}}
    {\letin{x}{\region[n+1]{S_1}{e'_1}}{\region[n+1]{S'_2}{e'_2}}}{S} }

  \inferrule[Rewrite-Lam]
  { \Rannot[n+1]{e}{e'}{B} \\
    B_x = B\Sonly{x}
  }
  { \Rannot{\lam{x}{e}}{\lam{x}{\region[n+1]{B_x}{e'}}}{B\Sdel{x}} }

  \inferrule
  { \forall i,\ \Rannot[n+1]{e_i}{e'_i}{B_i} \\
    \getBorrows{B_1}{(B_2\Sdel{x,y})}{S_1,S,S_2} \\
    S'_2 = S_2\Sunion B_2\Sonly{x,y}
  }
  { \Rannot
    {\matchin{x,y}{e_1}{e_2}}
    {\matchin{x,y}{\region[n+1]{S_1}{e'_1}}{\region[n+1]{S'_2}{e'_2}}}{S} }

  \inferrule[Rewrite-Region]
  { \Rannot[n+1]{e}{e'}{B} }
  { \Rannot{\regionS{e}}{\region[n+1]{B}{e'}}{\Sempty} }

  \inferrule[Rewrite-Pair]
  { \forall i,\ \Rannot[n+1]{e_i}{e'_i}{B_i} \\
    \getBorrows{B_1}{B_2}{S_1,S,S_2}
  }
  { \Rannot
    {\introPair{e_1}{e_2}}
    {\introPair{\region[n+1]{S_1}{e'_1}}{\region[n+1]{S_2}{e'_2}}}
    {S} }

  \inferrule[AnnotRegion]{
    \forall i,\ \getBorrows{b_i}{b'_i}{S^l_i,S^m_i,S^r_i}
  }{
    \getBorrows{\Multi[i]{b}}{\Multi[i]{b'}}
    {\cup_i S^l_i, \cup_i S^m_i, \cup_i S^r_i}
  }

  \inferrule[Rewrite-Top]
  { \Rannot[1]{e}{e'}{S} }
  { \RannotT{e}{\region[1]{S}{e'}} }
\end{mathpar}


  \caption{Automatic region annotation --- $\RannotT{\inP{e}}{e'}$}
  \label{fig:region-annotation}
\end{figure*}
\Cref{fig:region-annotation} define a rewriting relation $\RannotT{e}{e'}$
which indicates that an optionally annotated term $e$ can be rewritten
as a fully annotated term $e'$.
Through the rule \textsc{Rewrite-Top}, this is defined
in term of an inductively defined relation
$\Rannot{e}{e'}{S}$ where $n$ is the current nesting and $S$ is a set of
variable that are not yet enclosed in a region.
The base cases are constants, variables and borrows.
The general idea is to start from the leaves of the syntax tree, create a
region for each borrow, and enlarge the region as much as possible.
This is implemented by a depth-first walk of the syntax
tree which collects each variable that has a corresponding borrow.
At each step, it rewrites the inner subterms,
consider which borrow must be enclosed by a region now, and
return the others for later enclosing. Binders force immediate
enclosing of the bound variables, as demonstrated in rule \textsc{Rewrite-Lam}.
For nodes with multiple children, we
use a scope merge operator to decide if regions should be placed and where.
This is shown in rule \textsc{Rewrite-Pair}.
The merge operator, written $\getBorrows{B_l}{B_r}{(S_l,S,S_r)}$, takes
the sets $B_l$ and $B_r$ returned by rewriting the subterms
and returns three sets: $S_l$ and $S_r$ indicates the variables
that should be immediately enclosed by a region on the left and right
subterms and $S$ indicates the set of the yet-to-be-enclosed variables.
As an example, the rule \textsc{AnnotRegion-MutLeft} is applied
when there is an shared borrow and a exclusive borrow. In that case, a
region is created to enclose the shared borrow, while the exclusive
borrow is left to be closed later. This is coherent with the rules
for environment splitting and suspended bindings from \cref{sdtyping}.
Explicitly annotated regions are handled specially through
rule \textsc{Rewrite-Region}. In that case, we assume that all inner
borrows should be enclosed immediately.

\section{Constraints}
\label{appendix:constraints}

We place our constraint system in a more general setting.
We define the constraint solver in terms of an arbitrary commutative bounded
lattice $(\mathcal L, \lk_\Lat)$, i.e.,
a lattice which has a minimal and a maximal element ($l^\top$ and $l^\bot$)
and where meet and joins are commutative.
We write lattice elements as $l$ and $\glb_i l_i$ (resp. $\lub_i l_i$)
for the greatest lower bound (resp. least upper bound) in $\mathcal L$.
The lattice for \lang (see \cref{sdtyping}) is a bounded lattice with
$l^\top = \klin_\infty$ and $l^\bot = \kun_0$.


\begin{figure}[tb]
  \centering
  \begin{align*}
    C &::= \Cleq{\tau_1}{\tau_2}
        \mid \Cleq{k_1}{k_2}
        \mid C_1 \Cand C_2
        \mid \Cproj{\tvar}{C}
        \mid \Cproj{\kvar}{C}
  \end{align*}
  \caption{The constraint language}
  \label{grammar:constraint}
  \input{infer/entails}
  \caption{Base entailment rules -- $\entail{\inP{C}}{\inP{D}}$ }
  \label{rules:entail}
\end{figure}

Let $\CL$ be the set of constraints in such a lattice $\mathcal L$.
The full grammar of constraints is shown in \cref{grammar:constraint}.
Constraints are made of kind inequalities, conjunctions and
projections along with type unification
constraints. Since types might contain kinds (for instance, on the arrows),
type unification is oriented and written as $\leq$.
For simplicity, we consider all type constructors
invariant in their parameters
and define $\Ceq{\tau}{\tau'}$ as $\Cleq{\tau}{\tau'} \wedge \Cleq{\tau'}{\tau}$.

Entailment is denoted by $\entail{C}{D}$,
where $D$ is a consequence of the constraints $C$.
We say that $C$ and $D$ are equivalent, $C \equivC D$,
when $\entail{C}{D}$ and $\entail{D}{C}$.
%

We directly reuse the following definitions from \hmx.

\begin{property}[Cylindric constraint system]
  A cylindric constraint system is a constraint system
  such that, for any constraint $C$:
\begin{align*}
  \entail{C&}{\Cproj{x}{C}}
  &\entail{C}{D} &\implies \entail{\Cproj{x}C}{\Cproj{x}D}\\
  \Cproj{x}{(C\wedge \Cproj{x}D)}&\equivC \Cproj{x}{C} \wedge \Cproj{x}D
  & \Cproj{x}{\Cproj{y}D}&\equivC \Cproj{y}{\Cproj{x}D}
\end{align*}
\end{property}

\begin{property}[Term rewriting system]
A term rewriting system is a system where, for every
types $\tau$,$\tau'$, there exists an equality predicates $\Ceq{\tau}{\tau'}$
which is symmetric, reflexive, transitive, stable under substitution and such that,
for any predicate $P$:
\begin{align*}
  &\entail{\Ceq{x}{y}\wedge \Cproj{x}C \wedge \Ceq{x}{y}}{C}\\
  &\subst{x}{\tau}{P} \equivC \Cproj{x}P \wedge \Ceq{x}{\tau}
                        \text{ where } x\notin\fv{\tau}
\end{align*}
\end{property}

\begin{definition}[Constraint system with lattice]
$\CL$ is defined as the smallest cylindric term constraint system that
satisfies the axiom shown in \cref{rules:entail}.
\end{definition}


We define the set of solved formed
$\mathcal S$ as the quotient set of $\CL$ by $\equivC$.
We will show later that such constraints are in fact only composed of
kind inequalities, and thus correspond to the syntactic constraints
used in type and kind schemes.
We now define our normalization procedure $\normalize{C_0}{\unif_0}$, where
$C_0\in \CL$ is a set of constraints and $\unif_0$ is a substitution.
It returns a constraint $C \in \mathcal S$ in
solved form and a unifier $\unif$.
The main idea of the algorithm is to first remove all the type equalities
by using regular Herbrand unification. After that, we only have
a set of inequalities among kinds, which we can consider as a relation.
We can then saturate the relation,
unify all kinds that are in the same equivalence classes to obtain
a most general unifier on kind variables,
remove all existentially quantified variables and
then minimize back the relation and apply various
simplification rules to make the resulting type easier to understand to users.

More precisely, we apply the following steps:
\begin{enumerate}
\item Solve all type equality constraints through Herbrand unification and
  gather all existential quantifications at the front of the constraint.
  We obtain a constraint $C^k = \exists \Multi\kvar,\ \Cleq{k_j}{k'_j}_j$ and
  a substitution $\unif_\tau$.

  We write $\mathcal R$ for the relation $\Cleq{k_j}{k'_j}_j$,
  $\mathcal G$ the underlying directed graph and $V$ its vertices.

\item Saturate the lattice equalities in $\mathcal R$.

  More precisely, for each kind variable $\kvar \in V$,
  for each constant $l_i$ (resp. $l_j$) such that
  there is a path from $l_i$ to $\kvar$ (resp. from $\kvar$ to $l_j$) in $\mathcal G$,
  add an edge from $\lub l_i$ to $\kvar$
  (resp. from $\kvar$ to $\glb l_j$).
  This step is well defined since $\mathcal L$ is a bounded lattice
  and $\lub\emptyset$ and $\glb\emptyset$ are well defined.

  We also complement $\mathcal R$ with $(\leq)$ by adding an edge
  between related constants.
\item
  At this point, we can easily check for satisfiability: A constraint
  is satisfiable (in the given environment) if and only if,
  for any constants $l_1$ and $l_2$ such that
  there is a path from $l_1$ to $l_2$ in $\mathcal G$, then $l_1\lk_\Lat l_2$.
  If this is not the case, we return \textbf{fail}.

\item For each strongly connected component in $\mathcal G$, unify all its vertices and replace it by a representative.
  We write $\unif_k$ for the substitution that replaces a kind variable by
  its representative.
  The representative of a strongly connected component $g$ can be determined as follows:
  \begin{itemize}
  \item If $g$ does not contain any constant, then the representative
    is a fresh kind variable.
  \item If $g$ contains exactly one constant, it is the representative.
  \item Otherwise, the initial constraint $C_0$ is not satisfiable.
  \end{itemize}
  Note that this step will also detect all unsatisfiable constraints.
\item Take the transitive closure of $\mathcal R$.
\item Remove all the vertices corresponding to the kind variables $\Multi\kvar$
  that are existentially quantified in $C^k$.
\item Take the transitive reduction of $\mathcal R$.
\item Remove the extremums of $\mathcal L$ and the edges of $(\leq)$
  from $\mathcal R$.
\item Return $C = \left\{ k \leq k' \mid k \operatorname{\mathcal R}k' \right\}$
  and $\unif =  \unif_\tau \meet \unif_k$.
\end{enumerate}

An example of this algorithm in action is shown in \cref{solving:example}.
Our algorithm is complete, computes principal normal forms,
and already simplifies constraints significantly
(thanks to steps 6, 7 and 8).
It can be extended with further simplification phases.
In particular, our implementation and all the signatures presented in
\cref{motivation} use a variance-based simplification
where all covariant (resp. contravariant) variables are replaced by their
lower (resp. upper) bounds.
All the simplification mechanisms presented
here, including the variance-based one, are complete.
It is also possible to add ``best-effort'' simplification
rules which help reduce the size of inferred signatures even further
\citep{DBLP:conf/aplas/Simonet03}.



\subsection{Principal constraint system}

We now prove that $\CL$ supports all the properties necessary for
principal type inference, as defined by \hmx.
We first prove that constraint solving
does compute normal forms, and that such normal forms are unique.

\begin{lemma}[Principal normal form]
  \label{lemma:normalform}
  Given a constraint $D\in\CL$, a substitution $\phi$ and
  $(C,\unif) = \normalize{D}{\phi}$,
  then $\phi\leq\unif$,
  $C \equivC \unif D$ and
  $\unif C = C$.
\end{lemma}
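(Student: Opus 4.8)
The plan is to prove the three conclusions together by following the nine steps of $\operatorname{normalize}$ in order while carrying a loop invariant. Put $\unif_0=\phi$; let $\unif_i$ be the substitution in force after step $i$ (it changes only at steps 1 and 4), and let $D_i$ be the constraint after step $i$, read together with its leading $\exists\Multi\kvar$ prefix so that $D_i\in\CL$. The invariant is $\phi\le\unif_i$ together with $D_i\equivC\unif_i D$. Since the hypothesis is that $\normalize{D}{\phi}$ actually returns a pair, the satisfiability test (step 3) has passed and may be ignored; at termination $\unif=\unif_\tau\meet\unif_k$ and $C=D_9$, so the invariant at $i=9$ is exactly $\phi\le\unif$ and $C\equivC\unif D$, and it remains only to prove $\unif C=C$.

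Step 1 is ordinary first-order unification restricted to the type-equality constraints of $D$; I would quote the standard fact that it yields an idempotent most general unifier $\unif_\tau$ with $\phi\le\unif_\tau$, and that, after applying $\unif_\tau$ and hoisting the freshly generated existentials to the front via the cylindric and term-rewriting axioms of $\CL$, the residual $C^k$ — a conjunction of kind inequalities under a prefix $\exists\Multi\kvar$ — satisfies $C^k\equivC\unif_\tau D$; this is the invariant at $i=1$. Steps 2, 5, 7, 8 rewrite only the inequality relation $\mathcal R$ and leave the substitution fixed, and each preserves $\equivC$, so the invariant is inherited verbatim. For saturation (step 2) this rests on an auxiliary lemma: a $\mathcal G$-path from a constant $l_i$ to a node witnesses the corresponding inequality by transitivity of $\le$, so the constraint already entails $\Cleq{l_i}{\kvar}$ for each such $l_i$, hence (since $\lub_i l_i$ is their least upper bound in $\Lat$, dually $\glb_j l_j$ the greatest lower bound, and the base rule of \cref{rules:entail} validates all true orderings among constants) it entails the saturating edges $\Cleq{\lub_i l_i}{\kvar}$ and $\Cleq{\kvar}{\glb_j l_j}$ as well as the added constant–constant edges; adjoining entailed constraints gives an equivalent constraint. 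Steps 5 and 7 only add, resp. remove, edges re-derivable by transitivity, and step 8 deletes only the vacuous constraints $\Cleq{l^\bot}{k}$, $\Cleq{k}{l^\top}$, and true constant orderings; so $D_i\equivC D_{i-1}$ throughout.

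The two delicate steps are 4 and 6. For step 4, each strongly connected component of $\mathcal G$ consists of kind variables (and at most one constant, as the procedure did not fail) pairwise related by $\le$ in both directions, so the constraint entails $\Ceq{\kvar}{\kvar'}$ for any two vertices $\kvar,\kvar'$ of one component. Using the term-rewriting axiom $\subst{x}{\tau}{P}\equivC\Cproj{x}{P}\wedge\Ceq{x}{\tau}$ together with stability of $\Ceq{\cdot}{\cdot}$ under substitution, I would show that replacing every vertex by its component representative — i.e. applying $\unif_k$ — yields an equivalent constraint, so $D_4\equivC\unif_k D_3$; since $\unif_k$ only renames kind variables to representatives and never touches the bindings of $\phi$ or $\unif_\tau$, we get $\phi\le\unif_\tau\meet\unif_k=\unif_4$ and $D_4\equivC\unif_k D_3\equivC\unif_k\unif_\tau D=\unif_4 D$. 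For step 6, deleting the vertices of $\Multi\kvar$ and their incident edges is exactly the cylindric projection $\Cproj{\Multi\kvar}{\cdot}$: saturation in step 2 was arranged precisely so that every inequality between surviving nodes that factors through a $\Multi\kvar$-vertex already appears as a direct edge, whence $\Cproj{\Multi\kvar}{D_5}\equivC D_6$, while the cylindric axioms give $\Cproj{\Multi\kvar}{D_5}\equivC D_5$ because the $\Multi\kvar$ are exactly the $\exists$-bound variables of $C^k$ and occur nowhere else. Both steps therefore keep the invariant, as illustrated by the worked run in \cref{solving:example}.

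Finally, for $\unif C=C$: the analysis of the solved-form set $\mathcal S$ shows that $C=D_9$ is a conjunction of kind inequalities only, so the type-variable part of $\unif_\tau$ acts trivially on it; the kind variables surviving in $C$ are exactly the chosen component representatives, which lie outside $\Dom{\unif_\tau}$ (Herbrand unification has already been applied) and are fixed points of the idempotent renaming $\unif_k$; hence $\unif=\unif_\tau\meet\unif_k$ restricts to the identity on $\fv{C}$, i.e. $\unif C=C$. The main obstacle I anticipate is the bookkeeping around steps 4 and 6 — justifying through the cylindric and term-rewriting axioms that collapsing components commutes with the accumulated substitution and that erasing the existential vertices after saturation is literally projection — together with the saturation-soundness lemma; the unification facts of step 1 and the final idempotence check are routine.
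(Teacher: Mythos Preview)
Your approach is essentially the paper's: walk the nine steps of $\operatorname{normalize}$ and check that each preserves the invariant $\phi\le\unif_i$, $D_i\equivC\unif_i D$, and (at the end) $\unif C=C$. The paper's version is terser --- it partitions $\phi$ into a type-variable part $\phi_\tau$ and a kind-variable part $\phi_k$, handles steps~1, 2, and~4 individually, and then dispatches steps~5--9 in one sentence (``all preserve the free variables and the equivalence of constraints''). You give more structure, invoking the cylindric and term-rewriting axioms explicitly; that is a reasonable expansion.

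There is one concrete mistake. In your treatment of step~6 you write that ``saturation in step~2 was arranged precisely so that every inequality between surviving nodes that factors through a $\Multi\kvar$-vertex already appears as a direct edge.'' Step~2 does not do that: it is lattice saturation, adding only edges of the form $\lub_i l_i \to \kvar$ and $\kvar \to \glb_j l_j$ between a kind variable and a \emph{constant}. The step that guarantees every inequality between surviving vertices derivable via an existential vertex is already a direct edge is step~5 (transitive closure). As you have it, a chain $\kvar_1 \le \kvar \le \kvar_2$ with $\kvar$ existential and $\kvar_1,\kvar_2$ surviving kind variables produces no new edges in step~2, so deleting $\kvar$ after step~2 alone would lose $\kvar_1 \le \kvar_2$. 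Replace ``step~2'' by ``step~5'' in that sentence and your projection argument for $\Cproj{\Multi\kvar}{D_5}\equivC D_6$ goes through; the rest of the plan is sound.
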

\begin{proof}
  Let us partition $\phi$ into a part which affects type variables,
  $\phi_\tau$, and a part which affects kind variables, $\phi_k$.

  We write $(C^k,\unif_\tau)$ for the result of
  the modified Herbrand unification on $(D,\phi)$ in step (1).
  Herbrand unification computes the most general
  unifier. Our modified Herbrand unification only output additional
  kind constraints for kind on the arrows and does not change
  the result of the unification. Thus, we have
  $\phi_\tau\leq\unif_\tau$,
  $C^k \equivC \unif_\tau D$ and
  $\unif_\tau C^k = C^k$.

  Let $C^{k+}$ be the result after step (2), we trivially have that
  $\fv{C^{k+}} = \fv{C^k}$ and that $C^{k+} \equivC C^k$.

  Let $C^{A}$ and $\unif_k$ be the results after step (4).
  By definition, we have $\unif_k C^{k+} \equivC C^{A}$ and
  $\unif_k C^{A} = C^{A}$. Since $\phi_k$ has already be applied to $C$ before
  unifying the strongly connected components,
  we have that $\phi_k\leq\unif_k$.

  Let $\unif = \unif_\tau \meet \unif_k$. Since $\unif_\tau$ and $\unif_k$
  have disjoint supports,
  we have $C^{A} = \unif_\tau C^{A} \equivC \unif C^{k+} \equivC \unif D$
  and $\unif C^{A} = C^{A}$.
  Furthermore, $\phi_\tau \meet \phi_k \leq \unif_\tau \meet \unif_k$.

  Steps (5) to (9) all preserve the free variables and the equivalence
  of constraints, which concludes.
\end{proof}

\begin{lemma}[Uniqueness]
  Given $(C_1,\unif_1)$ and $(C_2,\unif_2)$ such that
  $\unif_1 C_1 \equivC \unif_2 C_2$,\\
  then
  $\normalize{C_1}{\unif_1}$ and $\normalize{C_2}{\unif_2}$
  are identical up to $\alpha$-renaming.
\end{lemma}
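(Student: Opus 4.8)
The plan is to show that $\normalize{\cdot}{\cdot}$ computes a \emph{canonical} representative of the semantic content of its input, so that two equivalent inputs are forced to produce the same output up to renaming. Write $(C_i^{\ast},\unif_i^{\ast}) = \normalize{C_i}{\unif_i}$. By \cref{lemma:normalform} we have $\unif_i \le \unif_i^{\ast}$, $C_i^{\ast}\equivC\unif_i^{\ast}C_i$, $\unif_i^{\ast}C_i^{\ast}=C_i^{\ast}$, and $C_i^{\ast}\in\mathcal S$ is in solved form, i.e.\ a conjunction of kind inequalities over the representative variables. The proof then splits into two parts. First, the type-equality phase (step~(1) of the procedure) is deterministic up to renaming given the equivalence class: the most general unifier produced by Herbrand unification, the residual relation $\mathcal R_i$ on kind variables, and the existentially quantified kind variables obtained after gathering the $\exists$-prefix to the front all agree on the two sides up to one common renaming. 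Second, every subsequent step -- saturation~(2), the satisfiability check~(3), collapse of strongly connected components~(4), transitive closure~(5), elimination of the existential vertices~(6), transitive reduction~(7), and removal of the extremums and of the edges relating constants~(8) -- is a deterministic function of the current graph, the only freedom being the choice of a fresh variable as representative of a constant-free component, which we fix by a canonical enumeration. Granting both, determinism-up-to-renaming propagates from the output of step~(1) all the way down to $(C_i^{\ast},\unif_i^{\ast})$, giving $C_1^{\ast}=_{\alpha}C_2^{\ast}$ and $\unif_1^{\ast}=_{\alpha}\unif_2^{\ast}$.

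For the first part: the input substitutions $\unif_i$ may themselves perform unification, but any unification beyond a renaming is \emph{forced} by $\unif_i C_i$, and by uniqueness of most general unifiers (standard Herbrand theory) the combined unifier $\unif_i^{\ast}$ -- Herbrand's mgu composed with the component-collapse $\unif_k$ of step~(4) -- is determined up to renaming by $\unif_i C_i$, which is the \emph{same} constraint on both sides up to $\equivC$ by hypothesis. It therefore suffices to observe that $\equivC$-equivalent constraints of $\CL$ induce the same equational theory on types once their existential kind prefixes are projected out, hence the same mgu up to renaming, and that after this mgu is applied they carry the same kind-inequality content under the same $\exists$-prefix; the cylindric laws for $\Cproj{\kvar}{C}$ together with the normal-form convention of moving all $\exists$ to the front make this presentation canonical. (Since outputs are restricted to the free variables of the ambient environment by the inference rules, the equality of $\unif_1^{\ast}$ and $\unif_2^{\ast}$ is read on that relevant domain.)

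The substance, and the main obstacle, is the claim that the \emph{relation} content is genuinely pinned down: that the saturated, transitively reduced relation over the visible kind variables computed by steps~(2)--(8) is a \emph{complete} invariant of the $\equivC$-class, so that two syntactically distinct such relations can never be equivalent. I would prove this semantically, using that $\Lat$ is a nontrivial bounded lattice. Soundness and completeness of the entailment rules of \cref{rules:entail} with respect to lattice-valued assignments reduce $C_1^{\ast}\equivC C_2^{\ast}$ to equality of their sets of models; and for transitively reduced relations over a common variable set, any missing or extra edge -- or a different existential prefix -- can be separated by an assignment into $\Lat$, precisely because saturation in step~(2) and removal of the extremums in step~(8) guarantee that no remaining edge is logically implied and that every vertex surviving step~(6) is genuinely free. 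I expect the fiddly bookkeeping to lie in matching up, on the two sides, the variables eliminated in step~(6), i.e.\ in the interaction of the pre-applied $\unif_i$ with the existential kind variables; verifying that each individual step is a congruence-preserving, deterministic graph operation should be routine.
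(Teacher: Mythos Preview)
Your proposal has a structural gap that the paper's proof avoids. In paragraph~1 you claim that ``determinism-up-to-renaming propagates from the output of step~(1) all the way down,'' and paragraph~2 argues that after Herbrand unification both sides ``carry the same kind-inequality content.'' But this is only true if ``same'' means $\equivC$-equivalent, not syntactically identical: for instance $C_1=\Cleq{\kvar_a}{\kvar_b}\Cand\Cleq{\kvar_b}{\kvar_c}$ and $C_2=C_1\Cand\Cleq{\kvar_a}{\kvar_c}$ are equivalent yet yield different graphs after step~(1). Since the graphs entering step~(2) may differ, the fact that each of steps~(2)--(8) is a deterministic graph transformation buys you nothing; deterministic functions applied to \emph{different} inputs give different outputs. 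Your semantic argument in paragraph~3 is where the real work has to happen, but as written it is detached from the flawed propagation structure and also leans on a soundness/completeness theorem for $\vdash_e$ with respect to lattice-valued models that the paper never states or proves.

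The paper instead tracks only \emph{equivalence} through the steps and pins down \emph{identity} at a specific intermediate point. After saturation (step~2), SCC collapse (step~4), and transitive closure (step~5), it establishes a syntactic saturation property $P(C_a)$: every non-trivial inequality entailed by $C_a$ already appears as an edge in its relation $\mathcal R_a$. This holds because the only applicable entailment rules at that stage are transitivity and lattice inequalities, both of which have been closed under. It then checks that $P$ survives removal of the existential vertices (step~6). With $P$ in hand and no existentials left, any edge of $\mathcal R''_1$ is entailed by the equivalent $C''_2$, hence (if non-trivial) lies in $\mathcal R''_2$, and (if trivial) lies in $(\lk)\subseteq\mathcal R''_2$; by symmetry $\mathcal R''_1=\mathcal R''_2$. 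Uniqueness of the transitive reduction of a DAG then finishes steps~(7)--(8). Your semantic route could be made to work, but you would need to prove the model-theoretic completeness you invoke and then argue separation for transitively \emph{reduced} relations with constants; the paper's proof-theoretic saturation argument sidesteps both.
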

\begin{proof}
  In \cref{lemma:normalform}, we have showed that all the steps of the
  normalization procedure preserve equivalence.
  Since $\unif_1 C_1 \equivC \unif_2 C_2$, equivalence between
  the two results of the normalization procedures is preserved for all steps.

  We write $P(C_a)$ if for all $C = (k, k)'$
  such that $\entail{C_a}{C}$ and $\nvdash_eC$,
  we have $C \in {\mathcal R}_a$.

  Let us write $C_1'$ and $C_2'$ for the constraints after step (4). $P(C_1')$ and
  $P(C_2')$ hold. Indeed, since $C_1'$ and $C_2'$ are only composed
  of existential quantifications and kind inequalities, the only rules
  that applies are transitivity and lattice inequalities.
  After step (2) and (5), the associated relations are fully saturated for these
  two rules, hence all inequalities that can be deduced from $C_a'$ are already
  present in the relation.

  The property $P$ is preserved by step (6) since we only remove
  inequalities that involve existentially quantified variables. Such
  inequalities could not be picked in $P$.

  Let us write $C''_a$ for $a\in\{1,2\}$ the constraints after step (5).
  Since there are no more existential variables,
  we have $C''_a=(k_i,k'_i)_i=\mathcal R''_a$.
  For any $C=(k,k')$ such that $\entail{}{C}$ and $\entail{C''_a}{C}$,
  then $C \in (\leq) \subset {\mathcal R}''_a$.
  Indeed, the only trivial inequalities in our system are equalities of the form
  $(\kvar, \kvar)$, which were removed in step (4) and the lattice inequalities.

  Let us consider $C = (k,k') \in \mathcal R''_1$.
  Since $C''_1\equivC C''_2$, we have $\entail{C''_2}{C}$.
  If $\nvdash_e{C}$, by $P(C''_2)$ we have that
  $C\in R''_2$.
  If $\entail{}{C}$, then $C\in (\leq) \subset R''_2$.
  We conclude that $R''_1 \subset R''_2$.
  By symmetry, $R''_1 = R''_2$ and $C''_1 = C''_2$.

  This equality is preserved by step (7) and (8)
  since the transitive reduction of a directed acyclic graph is unique,
  which concludes.
\end{proof}

We can now prove all the necessary high level properties.

\begin{lemma}
  For all $C\in\mathcal S$, $\entail{C}{x = x}$ implies
  $\entail{}{x = x}$.
\end{lemma}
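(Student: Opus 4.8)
The plan is to pin down the exact shape of an arbitrary solved form and then observe that such a constraint has no means of deriving a non-trivial equality; read as $\entail{C}{\Ceq{x}{x'}}\Rightarrow\entail{}{\Ceq{x}{x'}}$, the statement then amounts to showing that $x$ and $x'$ must already coincide. Since $\mathcal{S}$ is the quotient of $\CL$ by $\equivC$ and, by \cref{lemma:normalform}, $\operatorname{normalize}$ computes principal normal forms, every $C\in\mathcal{S}$ is $\equivC$-equivalent to an output of $\normalize{D}{\phi}$, so I would work with that canonical representative. By the construction of $\operatorname{normalize}$ it is a conjunction of atomic kind inequalities $\Cleq{k}{k'}$ whose underlying relation $\mathcal{R}$ has been lattice-saturated (step~2), had its strongly connected components collapsed to single representatives (step~4), been closed under transitivity (step~5), pruned of the existentially quantified kind variables (step~6), transitively reduced (step~7), and stripped of the extremums of $\Lat$ together with all constant–constant edges of $(\lk_\Lat)$ (step~8); in particular $\mathcal{R}$ is acyclic, carries no reflexive edge, and — since Herbrand unification in step~1 has already removed all type equalities — $C$ contains only kind-level content. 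This observation also discharges the earlier side remark that solved forms consist solely of kind inequalities.

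Next I would show, by induction on the structure of $x$, that $\entail{C}{\Ceq{x}{x'}}$ forces $x=x'$. If $x$ and $x'$ are types, the entailment rules of \cref{rules:entail} are directed by the outermost type constructor, so the hypothesis forces matching heads and reduces to componentwise obligations: equalities $\Ceq{\tvar}{\beta}$ between type-variable leaves, for which no rule fires unless $\tvar=\beta$ because $C$ has no type equalities, and equalities $\Ceq{k}{k'}$ between kind annotations, which the induction handles. The essential case is therefore $\entail{C}{\Ceq{k}{k'}}$, i.e.\ $\entail{C}{\Cleq{k}{k'}}$ and $\entail{C}{\Cleq{k'}{k}}$. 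Because $\mathcal{R}$ is lattice-saturated and its transitive closure has been taken, entailment of an atomic inequality between two kinds is witnessed by a path in $\mathcal{R}$, so having paths in both directions would place $k$ and $k'$ in one strongly connected component: if they are distinct kind variables this contradicts the collapse of step~4; if one is a variable $\kvar$ and the other a constant $l$, then $\kvar$ is pinned to $l$ by $C$ (hence $\kvar\in\fv C$), and steps~4 and~8 would have replaced $\kvar$ by $l$, contradicting that $\kvar$ still occurs in $C$; and if both are constants, $l\lk_\Lat l'$ and $l'\lk_\Lat l$ give $l=l'$ by antisymmetry. Hence $k=k'$, so $\entail{}{\Ceq{k}{k'}}$ holds by reflexivity of the equality predicate, and propagating this up the structural induction yields $x=x'$ and thus $\entail{}{\Ceq{x}{x'}}$. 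The literal statement with $x$ on both sides is then immediate, and the strengthened form is exactly what later feeds the proof that $\CL$ is a regular constraint system, via normalization of an arbitrary constraint to a solved form.

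The delicate point I expect to dwell on is the claim that, for a lattice-saturated and transitively closed $\mathcal{R}$, semantic entailment of an atomic kind inequality coincides with reachability in $\mathcal{R}$ — i.e.\ that steps~2–3 really record every entailed atomic bound, including those produced by combining meets and joins of several constant bounds. Making this precise is where care is needed: I would use that $\Lat$ is a bounded lattice, so that every finite set of constants has a least upper bound and a greatest lower bound, and that the satisfiability check of step~3 together with the removal of the extremum nodes in step~8 rules out the degenerate cases — a path linking two constants against $\lk_\Lat$, or a variable sandwiched between two equal constant bounds — that could otherwise slip past the SCC collapse. The remaining ingredients, the structural induction over types and the contradiction against step~4, are routine bookkeeping.
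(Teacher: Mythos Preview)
Your approach is essentially the paper's, only fleshed out considerably: the paper's proof is a two-line sketch observing that a solved form $C$ is a conjunction $\Multi{\Cleq{k}{k'}}$ whose underlying relation is acyclic, and concludes directly that no non-trivial kind or type equality can be derived from it. Your structural induction on types, the SCC-collapse contradiction for kinds, and the discussion of whether entailment coincides with graph reachability are all details the paper leaves implicit; you are not taking a different route, you are filling in the argument the paper gestures at. Your reading of the statement as $\entail{C}{\Ceq{x}{x'}}\Rightarrow\entail{}{\Ceq{x}{x'}}$ is also the intended one, as the paper's own proof (``we cannot deduce \dots\ equalities from $C$'') makes clear.
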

\begin{proof}
  By definition of $\operatorname{normalize}$, We have $C = \Multi{\Cleq{k}{k'}}$
  such that the underlying relation has no cycles.
  Thus, we can not deduce neither kind nor type equalities from $C$.
\end{proof}

\begin{property}[Regular constraint system]
  $\CL$ is regular, ie, for $x, x'$ two types or kinds,
  $\entail{}{\Ceq{x}{x'}}$ implies
  $\fv{x} = \fv{x'}$
\end{property}
\begin{proof}
  The only equalities possibles are between variables (via symmetry) or
  between constants.
\end{proof}

Finally, we can conclude with all the properties we need for
\hmx:

\begin{theorem}[Principal constraints]
  $\CL$ has the principal constraint property,\\
  $\operatorname{normalize}$ computes principal normal forms for $\CL$
  and $\CL$ is regular.
\end{theorem}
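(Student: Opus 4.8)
The plan is to assemble the three requirements that \hmx{} imposes on a constraint system for principal type inference and to discharge each of them from results already established in this section. Recall that \hmx{} demands that $X$ be a cylindric term constraint system, that it have the \emph{principal constraints property}, and that it be \emph{regular}. The first requirement holds by construction: $\CL$ was \emph{defined} as the smallest cylindric term constraint system satisfying the entailment axioms of \cref{rules:entail}, so the cylindric and term-rewriting identities are immediate from that definition.

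For the principal constraints property, the key observation is that in the \hmx{} formulation this property is exactly the statement that $\operatorname{normalize}$ computes principal normal forms, together with the fact that such a normal form is canonical. The existence half is \cref{lemma:normalform}: given $(D,\phi)$ and $(C,\unif) = \normalize{D}{\phi}$, we have $\phi\leq\unif$ (so $\unif$ is at least as general as the incoming substitution), $C \equivC \unif D$ (so the computed constraint is equivalent to the normalized input), and $\unif C = C$ (so $C$ is in solved form with respect to $\unif$). The canonicity half is the Uniqueness lemma: any two normalizations of $\equivC$-equivalent inputs coincide up to $\alpha$-renaming, which is what makes the computed normal form \emph{principal} rather than merely \emph{some} normal form. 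Chaining these two facts shows that for every satisfiable constraint there is a most general solved form, reachable by $\operatorname{normalize}$; for unsatisfiable constraints, steps (3)--(4) of the procedure (a strongly connected component containing two distinct constants, or a path between incomparable constants) report failure. This is precisely the principal constraints property.

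Regularity is the Property already proved above: if $\entail{}{\Ceq{x}{x'}}$ for types or kinds $x,x'$, then $\fv{x}=\fv{x'}$, since the only provable equalities are between variables (by symmetry) or between constants. Collecting the three items --- cylindric term system by construction, principal constraints from \cref{lemma:normalform} and the Uniqueness lemma, regularity from the Property --- yields the theorem, and hence $HM(\CL)$ enjoys principal type inference.

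The main obstacle I anticipate is not a new calculation but the bookkeeping of matching our algorithmic notion of ``solved form'' (the quotient $\mathcal S$ of $\CL$ by $\equivC$, which we must argue consists only of kind inequalities and therefore coincides with the syntactic constraints appearing in type and kind schemes) with the model-theoretic notion of principality that \hmx{} uses, i.e.\ checking that syntactic entailment $\operatorname{\vdash_e}$ agrees with semantic entailment over the lattice $\mathcal L$. Once that correspondence is spelled out, every remaining step is a direct citation of the preceding lemmas.
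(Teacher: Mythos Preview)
Your proposal is correct and matches the paper's approach: the theorem is stated immediately after the Principal normal form lemma, the Uniqueness lemma, and the Regularity property, and the paper treats it as their direct combination without giving a separate proof block. Your assembly of these three ingredients (plus the observation that $\CL$ is cylindric by construction) is exactly the intended argument.
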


This is sufficient to show that $HM(\CL)$ is principal. However,
we do not use \hmx directly but an extended version with kind inference,
linear and affine types, and borrow.
We extend the proofs of \hmx to such a system in \cref{appendix:infer}.

\section{Syntax-directed typing}
\label{appendix:sdtyping}

\begin{figure}[bt]
  \centering
  \input{sdtyping/sd-kinds}
  \caption{Syntax-directed kinding rule --
    $\inferSK{C}{\E}{\tau}{k}$}
  \label{rules:sd-kinding}
\end{figure}

\begin{figure}[!tbp]
  \input{sdtyping/sd-splitting}
  \caption{Splitting ---
    environments $\lsplit C\E{\E_l}{\E_r}$;
    inference $\bsplit C\E{\inP{\E_l}}{\inP{\E_r}}$;
    binders $\bsplit Cb{\inP{b_r}}{\inP{b_l}}$}
  \label{fig:sd-splitting}
\end{figure}
\begin{figure}[!tbp]
  \input{sdtyping/sd-borrowing}
  \caption{Borrowing ---
    environments $\lregion[n]{C}{x}{\E}{\E'}$;
    inference $\bregion[\inP n]{C}{\inP x}{\E}{\inP{\E'}}$;
    binders $\bregion[\inP n]{C}{\inP x}{b}{\inP{b'}}$}
  \label{fig:sd-borrowing}
\end{figure}
\begin{figure*}[tbp]
  \input{sdtyping/sd-environments}
  \caption{Rewriting constraints on environments --- $\inP{\Cleq{\E}{k}}\Crewrite  C$}
  \label{fig:contraints-environments-types}
  \input{sdtyping/sd-typing}
  \caption{Syntax-directed typing rules --
    $\inferS{C}{\E}{e}{\tau}$}
  \label{fig:syntax-directed-typing}
\end{figure*}
\begin{figure*}[!btp]
  \input{sdtyping/sd-typing-internal}
  \caption{Syntax-directed typing rules for internal language --
    $\inferS{C}{\E}{e}{\tau}$}
  \label{fig:syntax-directed-typing-internal}
\end{figure*}

\subsection{Kinding}

We write $\inferSK{C}{\E}{\tau}{k}$
if $\tau$ has kind $k$ in environment $\E$ under constraints $C$.
The rules are shown in \cref{rules:sd-kinding}.
Kinds and types follow a small calculus with variables ($\tvar$,\dots),
functions (type constructors $\T{t}$), application ($\tapp{\tcon}{\Multi{\tau}}$)
and primitives such as types for arrows ($\tau\tarr{k}\tau'$) and
borrows ($\borrowty{k}{\tau}$).
Kind checking can thus be done in a fairly straightforward, syntax-directed
fashion by simply following
the syntax of the types. Kind arrows can only appear when looking
up the kind scheme of a type constructor $\T t$. Kind arrows are forbidden
in any other contexts.

\subsection{Environments}
\label{typ:extra:envs}

In \cref{sdtyping}, we only gave a partial description of
the splitting and borrowing relations on environments,
$\lsplit C\E\E\E$ and $\lregion C{x}\E\E$.
The complete definitions are shown on \cref{fig:sd-splitting,fig:sd-borrowing}.
All the definitions are made in term of the inference version, which
returns fresh constraints. The solving version then simply
uses entailment, as shown in rule {\sc ESplit-Check} and
{\sc EBorrow-Check}.
The remaining new rules are dedicated to iterating over the environment.

\cref{fig:contraints-environments-types} defines
the rewriting relation on environment constraints,
$\Cleq\E k \Crewrite C$, which rewrites a constraint of the form
$\Cleq\E k$ into $C$. It proceeds by iterating over the environment
and expanding the constraints for each binding.
Suspended bindings are rejected ({\sc ConstrSusp}).
Borrow bindings directly use the annotated kind ({\sc ConstrBorrow}).
Other bindings use the underlying type scheme ({\sc ConstrBinding}).
Type schemes are constrained by first inferring the kind, and then
emitting the constraint ({\sc ConstrSD} and {\sc ConstrI}).

\subsection{Typing}

The rules for syntax-directed typing are shown in \cref{fig:syntax-directed-typing}
and follow the presentation given in \cref{sdtyping}.
As usual in HM type systems, introduction of type-schemes
is included in the {\sc Let} rule via generalization.
We define $\generalize{C}{\E}{\tau} =
(\Cproj{\Multi{\kvar},\Multi{\tvar}}{C},
\forall \Multi{\kvar},\Multi{\tvar}.\qual{C}{\tau})$
where
$\Multi{\kvar},\Multi{\tvar} = (\fv{\tau}\cup\fv{C})\setminus\fv{\E}$.
The typing rules specific to the internal language are shown in
\cref{fig:syntax-directed-typing-internal}.


\clearpage
\section{Type Inference}
\label{appendix:infer}

\begin{figure*}[tbp]
  \centering
  \input{infer/kinds}
  \caption{Kind inference rules -- $\inferK{(C,\unif)}{{\inP\E}}{{\inP\tau}}{k}$}
  \label{rules:kinding}
\end{figure*}

\begin{figure*}[!tbp]
  \input{infer/typing}
  \caption{Type inference rules --
    $\inferW{\Sigma}{(C,\psi)}{{\inP\E}}{{\inP e}}{\tau}$ }
  \label{rules:typing:full}
\end{figure*}

In this appendix, we provide the complete type inference rules
and show that our type inference algorithm is sound and complete.
The constraints rules are already shown in \cref{inference}.
Kind inference is presented in \cref{rules:kinding:full}
and the detailed treatment of let-bindings in
\cref{infer:let}.
The type inference rules are shown in \cref{rules:typing:full}.
The various theorems and their proofs are direct adaptations
of the equivalent statements for \hmx \citep{sulzmann1997proofs}.

\subsection{Kind Inference}
\label{rules:kinding:full}

We write $\inferK{(C,\unif)}{{\inP\E}}{{\inP\tau}}{k}$ when type $\tau$ has kind $k$
in environment $\E$ under constraints $C$ and unifier $\unif$.
$\inP\E$ and $\inP\tau$ are the input parameters of
our inference procedure.
We present the kind inference algorithm as a set of rules in
\cref{rules:kinding}.
Higher-kinds are not generally supported
and can only appear by looking-up the kind scheme of a type constructor,
for use in the type application rule {\sc KApp}.
Type variables must be of a simple kind in rule {\sc KVar}.
Kind schemes are instantiated in the {\sc KVar} rules by creating
fresh kind variables and the associated substitution.
{\sc KArr} and {\sc KBorrow} simply returns the kind of the primitive
arrow and borrow types.
The $\operatorname{normalize}$ function is used every time several constraints
must be composed in order to simplify the constraint and return a most general
unifier.

\subsection{Generalization and constraints}
\label{infer:let}

The {\sc Let} rule combines several ingredients previously seen:
since let expressions are binders, we use $\Weaken$ on the bound
identifier $x$. Since let-expressions contain
two subexpressions, we use the environment splitting relation,
$\bsplit{C_s}{\Sv}{\Sv_1}{(\Sv_2 \Sdel{x})}$. We remove the $x$ from
the right environment, since it is not available outside of the expression
$e_2$, and should not be part of the returned usage environment.

As per tradition in ML languages, generalization is performed
on let-bindings.
Following \hmx, we write $(C_\schm, \schm) = \generalize{C}{\E}{\tau}$
for the pair of a constraint and a scheme resulting from
generalization. The definition is provided in \cref{rule:infer:let}.
The type scheme $\schm$ is created by quantifying over all the appropriate
free variables and the current constraints.
The generated constraint $C_\schm$ uses a new projection operator,
$\Cproj{x}{D}$ where $x$ can be either a type or a kind variable, which
allow the creation of local variables inside the constraints.
This allows us to encapsulate all the quantified variables in the global constraints.
It also reflects the fact that there
should exist at least one solution for $C$ for the scheme to be valid.
\citet{DBLP:journals/tapos/OderskySW99} give a detailed account
on the projection operators in HM inference.



\newcommand\mcase[1]{\noindent\textbf{Case }#1\\\noindent}

\subsection{Soundness}

\begin{lemma}
  \label{lemma:constrsubst}
  Given constraints $C$ and $D$ and substitution $\unif$, if $\entail{D}{C}$
  then $\entail{\unif D}{\unif C}$.
  \begin{proof}
    By induction over the entailment judgment.
  \end{proof}
\end{lemma}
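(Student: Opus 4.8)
The plan is to argue by induction on the derivation of $\entail{D}{C}$, using the fact that $\CL$-entailment is the smallest relation closed under the axioms of \cref{rules:entail} together with the cylindric and term-rewriting axioms. For every generating rule, I will apply $\unif$ to the conclusion and the premises and check that the $\unif$-image is again a valid instance of the same rule; the induction hypothesis supplies entailment of the substituted premises, so the substituted conclusion follows. Throughout I use that $\unif$ maps type variables to types and kind variables to kinds, so that it respects the syntactic categories of the constraint language, and I adopt the usual freshness convention for bound variables.

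First I would dispatch the leaf axioms. For $\entail{}{\Cleq{l}{l'}}$ with $l \lk_\Lat l'$, both sides are lattice constants, hence fixed points of $\unif$, so $\entail{\unif D}{\unif(\Cleq{l}{l'})}$ holds by the same axiom; reflexivity of $\lk$ is likewise immediate. For the meet/join rules, e.g.\ from $\forall i.\ \entail{C}{\Cleq{l_i}{k}}$ conclude $\entail{C}{\Cleq{\glb_i l_i}{k}}$: the $l_i$ and $\glb_i l_i$ are unchanged by $\unif$, the induction hypothesis gives $\entail{\unif C}{\Cleq{l_i}{\unif k}}$ for each $i$, and reapplying the rule yields $\entail{\unif C}{\Cleq{\glb_i l_i}{\unif k}}$; the join case is symmetric. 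For transitivity and for the structural rules on arrows, borrows, and type constructors, I would simply push $\unif$ through the premises (e.g.\ $\unif(\tau_1 \tarr{k} \tau_2) = \unif\tau_1 \tarr{\unif k} \unif\tau_2$), invoke the induction hypothesis, and reassemble the conclusion with the same rule. The term-rewriting axioms for $\Ceq{\tau}{\tau'}$ are handled directly by the "stable under substitution" property assumed of the equality predicate.

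The only delicate case is the interaction with the projection constructors $\Cproj{\tvar}{\cdot}$ and $\Cproj{\kvar}{\cdot}$ and the cylindric axioms $\entail{C}{\Cproj{x}{C}}$, $\Cproj{x}(C \wedge \Cproj{x}D) \equivC \Cproj{x}C \wedge \Cproj{x}D$, and $\Cproj{x}\Cproj{y}D \equivC \Cproj{y}\Cproj{x}D$. Here I would assume (by $\alpha$-renaming) that the projected variable $x$ is fresh for $\unif$, so that $\unif(\Cproj{x}{C}) = \Cproj{x}{\unif C}$ without capture; the cylindric equations are then preserved verbatim under $\unif$, and $\entail{\unif C}{\Cproj{x}{\unif C}}$ is again an instance of the cylindric axiom. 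I expect this capture-avoidance bookkeeping to be the main (and essentially only) obstacle; everything else is routine. This lemma is then available where it is used in the soundness argument of the inference algorithm.
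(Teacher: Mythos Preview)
Your proposal is correct and takes exactly the same approach as the paper, which simply states ``By induction over the entailment judgment.'' You have helpfully spelled out the case analysis and flagged the only nontrivial point (capture-avoidance for the cylindric projection axioms), but the underlying argument is the intended one.
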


\begin{lemma}
  \label{lemma:constrimply}
  Given a typing derivation $\inferS{C}{\E}{e}{\tau}$ and
  a constraint $D \in \mathcal S$ in solved form such that $\entail{D}{C}$, then
  $\inferS{D}{\E}{e}{\tau}$
  \begin{proof}
    By induction over the typing derivation
  \end{proof}
\end{lemma}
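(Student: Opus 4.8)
The plan is to argue by structural induction on the derivation of $\inferS{C}{\E}{e}{\tau}$, carried out simultaneously with the analogous statements for the auxiliary judgments $\inferSK{C}{\E}{\tau}{k}$ and $\inferSS{C}{\E}{k}{k'}$, since these also carry the ambient constraint on their left and appear as premises of the resource rules \textsc{Create}, \textsc{Observe}, \textsc{Update}, \textsc{Destroy} (and of \textsc{KApp}, \textsc{KPair}). The guiding observation is that in every syntax-directed rule the constraint $C$ is used in only three ways: (i) handed unchanged to premises that are themselves typing or kinding judgments; (ii) as the antecedent of an entailment side condition $\entail{C}{C'}$; and (iii) as the antecedent of a splitting condition $\lsplit{C}{\E}{\E_1}{\E_2}$ or a region condition $\lregion{C}{x}{\E}{\E'}$, which by \textsc{ESplit-Check} and \textsc{EBorrow-Check} unfold to an inference-mode witness $\bsplit{D'}{\E}{\E_1}{\E_2}$ (resp.\ $\bregion{D'}{x}{\E}{\E'}$) together with an entailment $\entail{C}{D'}$. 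So outside of sub-derivations the whole content of a rule is a conjunction of entailments with $C$ on the left.

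Given this, reconstructing a derivation of $\inferS{D}{\E}{e}{\tau}$ is mechanical. For each typing, kinding, or subkinding premise, I would apply the induction hypothesis with the same witness $\entail{D}{C}$. For each entailment side condition $\entail{C}{C'}$, I derive $\entail{D}{C'}$ from $\entail{D}{C}$ and transitivity of $\vdash_e$. For each splitting or region side condition, I unfold its definition, leave the $\bsplit{D'}{\cdots}$ / $\bregion{D'}{\cdots}$ component untouched (it does not mention the ambient constraint), and re-derive the entailment part by transitivity. Re-assembling the premises under the same rule yields the conclusion with $D$ in place of $C$; no new side conditions are generated, so the solved-form hypothesis on $D$ is never consumed in these cases.

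The one point needing care is the polymorphic-let rules \textsc{Let} and \textsc{PLet}, where the bound expression is typed under an enlarged constraint $C \Cand D_0$ ($D_0$ being the rule's own constraint, written $D$ in the figure, renamed here to avoid the clash), with a side condition $\entail{C}{C_\schm}$ for $(C_\schm,\schm)=\generalize{D_0}{\E}{\tau_1}$. Here I would first note $\entail{D \Cand D_0}{C \Cand D_0}$ (monotonicity of conjunction under $\vdash_e$, from $\entail{D}{C}$), apply the induction hypothesis to that premise with target constraint $D \Cand D_0$, and recover $\entail{D}{C_\schm}$ by transitivity; the scheme $\schm$ is unchanged because it mentions only $D_0$, not the ambient constraint. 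To keep the induction hypothesis applicable at this step — it asks for a target constraint in $\mathcal S$ — I would either observe that the solved-form restriction is never used in the induction and simply state and prove the lemma for an arbitrary target constraint (reading off the $\mathcal S$ version as the special case actually needed in the soundness proof of inference), or normalize $D \Cand D_0$ to an equivalent representative in $\mathcal S$ (which exists since $\mathcal S$ is $\CL/{\equivC}$) and invoke invariance of $\vdash_s$ under $\equivC$, itself an immediate corollary of the same induction.

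I expect the main obstacle to be exactly this interaction with generalization in the let cases, together with setting up the kinding/subkinding companion statements so the mutual induction is well founded. Everything else is a routine chase of transitivity and conjunction-monotonicity of entailment through the (short) list of syntax-directed rules. A secondary thing to verify is that appealing to $\mathcal S = \CL/{\equivC}$ introduces no circularity with the normalization machinery — which is avoided precisely by proving the general (unrestricted-target) form and specializing.
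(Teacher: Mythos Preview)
Your proposal is correct and follows the same approach as the paper, which simply states ``By induction over the typing derivation'' without further detail. You have spelled out the case analysis the paper elides, and in particular correctly flagged and resolved the one nontrivial point: in the \textsc{Let} case the premise is typed under $C \Cand D_0$, so to apply the induction hypothesis you need the target constraint $D \Cand D_0$, which need not lie in $\mathcal S$; your fix---prove the lemma for an arbitrary target constraint and read off the $\mathcal S$ version as a special case---is exactly right and avoids any circularity with normalization.
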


\begin{lemma}
  \label{lemma:typ:weakening}
  Given a type environment $\E$, $\E' \subset \E$, a term $e$ and a variable $x\in\E$,\\
  if $\inferS{C}{\E'}{e}{\tau}$
  then $\inferS{C\Cand \Weaken_{x}(\E')}{\E';\bvar{x}{\E(x)}}{e}{\tau}$

  \begin{proof}
    Trivial if $x \in \Sv$. Otherwise, by induction over the typing derivation.
  \end{proof}
\end{lemma}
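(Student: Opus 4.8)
The plan is a structural induction on the derivation of $\inferS{C}{\E'}{e}{\tau}$, preceded by the case split matching the two clauses of $\Weaken$. If $x$ actually occurs in the derivation — it is mentioned at some unshadowed \textsc{Var}, \textsc{Borrow}, or \textsc{ReBorrow} leaf — then a free occurrence of $x$ was typed, so $x\in\Dom{\E'}$, and since $\E'\subseteq\E$ we have $\E'(x)=\E(x)$; hence $\E';\bvar{x}{\E(x)}$ is $\E'$ with a redundant re-binding of $x$, and $\Weaken_x(\E')=\Ctrue$. A routine auxiliary fact — re-binding a variable already present in the environment preserves derivability, proved by an immediate induction of exactly this shape — transports the derivation unchanged. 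The substance is therefore the case where $x$ does not occur in $e$.

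In that case, unfolding $\Weaken_x(\E')$ through the environment-constraint rewriting of \cref{fig:contraints-environments-types} (the plain-binding rules and \textsc{ConstrI}) shows that $C\Cand\Weaken_x(\E')$ entails $\Cleq{\E(x)}{\kaff_\infty}$: the type underlying $\E(x)$ has a kind below $\kaff_\infty$ and is discardable at every level. I would then induct on the last rule, always routing the new binding $\bvar{x}{\E(x)}$ into the sub-environment that does not need it. For the leaf rules the only environment side condition has the shape $\entail{C}{\Cleq{\E'\Sdel{z}}{\kaff_\infty}}$; after inserting $x$ it becomes $\Cleq{\E'\Sdel{z}}{\kaff_\infty}\Cand\Cleq{\E(x)}{\kaff_\infty}$, discharged by $C$ and by $\Weaken_x(\E')$ using only the basic structural rules of $\vdash_e$ (transitivity, weakening of conjunctions, conjunction introduction). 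For the environment-splitting rules (\textsc{App}, \textsc{Pair}, \textsc{Let}, \textsc{MatchPair}) I extend the splitting by a \textsc{Right} step, which contributes the empty constraint, pushing $x$ into the right premise, and apply the induction hypothesis there with big environment $\E$; the other premises are reused verbatim. \textsc{Let} needs one extra check: $\generalize{D}{\E'}{\tau_1}$ must be unchanged, which holds whenever $\E(x)$ contributes no fresh type or kind variables — the situation in which the lemma is invoked in the soundness-of-inference proof — so if necessary I would add "$\E(x)$ closed'' as an explicit hypothesis. The resource primitives and \textsc{Instance} carry only $\Cleq{\E}{\kaff_\infty}$-style side conditions and go like the leaves, using \cref{lemma:constrimply} to re-present entailed constraints in solved form and an easy exchange property to keep the inserted binding canonical.

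The main obstacle will be the rules whose environment side condition is \emph{not} an $\kaff_\infty$-bound: \textsc{Abs} (with $\entail{C}{\Cleq{\E'}{k}}$), \textsc{Region}, and the polymorphic \textsc{PLet}. Reconstructing these forces the extra obligation $\Cleq{\E(x)}{k}$, whereas $\Weaken_x(\E')$ only delivers $\Cleq{\E(x)}{\kaff_\infty}$, which is in general strictly weaker. The resolution I would pursue is that this weaker constraint is self-defeating precisely when it would be insufficient: if the kind of $\E(x)$ is a linear constant, $\Cleq{\E(x)}{\kaff_\infty}$ is unsatisfiable, so $C\Cand\Weaken_x(\E')$ entails everything and the reconstructed derivation is vacuous; and if it is not linear, $\E(x)$ is genuinely discardable, and in the contexts where the lemma is used the offending kind $k$ is either a fresh unconstrained variable (as produced by \textsc{Abs}$_I$) or is itself $\kaff_\infty$-bounded, so $\Cleq{\E(x)}{\kaff_\infty}$ suffices — and, if desired, $k$ can be enlarged by arrow/region subsumption without disturbing the overall type $\tau$, since the inner kind of a nested closure never appears in $\tau$. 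Pinning down this dependence precisely, and strengthening the statement of \cref{lemma:typ:weakening} if the intended applications demand a uniform treatment of the closure-capture cases, is where I expect the real work to lie.
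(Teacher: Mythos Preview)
Your overall structure—case split on whether $x$ already appears in $\E'$, then induction on the derivation—is exactly the paper's approach; its entire proof is the one line ``Trivial if $x \in \Sv$. Otherwise, by induction over the typing derivation.'' There is no more detailed argument in the paper to compare against.

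Where you go beyond the paper is in flagging \textsc{Abs}, \textsc{Region}, and \textsc{PLet} as obstructions, and this is a genuine gap you have uncovered, not merely a detail to fill in. Concretely: take $\E' = \Eempty$, $e = \lam{y}{y}$, $\tau = \tau_y \tarr{\kun} \tau_y$, $C = \Ctrue$, and $\E(x)$ of kind~$\kaff$. The hypothesis holds, and $\Weaken_x(\E')$ rewrites to $\Cleq{\kaff}{\kaff_\infty}$, which is satisfiable—so your ``self-defeating'' escape does not fire. Yet reconstructing \textsc{Abs} over $\E';\bvar{x}{\E(x)}$ demands $\entail{\Ctrue}{\Cleq{\kaff}{\kun}}$, which fails. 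The lemma as stated is therefore false in general, and your first proposed resolution covers only the linear case, not the affine one.

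Your second resolution—enlarge the offending $k$ ``without disturbing the overall type $\tau$''—does not work either when the lambda is $e$ itself: its arrow kind is literally a component of $\tau$, so coarsening it changes the conclusion. For strictly nested lambdas you do not need this trick at all, since you can route $x$ away from them via the \textsc{Right} splitting step you already described; the problem is precisely the case where $e$'s head constructor is \textsc{Abs} (or \textsc{Region}). A correct version of the lemma would need either an extra hypothesis bounding the kind of $\E(x)$ by every arrow/region kind exposed at the spine of $\tau$, or a weaker conclusion allowing $\tau$ to be coarsened on those kinds. The paper's one-liner does not address any of this, and your instinct that ``the real work lies here'' is correct.
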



We define the flattening $\Eflat\E$ of an environment $\E$, as the environment
where all the binders are replaced by normal ones. More formally:
\begin{align*}
  \Eflat\E
  =& \left\{ \bvar x \schm \mid \bvar{x}{\schm} \in\E 
    \vee \bbvar {x}{k}{\schm}\in\E
    \vee \svar{x}{\schm}^n\in\E
    \right\}\\
  &\cup \left\{ \bvar{\tvar}{k} \mid \bvar{\tvar}{k}\in\E \right\}
\end{align*}

\begin{lemma}
  \label{lemma:env:flat}
  Given a type environment $\E$ and a term $e$ such
  that $\inferW{\Sv}{(C,\unif)}{\E}{e}{\tau}$
  then $\Eflat\Sv \subset \E$.
  \begin{proof}
    By induction over the typing derivation.
  \end{proof}
\end{lemma}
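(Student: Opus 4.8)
The plan is to prove the lemma by structural induction on the derivation of $\inferW{\Sv}{(C,\unif)}{\E}{e}{\tau}$, using the appendix definition of $\Eflat{\cdot}$, which replaces every borrow binding $\bbvar{x}{k}{\schm}$ and every suspended binding $\svar{x}{\schm}^n$ by the plain value binding $\bvar{x}{\schm}$ and leaves plain value and type bindings untouched. Two facts are used throughout: $\Eflat{\cdot}$ commutes with deleting a variable, i.e.\ $\Eflat{\Sv\Sdel{x}}=\Eflat{\Sv}\Sdel{x}$, and $\E$ contains only plain value and type bindings, so $\Eflat{\E}=\E$. I would also carry, as part of the induction hypothesis, the side remark that a usage environment produced by inference never contains a type binding (no rule ever inserts one into $\Sv$), so that the type-binding clause of $\Eflat{\cdot}$ never contributes.

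First I would dispatch the leaf and single-subterm cases. For {\sc Var$_I$}, $\Sv=\{\bvar{x}{\schm}\}$ with $\bvar{x}{\schm}\in\E$, so $\Eflat{\Sv}=\{\bvar{x}{\schm}\}\subseteq\E$; for the resource primitives $\create$, $\observe$, $\update$, $\destroy$ we have $\Sv=\Eempty$ and the claim is vacuous. For {\sc Abs$_I$} the body is typed in $\E;\bvar{x}{\tvar}$, so by induction hypothesis $\Eflat{\Sv_x}\subseteq\E;\bvar{x}{\tvar}$, and the returned $\Sv_x\Sdel{x}$ satisfies $\Eflat{\Sv_x\Sdel{x}}=\Eflat{\Sv_x}\Sdel{x}\subseteq(\E;\bvar{x}{\tvar})\Sdel{x}\subseteq\E$. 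For {\sc Region$_I$}, the relation $\bregion{C_r}{x}{\Sv}{\Sv'}$ only turns the borrow binding of $x$ in $\Sv'$ (if present) into a suspended binding with the same scheme, and both flatten to $\bvar{x}{\schm}$, so $\Eflat{\Sv}=\Eflat{\Sv'}\subseteq\E$ by the induction hypothesis, since the body is typed in the same $\E$. For {\sc Borrow$_I$} and {\sc ReBorrow$_I$}, $\Sv=\{\bbvar{x}{\kvar}{\tau}\}$ and $\Eflat{\Sv}=\{\bvar{x}{\tau}\}$; typing the premise for the variable $x$ forces $x\in\Dom{\E}$, so the inclusion holds at the level of bound identifiers.

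Next I would handle the two-subterm cases {\sc App$_I$}, {\sc Pair$_I$}, {\sc Let$_I$}, {\sc MatchPair$_I$}, which all merge the two sub-usage-environments through a split $\bsplit{C_s}{\Sv}{\Sv_1}{\Sv_2'}$ where $\Sv_2'$ is $\Sv_2$ with the locally bound variable(s) deleted. For these I would first prove, by a short secondary induction over the derivation of the split using rules {\sc ESplit-Empty}, {\sc ESplit-Nonempty}, and the per-binder rules {\sc Both}, {\sc Left}, {\sc Right}, {\sc Borrow}, {\sc Susp}, {\sc SuspB}, {\sc SuspS} of \cref{sdtyp:split}, that $\Eflat{\Sv}\subseteq\Eflat{\Sv_1}\cup\Eflat{\Sv_2'}$ — each per-binder rule produces a merged binding whose flattening coincides with the flattening of one of its two inputs, so merging never introduces a fresh identifier. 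Combining this with the two induction hypotheses $\Eflat{\Sv_i}\subseteq\E$ (and, for {\sc Let$_I$} and {\sc MatchPair$_I$}, with $\Eflat{\Sv_2\Sdel{x}}=\Eflat{\Sv_2}\Sdel{x}$, noting that the second premise is typed in $\E$ extended only by the bound variable(s) and some fresh type variables, none of which $\Sv_2$ records) then yields $\Eflat{\Sv}\subseteq\E$.

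The main obstacle will be stating the lemma and its induction hypothesis at the right granularity rather than anything in the proof itself: because {\sc Var$_I$} returns the scheme from $\E$ verbatim while {\sc Borrow$_I$} records an instantiated type, strict set equality of bindings fails, and $\subseteq$ must be read at the level of which identifiers are bound (equivalently, up to instantiation of the recorded schemes) — which is precisely what the downstream uses of $\Eflat{\cdot}$ (feeding $\Eflat{\E}$ as an input environment to the soundness and principality theorems) need. With that reading fixed, the rest is a routine rule-by-rule check, and the secondary merge fact is the only statement that has to be established separately.
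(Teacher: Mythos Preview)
Your proposal is correct and takes the same approach as the paper---structural induction on the inference derivation---which the paper states in a single line without further detail. Your observation about the {\sc Borrow$_I$} case (the usage environment records an instantiated type rather than the original scheme) is a genuine subtlety the paper glosses over, and your reading of $\subset$ at the level of bound identifiers is the appropriate fix and matches how the lemma is actually consumed in the soundness proof.
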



\begin{theorem}[Soundness of inference]
  Given a type environment $\E$ containing only value bindings,
  $\E|_\tau$ containing only a type binding and a term $e$,\\
  if $\inferW{\Sv}{(C,\unif)}{\E;\E_\tau}{e}{\tau}$
  then $\inferS{C}{\unif(\Sv;\E_\tau)}{e}{\tau}$, $\unif C = C$ and $\unif \tau = \tau$
\begin{proof}
  We proceed by induction over the derivation of $\vdash_w$.
  Most of the cases follow the proofs from \hmx closely.
  For brevity, we showcase three rules: the treatment of binders
  and weakening, where our inference algorithm differ significantly
  from the syntax-directed rule, and the $Pair$ case
  which showcase the treatment of environment splitting.

  \mcase{$\ruleIVar$}

  We have $\Sv = \Sone{x}{\sigma}$.
  Without loss of generality, we can consider $\unif_x = \unif'|_{\fv{\E}} = \unif'|_{\fv{\sigma}}$.
  Since $\Sv\Sdel{x}$ is empty and by definition of normalize, we
  have that
  $\entail C \unif'(C_x) \Cand \Cleq{\unif_x\Sv\Sdel{x}}{\kaff_\infty}$,
  $\unif' \leq \unif$ and $\unif'C = C$.
  By definition, $\unif_x\unif' = \unif'$.
  By rule $Var$, we obtain
  $\inferS{C}{\unif_x(\Sv;\E_\tau)}{x}{\unif_x\unif'\tau}$, which concludes.
  \\

  \mcase{$\ruleIAbs$}

  By induction, we have
  $\inferS{C'}{\Sv_x;\E_\tau}{e}{\tau}$, $\unif C = C$
  and $\unif \tau = \tau$.\\
  By definition of normalize and \cref{lemma:constrsubst}, we have
  $\entail{C}{C'\Cand\Cleq{\unif'\Sv}{\unif'\kvar} \Cand \Weaken_{x}(\unif'\Sv_x)}$ and
  $\unif \leq \unif'$.
  By \cref{lemma:constrsubst}, we have $\entail{C}{\Cleq{\Sv}{\unif\kvar}}$.

  We now consider two cases:
  \begin{enumerate}[leftmargin=*,noitemsep,topsep=5pt]
  \item If $x\in\Sv_x$, then $\Weaken_{x}(\unif\Sv_x) = \Ctrue$
    and by \cref{lemma:env:flat}, $\Sv_x = \Sv;\bvar{x}{\alpha}$.
    We can deduce\\
    $\inferS{C'\Cand \Weaken_{\bvar{x}{\tvar}}(\unif\Sv_x)}{\unif\Sv;\E_\tau;\bvar{x}{\unif(\tvar)}}{e}{\tau}$.
  \item If $x\notin\Sv_x$, then $\Sv = \Sv_x$ and
    $\Weaken_{\bvar{x}{\tvar}}(\unif\Sv_x) = \Cleq{\unif\tvar}{\kaff_\infty}$.
    By \cref{lemma:typ:weakening},
    we have\\
    $\inferS{C'\Cand \Weaken_{\bvar{x}{\tvar}}(\unif\Sv_x)}{\unif\Sv;\E_\tau;\bvar{x}{\unif(\tvar)}}{e}{\tau}$
  \end{enumerate}
  By \cref{lemma:constrimply}, we
  have $\inferS{C}{\unif(\Sv;\E_\tau);\bvar{x}{\unif(\tvar)}}{e}{\tau}$.\\
  By rule $Abs$, we obtain
  $\inferS{C}{\unif(\Sv;\E_\tau)}{\lam{x}{e}}{\unif(\tvar)\tarr{\unif(\kvar)}\tau}$ which concludes.
  \\

  \mcase{$\ruleIPair$}

  By induction, we have
  $\inferS{C_1}{\unif_1(\Sv_1;\E^1_\tau)}{e_1}{\tau_1}$, $\unif_1 C_1 = C_1$,
  and $\unif_1 \tau_1 = \tau_1$
  and
  $\inferS{C_2}{\unif_2(\Sv_2;\E^2_\tau)}{e_2}{\tau_2}$, $\unif_2 C_2 = C_2$
  and $\unif_2 \tau_2 = \tau_2$.
  Wlog, we can rename the type $\E^1_\tau$ and $\E^2_\tau$ to be disjoint
  and define $\E_\tau = \E^1_\tau \cup \E^2_\tau$.
  By normalization, $\entail{C}{D}$, $\unif \leq \unif'$ and $\unif C = C$.
  By \cref{lemma:constrimply} and by substitution, we have
  $\inferS{C}{\unif\Sv_1}{e_1}{\unif\tau_1}$ and
  $\inferS{C}{\unif\Sv_2}{e_2}{\unif\tau_2}$.
  We directly have that
  $\bsplit{\unif C_s}{\unif \Sv}{\Sv_1}{\unif\Sv_2}$
  and by \cref{lemma:constrimply}, $\entail{\unif C}{\unif C_s}$.
  By rule $Pair$, we obtain
  $\inferS{C}
  {\unif(\Sv;\E_\tau)}{\app{e_1}{e_2}}{\unif(\tyPair{\tvar_1}{\tvar_2}))}$,
  which concludes.
  \\





\end{proof}
\end{theorem}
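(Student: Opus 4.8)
The plan is to prove the statement by induction on the derivation of the inference judgment $\inferW{\Sv}{(C,\unif)}{\E;\E_\tau}{e}{\tau}$, following the classical soundness argument for algorithm $W$ but adapted to \lang's usage environments and to constraint normalization. The two equational conclusions $\unif C = C$ and $\unif\tau = \tau$ fall out uniformly: each inference rule concludes with a call to $\operatorname{normalize}$, and by the \emph{principal normal form} property the returned substitution satisfies $\unif C = C$, so idempotence of $\unif$ together with the induction hypothesis yields $\unif\tau = \tau$ (the returned type in every rule is already of the form $\unif(\cdots)$). The substance of the argument is therefore constructing a syntax-directed derivation $\inferS{C}{\unif(\Sv;\E_\tau)}{e}{\tau}$ in each case.

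Before the induction I would marshal a small set of structural facts, all already available in the excerpt: stability of entailment under substitution ($\entail{D}{C}$ implies $\entail{\unif D}{\unif C}$), constraint strengthening (if $\inferS{C}{\E}{e}{\tau}$ and $\entail{D}{C}$ with $D$ in solved form, then $\inferS{D}{\E}{e}{\tau}$), environment weakening by a $\Weaken$-guarded unused binding, and the flattening containment $\Eflat\Sv \subseteq \E$. Together these let me freely push $\unif$ through subderivations and strengthen the normalized constraint down to $C$ before applying a syntax-directed rule.

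For the induction, each rule is dispatched by applying the induction hypotheses to its premises, substituting $\unif$ through via the substitution lemma, and then discharging the side conditions of the matching syntax-directed rule with the normalized $C$. The informative cases are: \TirName{Var$_I$}, where $\Sv = \{\bvar{x}{\schm}\}$ so the discard obligation $\Cleq{\E\Sdel{x}}{\kaff_\infty}$ is vacuous; \TirName{Abs$_I$}, where I split on whether $x$ occurs in the body's usage environment $\Sv_x$ — if it does, flattening gives $\Sv_x = \Sv;\bvar{x}{\tvar}$ directly, and if it does not, environment weakening reinstates the binding at the price of the constraint $\Weaken_{\bvar{x}{\tvar}}(\Sv_x) = \Cleq{\unif\tvar}{\kaff_\infty}$ that normalization has already absorbed into $C$; and the multi-premise rules \TirName{App$_I$}, \TirName{Pair$_I$}, \TirName{Let$_I$}, \TirName{MatchPair$_I$}, where I turn the inferred splitting $\bsplit{C_s}{\Sv}{\Sv_1}{\Sv_2}$ into the checking form $\lsplit{C}{\unif\Sv}{\unif\Sv_1}{\unif\Sv_2}$ by entailment ($\entail{C}{C_s}$) after combining the most general unifiers as $\unif_1 \mostgeneral \unif_2$.

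The main obstacle I anticipate is faithfully tracking the bottom-up/top-down mismatch between the two systems: inference records every use of a variable in $\Sv$ and enforces substructural discipline only at binders (via $\Weaken$) and at branches (via the inference splitting relation $\Lleftarrow$), whereas the syntax-directed system enforces discarding at the leaves in \TirName{Var} and uses the checking splitting relation $\vdash_e$. The crux is showing that the constraints accumulated and normalized by inference are exactly strong enough to entail the leaf-level and split-level side conditions of the syntax-directed rules; this rests on the flattening lemma (to identify which bindings genuinely appear in $\Sv$) and on $\Weaken$ correctly accounting for weakened bindings. Once this correspondence is pinned down, the remaining single-premise rules (\TirName{Borrow$_I$}, \TirName{ReBorrow$_I$}, \TirName{Region$_I$}, and the resource primitives) are routine instances of the same substitute-then-strengthen pattern.
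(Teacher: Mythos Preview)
Your proposal is correct and follows essentially the same approach as the paper: induction on the inference derivation, with the same supporting lemmas (stability of entailment under substitution, constraint strengthening, environment weakening via $\Weaken$, and the flattening containment $\Eflat\Sv \subseteq \E$), the same two-way case split on $x \in \Sv_x$ in \TirName{Abs$_I$}, and the same conversion of the inferred splitting $\Lleftarrow$ to the checking form $\vdash_e$ via entailment in the multi-premise rules. One minor imprecision: not every rule literally ends with a call to $\operatorname{normalize}$ (e.g.\ \TirName{Borrow$_I$} passes its premise's $(C,\unif)$ through), but the idempotence conclusions still hold there by the induction hypothesis and freshness of $\kvar$.
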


\subsection{Completeness}

We now state our algorithm is complete: for any given
syntax-directed typing derivation, our inference algorithm can find
a derivation that gives a type at least as general.
For this, we need first to provide a few additional definitions.

\begin{definition}[More general unifier]
  Given a set of variable $U$ and $\unif$, $\unif'$ and $\phi$
  substitutions. \\
  Then
  $\unif \leq^{\phi}_{U} \unif'$ iff $(\phi \circ \unif)|_{U} = \unif'|_U$.
\end{definition}

\begin{definition}[Instance relation]
  Given a constraints $C$ and two schemes
  $\schm = \forall \Multi\tvar. \qual{D}{\tau}$ and
  $\schm' = \forall \Multi\tvar'. \qual{D'}{\tau'} $.
  Then $\entail{C}{\schm \preceq \schm'}$
  iff $\entail{C}{\subst{\tvar}{\tau''} D}$
  and $\entail{C\Cand D'}{\Cleq{\subst{\tvar}{\tau''}\tau}{\tau'}}$
\end{definition}

We also extend the instance relation to environments $\E$.

We now describe the interactions between splitting and the
various other operations.

\begin{lemma}
  \label{split:flat}
  Given $\bsplit{C}{\E}{\E_1}{\E_2}$, Then $\Eflat\E = \Eflat\E_1 \cup \Eflat\E_2$.
  \begin{proof}
    By induction over the splitting derivation.
  \end{proof}
\end{lemma}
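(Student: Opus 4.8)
The plan is to prove the identity by induction on the derivation of $\bsplit{C}{\E}{\E_1}{\E_2}$, using the two structural rules \textsc{ESplit-Empty} and \textsc{ESplit-Nonempty} of \cref{fig:sd-splitting}. First I would record the preliminary observation that flattening distributes over environment concatenation: directly from the definition of $\Downarrow$, $\Eflat{(\E;b)} = \Eflat{\E} \,\cup\, \Eflat{b}$, where $\Eflat{b}$ is the contribution of the single binder $b$ — namely $\{\bvar{x}{\schm}\}$ for each of the non-empty forms $\bvar{x}{\schm}$, $\bbvar{x}{k}{\schm}$, $\svar{x}{\schm}^n$, the type binding $\bvar{\tvar}{\kschm}$ for a type binder, and $\emptyset$ for $\bnone$. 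With this in hand, the whole argument reduces to a per-binder claim threaded through the outer induction.

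For the base case \textsc{ESplit-Empty} we have $\E = \E_1 = \E_2 = \Eempty$, so both sides are $\emptyset$ and there is nothing to prove. For \textsc{ESplit-Nonempty} we have $\E = \E';b$, $\E_1 = \E'_1;b_1$, $\E_2 = \E'_2;b_2$ with sub-derivations $\bsplit{C_1}{\E'}{\E'_1}{\E'_2}$ and $\bsplit{C_2}{b}{b_1}{b_2}$; the induction hypothesis gives $\Eflat{\E'} = \Eflat{\E'_1} \cup \Eflat{\E'_2}$, so by the distribution observation it suffices to establish $\Eflat{b} = \Eflat{b_1} \cup \Eflat{b_2}$.

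That single-binder equality I would then settle by inspecting the seven binder-splitting rules of \cref{sdtyp:split} one by one (\textsc{Both}, \textsc{Borrow}, \textsc{Left}, \textsc{Right}, \textsc{Susp}, \textsc{SuspB}, \textsc{SuspS}). The key point is that \emph{every} non-empty binder form — plain, borrow, or suspended — flattens to the same plain binding $\bvar{x}{\schm}$, so in each rule $\Eflat{b_1}$ and $\Eflat{b_2}$ are each either $\emptyset$ (precisely when that component is $\bnone$, as in \textsc{Left}/\textsc{Right}) or $\{\bvar{x}{\schm}\}$, and their union matches $\Eflat{b}$; type binders are trivial since \textsc{Left}/\textsc{Right} copy them unchanged to one side. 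I do not anticipate any genuine obstacle: the only two things to keep in mind are that $\Downarrow$ must be read as producing a \emph{set} of bindings — so that the duplicated binding in \textsc{Both} and \textsc{Borrow}, where $b = b_1 = b_2$, collapses under the union rather than causing a mismatch — and that the suspended and borrow binders must be erased to their underlying scheme rather than retained verbatim.
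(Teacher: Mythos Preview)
Your proposal is correct and follows exactly the approach the paper indicates: induction on the splitting derivation, with the inductive step reduced to a case analysis over the per-binder splitting rules. The paper's own proof is simply the one-line ``By induction over the splitting derivation,'' so you have merely spelled out the routine details it leaves implicit.
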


\begin{lemma}
  \label{split:gen}
  Given $\lsplit{C}{\E_1}{\E_2}{\E_3}$,
  $\bsplit{C'}{\E'_1}{\E'_2}{\E'_3}$
  and $\unif$ such that
  $\E'_i\subset\E''_i$ and $\entail{}{\unif\E''_i \preceq \E_i}$
  for $i\in\{1;2;3\}$.

  Then $\entail{C}{\unif C'}$.
\begin{proof}
  By induction over the derivation of $\bsplit{C'}{\E'_1}{\E'_2}{\E'_3}$.
\end{proof}
\end{lemma}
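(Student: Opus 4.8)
The plan is to argue by induction on the derivation of $\bsplit{C'}{\E'_1}{\E'_2}{\E'_3}$, which by \textsc{ESplit-Empty} and \textsc{ESplit-Nonempty} is driven purely by the common binder structure of the three usage environments. As a preliminary I would unfold the syntax-directed hypothesis through \textsc{ESplit-Check}: it yields a constraint $D$ with $\bsplit{D}{\E_1}{\E_2}{\E_3}$ and $\entail{C}{D}$, so by transitivity of entailment it suffices to show, binder by binder, that each conjunct of $\unif C'$ is a consequence of $D$. The point that makes the two derivations line up is that the hypotheses $\E'_i \subset \E''_i$ and $\entail{}{\unif\E''_i \preceq \E_i}$ act componentwise, and that neither forming $\unif\E''_i$ nor passing to the instance $\E_i$ changes the \emph{shape} of a binding (regular, borrow, suspended, or empty) nor the kind/borrow annotation carried by a borrow binding; crucially, environment splitting never instantiates a bound scheme, it only partitions bindings. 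Hence for every variable $x$ occurring in $\E'_2$ and $\E'_3$, the bindings of $x$ in $\E_2$ and $\E_3$ have matching shapes, so the $D$-derivation applies on $x$ the same binder-splitting rule as the $C'$-derivation, and the surplus bindings that may sit in $\E_i$ but not in $\E'_i$ contribute nothing to $C'$ and can be discarded (Lemma~\ref{split:flat} and the componentwise discipline make this precise).

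The base case \textsc{ESplit-Empty} gives $C' = \Cempty$ and the goal is immediate. In the step \textsc{ESplit-Nonempty} I would peel off the last binder, writing $C' = \hat C' \Cand C'_b$ with a premise $\bsplit{\hat C'}{\hat\E'_1}{\hat\E'_2}{\hat\E'_3}$ on the truncated environments and a binder-level premise for the binder of some $x$. The truncated environments still satisfy the $\subset$ and $\preceq$ hypotheses componentwise, so the induction hypothesis applies and gives $\entail{C}{\unif\hat C'}$; it then remains to establish $\entail{C}{\unif C'_b}$, whence $\entail{C}{\unif\hat C' \Cand \unif C'_b}$, i.e. $\entail{C}{\unif C'}$, by conjunction of entailments.

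What remains, and is where I expect the real work, is the case analysis on the binder-splitting rule applied to $x$. The rules \textsc{Left}, \textsc{Right}, \textsc{Borrow}, \textsc{Susp} and \textsc{SuspS} all emit $C'_b = \Cempty$, so $\entail{C}{\unif C'_b}$ holds trivially. The interesting rules are \textsc{Both}, which emits $C'_b = \Cleq{\schm'}{\kun_\infty}$ for the scheme $\schm'$ bound to $x$ on both sides, and \textsc{SuspB}, which emits $C'_b = \Cleq{\BORROW'}{\BORROW}$; in each case the $D$-derivation uses the same rule on $x$ and contributes the analogous constraint on $x$'s $\E_2/\E_3$ binding, which $C$ entails since $\entail{C}{D}$. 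The residual obligation is to push that constraint through the substitution $\unif$ and the scheme-instance relation $\unif\E''_2 \preceq \E_2$ — that is, to check that after applying $\unif$ and passing between a scheme and its $\preceq$-instance, the kind (resp. borrow) annotation that \textsc{Both} (resp. \textsc{SuspB}) constrains lands on the correct side of the required inequality. This is precisely the main obstacle: it has to be settled by exploiting the structural nature of the subkinding order (subkinding only at the leaves, so subtyping is monotone on kinds) together with the exact form of the instance relation, and pinning down the direction is delicate; everything else — the componentwise restriction of the hypotheses, the shape-matching of binders across $\unif$ and $\preceq$, and the handling of surplus bindings — is routine.
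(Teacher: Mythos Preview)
Your proposal is correct and follows exactly the approach the paper indicates: the paper's proof is the single line ``By induction over the derivation of $\bsplit{C'}{\E'_1}{\E'_2}{\E'_3}$'', and you have faithfully expanded that induction, correctly isolating \textsc{Both} and \textsc{SuspB} as the only binder rules that emit nontrivial constraints and honestly flagging the residual obligation of transporting those constraints across $\unif$ and the instance relation.
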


We can arbitrarily extend the initial typing environment in an inference
derivation, since
it is not used to check linearity.

\begin{lemma}
  \label{infer:extend}
  Given $\inferW{\Sv}{(C,\unif)}{\Eflat\E}{e}{\tau}$ and $\E'$ such
  that $\E \subseteq \E'$, then $\inferW{\Sv}{(C,\unif)}{\Eflat\E'}{e}{\tau}$
  \begin{proof}
    By induction over the type inference derivation.
  \end{proof}
\end{lemma}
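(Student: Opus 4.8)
The plan is to proceed by structural induction on the derivation of $\inferW{\Sv}{(C,\unif)}{\Eflat\E}{e}{\tau}$, exploiting the fact that the input environment occurs in the inference rules only in ``read-only'' positions: it is consulted to look up the scheme of a value variable ($\bvar{x}{\sigma}\in\E$) or the kind of a type variable ($\bvar{\tvar}{k}\in\E$), and it appears in the restriction $\unif|_{\fv{\E}}$ of the output unifier. First I would record the auxiliary monotonicity fact that $\E\subseteq\E'$ implies $\Eflat\E\subseteq\Eflat\E'$, which is immediate from the binding-wise definition of $\Eflat$. Throughout, fresh type and kind variables are taken to be globally fresh, hence fresh with respect to $\E'$ as well; this standard convention is what licenses reusing the \emph{same} fresh names in the extended derivation. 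A companion statement for the kind-inference judgment $\inferK{(C,\unif)}{\E}{\tau}{k}$ holds by an identical, simpler induction (it too reads the environment only to look up type-constructor kind schemes and type-variable kinds), and I would establish it simultaneously for use in the \textsc{Region}$_I$ case.

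For the base case \textsc{Var}$_I$, the premise $\bvar{x}{\sigma}\in\Eflat\E$ gives $\bvar{x}{\sigma}\in\Eflat\E'$ by monotonicity, so the same scheme is instantiated with the same fresh variables and the same call to $\operatorname{normalize}$; the usage environment $\{\bvar{x}{\sigma}\}$, the constraint $C$, and the type $\unif\tau$ are unchanged. The binder cases \textsc{Abs}$_I$, \textsc{Let}$_I$, \textsc{MatchPair}$_I$, and \textsc{Region}$_I$ each invoke the induction hypothesis on a premise typed in an environment extended by the same binding(s) on both sides (for instance $\E;\bvar{x}{\tvar}\subseteq\E';\bvar{x}{\tvar}$ for \textsc{Abs}$_I$), which is legitimate because $\E\subseteq\E'$ is preserved under adding identical binders. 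The multi-premise cases (\textsc{App}$_I$, \textsc{Pair}$_I$, and the binder cases again) apply the induction hypothesis to each subderivation; the auxiliary relations used to assemble the result — the splittings $\bsplit{C_s}{\Sv}{\Sv_1}{\Sv_2}$, the region relation $\bregion{C_r}{x}{\Sv}{\Sv'}$, and the $\Weaken$ side-conditions — all operate on the usage environments and bound variables produced by the premises, never on the input environment, so they yield identical constraints, and the final $\operatorname{normalize}$ call receives identical inputs.

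The one step requiring care is the unifier restriction $\unif|_{\fv{\E}}$ appearing in the conclusions of \textsc{Var}$_I$, \textsc{Let}$_I$, \textsc{MatchPair}$_I$ (and \textsc{KApp}): under $\E'$ the rule instead outputs $\unif|_{\fv{\E'}}$, and I must show these coincide. Viewing substitutions as total maps that are the identity outside a finite support, it suffices to prove that $\unif$ acts as the identity on $\fv{\E'}\setminus\fv{\E}$. This holds because every variable that $e$ references must be bound in $\E$ — otherwise inference under $\Eflat\E$ could not have succeeded — so the constraints generated while typing $e$ mention only variables from $\fv{\E}$ together with the freshly chosen ones. The extra variables in $\fv{\E'}\setminus\fv{\E}$, arising from unused value bindings or spurious type bindings in $\E'\setminus\E$, are distinct from all fresh variables by the freshness convention and never enter any generated constraint, hence lie outside the support of the most general unifier returned by $\operatorname{normalize}$. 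Consequently $\unif|_{\fv{\E}}=\unif|_{\fv{\E'}}$ as total substitutions, and the output triple $(C,\unif)$ is literally the same. I expect this disjointness-of-support argument to be the main obstacle, since it is the only place the proof uses more than bare monotonicity of lookups; every other case is a mechanical transcription of the rule with the induction hypothesis plugged into its premises.
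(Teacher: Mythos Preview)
Your proposal is correct and follows the same high-level approach as the paper: induction on the inference derivation. The paper's own proof is a one-liner (``By induction over the type inference derivation'') with no details, so you have in fact supplied considerably more than the paper does, including the one genuinely non-obvious point about the restriction $\unif|_{\fv{\E}}$.

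One small omission worth noting: the input environment is read in one further place you did not list, namely the call $\generalize{C_1}{\unif_1\E}{\tau_1}$ in \textsc{Let}$_I$, which quantifies over $(\fv{\tau_1}\cup\fv{C_1})\setminus\fv{\unif_1\E}$. Passing to $\E'$ enlarges $\fv{\unif_1\E'}$ and could in principle shrink the set of generalised variables, changing the scheme $\sigma$ fed to the second premise. However, exactly the disjointness-of-support argument you already give for the unifier restriction disposes of this case too: the extra variables in $\fv{\E'}\setminus\fv{\E}$ cannot occur in $\tau_1$ or $C_1$ (nor in the range of $\unif_1$), so the set difference is unchanged and $\operatorname{gen}$ returns the same pair. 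You should mention this explicitly, since otherwise your claim that the environment is consulted only for lookups and for $\unif|_{\fv{\E}}$ is literally false.
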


Finally, we present the completeness theorem.

\begin{theorem}[Completeness]
  Given $\inferS{C'}{\E'}{e}{\tau'}$ and
  $\entail{}{\unif'\E \preceq \E'}$.
  Then $$\inferW{\Sv}{(C,\unif)}{\Eflat\E}{e}{\tau}$$
  for some environment $\Sv$,
  substitution $\unif$, constraint $C$ and type $\tau$ such
  that
  \begin{align*}
    \unif &\leq^{\phi}_{\fv{\E}} \unif'
    &\entail{C'&}{\phi C}
    &\entail{&}{\phi \schm \preceq \schm'}
    &\Sv&\subset\E
  \end{align*}
  where $\schm' = \generalize{C'}{\E'}{\tau'}$
  and $\schm = \generalize{C}{\E}{\tau}$
\end{theorem}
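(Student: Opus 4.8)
The plan is to prove completeness by structural induction on the syntax-directed derivation $\inferS{C'}{\E'}{e}{\tau'}$, following the completeness proof for \hmx \citep{sulzmann1997proofs} but carrying the extra data that distinguishes our algorithm: kind inference, the usage environment $\Sv$ that records leaf usage, and the inference-side relations $\bsplit{C_s}{\Sv}{\Sv_1}{\Sv_2}$ and $\bregion{C_r}{x}{\Sv}{\Sv'}$. I would first prove a small auxiliary lemma, completeness of kind inference: if $\inferSK{C'}{\E'}{\tau}{k'}$ and $\entail{}{\unif'\E \preceq \E'}$ then $\inferK{(C,\unif)}{\Eflat\E}{\tau}{k}$ with the analogous most-general conclusions. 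Its proof is a routine induction that merges sub-unifiers with $\meet$ and appeals to the principal-normal-form property of $\operatorname{normalize}$ (\cref{lemma:normalform}); I would then reuse it inside the main induction wherever a kinding premise occurs (\textsc{Region}, \textsc{Create}, \textsc{Observe}, \ldots).

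For the base cases --- \textsc{Var}, \textsc{Const}, the borrow forms, and the resource primitives --- the syntax-directed rule instantiates a scheme by some specific substitution while the inference rule instantiates the same scheme with fresh variables and then normalizes. I would take $\phi$ to be that specific substitution, suitably restricted; then $\entail{C'}{\phi C}$ holds because the only content of $C$ beyond the instantiated $C_x$ is the discardability constraint $\Cleq{\Sv\Sdel{x}}{\kaff_\infty}$, which the syntax-directed rule also imposed (directly or through $\Weaken$), and $\unif \leq^{\phi}_{\fv{\E}} \unif'$ follows from \cref{lemma:normalform}. The side condition $\Sv\subset\E$ is immediate since $\Sv$ is a single leaf binding.

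In the inductive cases I would, for each rule, apply the induction hypothesis to every premise, obtaining principal $(C_i,\unif_i)$ and residual $\phi_i$; combine the unifiers via $\meet$ exactly as the rule does; form the conjunction $D$ of all accumulated constraints together with the new structural constraints (for \textsc{App}, $\Cleq{\tau_1}{\tau_2\tarr{\kvar}\tvar}$; for \textsc{MatchPair}, the pair-shape constraint; etc.); and set $(C,\unif) = \normalize{D}{\unif'}$, so that \cref{lemma:normalform} again yields a $\phi$ with $\entail{C'}{\phi C}$ and the required factorisation of unifiers. In the multi-premise rules \textsc{App}, \textsc{Pair}, \textsc{Let}, \textsc{MatchPair} I would use \cref{split:gen} to show that the constraint $C_s$ emitted by the inference-side split is entailed under $\phi$ by the splitting constraint of the syntax-directed derivation --- its hypotheses hold because $\Eflat{\Sv_i}\subseteq\E$ by \cref{lemma:env:flat} and $\entail{}{\unif'\E\preceq\E'}$ --- and \cref{split:flat} to propagate the flattening identity, which maintains $\Sv\subset\E$. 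At every binder I would discharge $\Weaken_{\bvar{x}{\schm}}(\Sv)$ by the case split on $x\in\Sv$: if $x$ is used, $\Weaken = \Ctrue$; otherwise $\Weaken = \Cleq{\schm}{\kaff_\infty}$, which is exactly the weakening move validated by \cref{lemma:typ:weakening}. For \textsc{Region} I argue symmetrically with the inference-side $\bregion{C_r}{x}{\Sv}{\Sv'}$ and the borrowing analogue of \cref{split:gen}; for \textsc{Let} and \textsc{PLet} I additionally establish $\entail{}{\phi\schm\preceq\schm'}$ by the standard \hmx generalization argument, using the cylindric laws for the projection operator and the definition of generalization, which quantifies over $(\fv{\tau}\cup\fv{C})\setminus\fv{\E}$. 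Whenever the induction hypothesis delivers a derivation over a too-small environment, I invoke \cref{infer:extend}.

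The main obstacle I expect is not any single rule but the interaction of the usage-environment bookkeeping with the two environment relations: unlike plain \hmx, where the environment is a passive parameter, here $\Sv$ is an output that the algorithm threads through the inference-side splitting and borrowing judgments while the syntax-directed system threads linearity through $\lsplit{C}{\E}{\E_1}{\E_2}$ and $\lregion{C}{x}{\E}{\E'}$. Reconciling the two --- recovering under $\phi$ every constraint the syntax-directed derivation imposed on the split or region, keeping $\Sv\subset\E$, and ensuring that the fresh region index of \textsc{Region} (and the fresh kinded quantifiers of \textsc{PLet}) do not leak --- is the delicate part, and \textsc{Region}/\textsc{PLet} are exactly where \cref{split:gen}, \cref{split:flat}, \cref{infer:extend} and the borrowing lemmas must be made to fit together. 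The kind-inference side, by contrast, is a second, smaller instance of the same $\operatorname{normalize}$-driven merging and is comparatively painless once the auxiliary lemma is in hand.
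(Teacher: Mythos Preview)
Your proposal is correct and takes essentially the same approach as the paper: structural induction on the syntax-directed derivation, deferring the instantiation/generalization machinery to the \hmx completeness proof of \citet{sulzmann1997proofs}, and handling the new ingredients (usage environments, splitting, borrowing) via \cref{split:gen}, \cref{split:flat}, and \cref{infer:extend} exactly as the paper does in its \textsc{Abs} and \textsc{Pair} cases. One small imprecision: when discharging $\Weaken_{\bvar{x}{\schm}}(\Sv)$ in the $x\notin\Sv$ branch, you invoke \cref{lemma:typ:weakening}, but that lemma goes the wrong direction---it extends a syntax-directed derivation, which is what soundness needs. For completeness you instead need that $C'$ already entails the affinity constraint on $x$'s type, which the paper obtains by a separate inductive argument on the syntax-directed derivation (since $x$ must have been discarded at some leaf via the $\Cleq{\E\Sdel{x}}{\kaff_\infty}$ side condition of \textsc{Var}).
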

\begin{proof}
  Most of the difficulty of this proof comes from proper handling of
  instanciation and generalization for type-schemes.
  This part is already proven
  by \citet{sulzmann1997proofs} in the context
  of \hmx. As before, we will only present few cases
  which highlights the handling of bindings and environments.
  For clarity, we will only present the part of the proof that
  only directly relate to the new aspect introduced by \lang.
  \\



  \mcase{$
    \inferrule[Abs]
    {
      \inferS{C'}
      {\E'_x;\bvar{x}{\tau'_2}}{e}{\tau'_1} \\
      \addlin{\entail{C}{\Cleq{\E'_x}{k}}}
    }
    { \inferS{C'}{\E'_x}
      {\lam[k]{x}{e}}{\tau'_2\tarr{k}\tau'_1} }
    $ and $\entail{}{\unif'\E \preceq \E'}$.}

  Let us pick $\tvar$ and $\kvar$ fresh.
  Wlog, we choose $\unif'(\tvar) = \tau_2$ and $\unif'(\kvar) = k$
  so that $\entail{}{\unif'\E_x \preceq \E'_x}$.
  By induction:
  \begin{align*}
    \inferW{\Sv_x}{(C,\unif)}{\Eflat\E_x;\bvar{x}{\tvar}&}{e}{\tau}
    &\unif &\leq^{\phi}_{\fv{\E_x}\cup\{\tvar;\kvar\}} \unif'
  \end{align*}
  \begin{align*}
    \entail{C'&}{\phi C}
    &\entail{&}{\phi \schm \preceq \schm'}
    &\Sv_x&\subset\E_x;\bvar{x}{\tvar}
  \end{align*}
  \begin{align*}
    \schm' &= \generalize{C'}{\E_x';\bvar{x}{\tau'_2}}{\tau'_1}
    &\schm &= \generalize{C}{\E_x;\bvar{x}{\tvar}}{\tau_1}
  \end{align*}

  Let $C_a = C\Cand \Cleq{\Sv}{\kvar} \Cand \Weaken_{\bvar{x}{\tvar}}(\Sv_x)$
  and
  By definition, $\unif_D\Sdel{\alpha;\kvar} \leq^{\phi_D}_{\fv{\E_x}} \unif$
  which means we have
  $\unif_D\Sdel{\alpha;\kvar} \leq^{\phi\circ\phi_D}_{\fv{\E_x}} \unif'$.
  We also have that $\Sv_x\Sdel{x}\subset \E_x$.

  Since $\entail{C}{\Cleq{\E'_x}{k}}$, we have $\entail{C}{\unif'\Cleq{\Sv}{\kvar}}$.
  If $x\in\Sv_x$, then $\Weaken_{\bvar{x}{\tvar}}(\Sv_x) = \Ctrue$.
  Otherwise we can show by induction
  that $\entail{C'}{\unif'\Weaken_{\bvar{x}{\tvar}}(\Sv_x)}$.
  We also have $\unif C = C$, which gives us $\entail{C'}{\unif'(C_a)}$.
  We can deduce
  $\entail{C'}{\unif'(C_a)}$.\\
  This means $(C',\unif')$ is a normal form of $C_a$, so a principal normal form
  exists. Let $(D,\unif_D) = \normalize{C_a}{\unif\Sdel{\alpha;\kvar}}$.
  By the property of principal normal forms,
  we have $\entail{C'}{\rho D}$ and
  $\unif_D \leq^{\rho}_{\fv{\E_x}} \unif'$.

  By application of {\sc Abs$_I$}, we have
  $\inferW{\Sv_x\Sdel{x}}{(C,\unif_D\Sdel{\tvar,\kvar})}{\Eflat\E_x}
  {\lam{x}{e}}{\unif_D(\tvar)\tarr{\unif_D(\kvar)}\tau_1}$.\\
  The rest of the proof proceeds as in the original \hmx proof.
  \\


  \mcase{$
    \inferrule[Pair]
    {
      \inferS{C'}{\E'_1}{e_1}{\tau'_1} \\
      \inferS{C'}{\E'_2}{e_2}{\tau'_1} \\
      \lsplit{C}{\E'}{\E'_1}{\E'_2}
    }
    { \inferS{C'}
      {\E'}{\app{e_1}{e_2}}{\tyPair{\tau'_1}{\tau'_2}} }
    $\\
    and $\entail{}{\unif'\E \preceq \E'}$}

  The only new elements compared to \hmx is
  the environment splitting.
  By induction:
  \begin{align*}
    \inferW{\Sv_1}{(C_1,\unif_1)}{\Eflat\E_1&}{e}{\tau_1}
    &\unif_1 &\leq^{\phi_2}_{\fv{\E}} \unif'_1
    &\entail{C'&}{\phi C_1}
    &\entail{&}{\phi_2 \schm_1 \preceq \schm'_1}
  \end{align*}
  \begin{align*}
    \Sv_1&\subset\E_1
    &\schm'_1 &= \generalize{C'}{\E'_1}{\tau'_1}
    &\schm_1 &= \generalize{C}{\E_1}{\tau_1}
  \end{align*}
  and
  \begin{align*}
    \inferW{\Sv_2}{(C_2,\unif_2)}{\Eflat\E_2&}{e}{\tau_2}
    &\unif_2 &\leq^{\phi_2}_{\fv{\E}} \unif'_2
    &\entail{C'&}{\phi C_2}
  \end{align*}
  \begin{align*}
    \entail{&}{\phi_2 \schm_2 \preceq \schm'_2}
    &\Sv_2&\subset\E_2
  \end{align*}
  \begin{align*}
    \schm'_2 &= \generalize{C'}{\E'_2}{\tau'_2}
    &\schm_2 &= \generalize{C}{\E_2}{\tau_2}
  \end{align*}

  By \cref{split:flat,infer:extend}, we have
  \begin{align*}
    \inferW{\Sv_1}{(C_1,\unif_1)}{\Eflat\E&}{e}{\tau_1}
    &\inferW{\Sv_2}{(C_2,\unif_2)}{\Eflat\E&}{e}{\tau_2}
  \end{align*}

  Let $\bsplit{C_s}{\Sv}{\Sv_1}{\Sv_2}$.
  We know that $\entail{}{\unif'\E \preceq \E'}$,
  $\entail{}{\unif'_i\E_i \preceq \E'_i}$ and $\Sv_i\subset\E_i$.
  By \cref{split:gen},
  we have $\entail{C}{\unif' C_s}$.
  The rest of the proof follows \hmx.

\end{proof}

\begin{corollary}[Principality]
  Let $\inferS{\Ctrue}{\E}{e}{\schm}$ a closed typing judgment.\\
  Then $\inferW{\Sv}{(C,\unif)}{\Eflat\E}{e}{\tau}$
  such that:
  \begin{align*}
    (\Ctrue,\schm_o) &= \generalize{C}{\unif\E}{\tau}
    &\entail{&}{\schm_o \preceq \schm}
  \end{align*}

\end{corollary}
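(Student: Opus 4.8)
The plan is to read off Principality from the Completeness theorem, specialised to the closed setting. First I would unfold the hypothesis: a closed typing judgment $\inferS{\Ctrue}{\E}{e}{\schm}$ amounts to a syntax-directed derivation $\inferS{\Ctrue}{\E}{e}{\tau'}$ with $\schm = \generalize{\Ctrue}{\E}{\tau'}$, where $\E$ is closed, i.e.\ $\fv{\E} = \emptyset$. I then apply the Completeness theorem with $\E' := \E$, $C' := \Ctrue$, the same $\tau'$, and the witnessing substitution $\unif' := \mathrm{id}$; the side condition $\entail{}{\mathrm{id}\,\E \preceq \E}$ is just reflexivity of the instance relation on environments, a standard \hmx fact used throughout the Completeness proof. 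Completeness then hands back an inference derivation $\inferW{\Sv}{(C,\unif)}{\Eflat{\E}}{e}{\tau}$ together with a substitution $\phi$ such that $\unif \leq^{\phi}_{\fv{\E}} \mathrm{id}$, $\entail{}{\phi C}$, $\Sv \subseteq \E$, and $\entail{}{\phi\,\schm_{\mathrm{inf}} \preceq \schm}$, where $\schm_{\mathrm{inf}} = \generalize{C}{\E}{\tau}$.

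Next I would exploit closedness of $\E$. Since $\fv{\E} = \emptyset$ we have $\unif\E = \E$, hence $\generalize{C}{\unif\E}{\tau} = \generalize{C}{\E}{\tau}$, whose scheme component is exactly $\schm_{\mathrm{inf}}$; I take this as $\schm_o$. Its residual-constraint component is the existential projection of $C$ over $(\fv{\tau}\cup\fv{C})\setminus\fv{\E} = \fv{\tau}\cup\fv{C}$. Because $\entail{}{\phi C}$ exhibits $\phi$ as a solution of $C$, the constraint $C$ is satisfiable, so by the cylindric axioms of $\CL$ we get $\exists\,\fv{C}.\,C \equivC \Ctrue$, and projecting further over the variables of $\fv{\tau}\setminus\fv{C}$ (which do not occur in $C$) changes nothing. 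Hence the residual constraint is $\Ctrue$, giving $(\Ctrue,\schm_o) = \generalize{C}{\unif\E}{\tau}$ as the statement demands.

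Finally, $\schm_o = \generalize{C}{\E}{\tau}$ quantifies over every variable in $\fv{\tau}\cup\fv{C}$ — nothing is withheld, since $\fv{\E}=\emptyset$ — so $\schm_o$ is closed and therefore fixed by the capture-avoiding substitution $\phi$, i.e.\ $\phi\,\schm_o = \schm_o$ up to $\alpha$-renaming. Substituting this into $\entail{}{\phi\,\schm_{\mathrm{inf}} \preceq \schm}$ yields $\entail{}{\schm_o \preceq \schm}$, completing the argument.

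I expect the main obstacle to be bookkeeping rather than anything conceptual. The two delicate points are: (i) matching the hypothesis "closed typing judgment $\inferS{\Ctrue}{\E}{e}{\schm}$", which mentions a \emph{scheme}, against the Completeness theorem, which is phrased with a conclusion \emph{type} $\tau'$ and its generalisation $\schm' = \generalize{\Ctrue}{\E}{\tau'}$ — this requires being explicit about where top-level generalisation happens and checking the two occurrences of $\operatorname{gen}$ agree; and (ii) justifying that the residual constraint of $\generalize{C}{\E}{\tau}$ collapses to $\Ctrue$, which hinges on satisfiability of $C$, recovered only indirectly from $\entail{}{\phi C}$. Both reduce to properties of the cylindric constraint system $\CL$ and the principal-normal-form lemma already established, but they are the places where care is needed.
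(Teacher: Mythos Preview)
Your proposal is correct and matches the paper's approach: the paper states Principality as a corollary immediately after the Completeness theorem and gives no separate proof, so the intended argument is precisely to specialise Completeness to the closed setting, which is what you do. The bookkeeping you flag (unfolding ``closed typing judgment'' to recover an underlying $\tau'$, and collapsing the residual projected constraint to $\Ctrue$ via satisfiability and the cylindric axioms) is the standard \hmx reasoning that the paper leaves implicit; you have filled it in accurately.
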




\clearpage
\section{Semantics Definitions}
\label{sec:semant-defin}



\begin{figure*}
  \begin{minipage}[t]{0.52\linewidth}
    \lstinputlisting[style=rule,linerange=eval\ header-(**)]
    {syntax/semanticsannotated.ml}
  \lstsemrule{const}
  \medskip
  \lstsemrule{var}
  \medskip
  \lstsemrule{varinst}
  \medskip
  \lstsemrule{lam}
  \medskip
  \lstsemrule{sapp}
  \medskip
  \lstsemrule{slet}
  \end{minipage}
  \begin{minipage}[t]{0.47\linewidth}
  \lstsemrule{sletfun}
  \medskip
  \lstsemrule{spair}
  \medskip
  \lstsemrule{smatch}
  \medskip
  \lstsemrule{matchborrow}
  \end{minipage}
  \caption{Big-step interpretation}
  \label{fig:full-big-step-interpretation}
\end{figure*}

\begin{figure*}[tp]
  \begin{minipage}[t]{0.49\linewidth}
  \lstsemrule{sregion}
  \medskip
  \lstsemrule{sborrow}
  \medskip
  \lstsemrule{sdestroy}
  \end{minipage}
  \begin{minipage}[t]{0.49\linewidth}
  \lstsemrule{screate}
  \medskip
  \lstsemrule{sobserve}
  \medskip
  \lstsemrule{supdate}
  \end{minipage}
  \caption{Big-step interpretation (resources)}
  \label{fig:full-big-step-interpretation-resources}
\end{figure*}


\cref{fig:full-big-step-interpretation} presents  the full big-step
interpretation. \cref{fig:full-big-step-interpretation-resources}
contains the cases for resources.


\section{Proofs for Metatheory}
\label{sec:metatheory:proofs}

\begin{itemize}
\item
  For simplicity, we only consider terms in A-normal forms following the grammar:
  \begin{align*}
    e ::=~& \ldots \mid \aiapp{x}{x'} \mid \aipair{k}{x}{x'} \mid \aimatchin\etransfm{x,y}{z}{e}
  \end{align*}
  Typing and semantics rules are unchanged.
\item Borrow qualifiers  $\BQ ::= \IBORROW_n \mid \MBORROW_n$ where
  $n\ge0$ is a region level. A vector of borrow qualifiers $\Multi\BQ$
  is wellformed if all $\IBORROW$s come before all $\MBORROW$s in the vector. 
\item Borrow compatibility
  $\Multi\BQ \Bcompatible \BQ$,
  \begin{mathpar}
  \inferrule{}{
    \BQ_n\Multi\BQ \Bcompatible \BQ_n
  }
  %
  \end{mathpar}
\item Store typing $ \vdash \Store : \SE$,
  \begin{mathpar}
    \inferrule{
      (\forall \Loc \in \Dom\Store)~~
      \SE \vdash \Store (\Loc) : \SE (\Loc)
    }{ \vdash \Store : \SE }
  \end{mathpar}
\item Relating storables to type schemes $\SE \vdash w : \schm$

  We write $\Disjoint\E$ for $\Active\VEnv$ and $\MutableBorrows\VEnv$
  and $\ImmutableBorrows\VEnv$ and $\Suspended\VEnv$ are all disjoint.
  \begin{mathpar}
  \inferrule{
    (\exists \E)~ \SE \vdash \VEnv : \E
    \\
    \Disjoint\E
    \\
    \inferS{C}{\E; \bvar x{\tau_2}}{e}{\tau_1}
    \\
    \Multi\tvar = \fv{\tau_1,\tau_2} \setminus \fv{\E}
  }{
    \SE \vdash (\VEnv, \ilam {\Multi\kvar}{\Multi\tvar}Ckx{e})
    : \forall\Multi\kvar\forall\Multi{\bvar{\tvar}{k}}.(\qual{C}{\tau_2\tarr{k}\tau_1})
  }
  \end{mathpar}
\item Relating storables to types $ \SE \vdash w : \tau$
  \begin{mathpar}
    \ruleStorableClosure
    
    \ruleStorablePair

    \ruleStorableResource

    \ruleStorableFreed
  \end{mathpar}
\item Relating results to type schemes $\SE \vdash r : \schm$
  \begin{mathpar}
    \ruleResultConstant

    \ruleResultLocation

    \ruleResultBorrow
  \end{mathpar}
\item
We write $\Affine\SE\Loc$ to express that $\Loc$ points to a resource
that requires at least affine treatment. Borrow types do not appear in
store types as the store only knows about the actual resources.

Define  $\Affine\SE\Loc$ if one of the following cases holds:
\begin{itemize}
\item $\SE (\Loc) =
  \forall\Multi\kvar\forall\Multi{\bvar{\tvar}{k}}.(\qual{C}{\tau_2\tarr{k}\tau_1})$
  and $C \wedge (k \lk \kun_\infty)$ is contradictory;
\item $\SE (\Loc) = \tau_2\tarr{k}{\tau_1}$ and $\Cleq{\kaff}{k}$;
\item $\SE (\Loc) = \tyPair[k]{\tau_1}{\tau_2}$ and $\Cleq \kaff
  k$;
\item $\SE (\Loc) = \tapp{\tcon}{\Multi\tau}$.
\end{itemize}
\item
We write $\Linear\SE\Loc$ to express that $\Loc$ points to a linear
resource.

Define  $\Linear\SE\Loc$ if one of the following cases holds:
\begin{itemize}
\item $\SE (\Loc) =
  \forall\Multi\kvar\forall\Multi{\bvar{\tvar}{k}}.(\qual{C}{\tau_2\tarr{k}\tau_1})$
  and $C \wedge (k \lk \kaff_\infty)$ is contradictory;
\item $\SE (\Loc) = \tau_2\tarr{k}{\tau_1}$ and $\Cleq{\klin}{k}$;
\item $\SE (\Loc) = \tyPair[k]{\tau_1}{\tau_2}$ and $\Cleq \klin
  k$;
\item $\SE (\Loc) = \tapp{\tcon}{\Multi\tau}$.
\end{itemize}
\item
It remains to characterize unrestricted resources.
Define $\Unrestricted\SE\Loc$ if neither $\Affine\SE\Loc$ nor
$\Linear\SE\Loc$ holds.
\item
  Relating environments to contexts\\
  $\SE \vdash \Active\VEnv, \MutableBorrows\VEnv,
\ImmutableBorrows\VEnv, \Suspended[\kaff]\VEnv, \Suspended[\kun]\VEnv : \E$.

Here we consider an
environment $\VEnv = (\Active\VEnv, \MutableBorrows\VEnv,
\ImmutableBorrows\VEnv, \Suspended\VEnv)$ as a tuple
consisting of the active entries in $\Active\VEnv$ and the
entries for exclusive borrows in $\MutableBorrows\VEnv$ and for
shared borrows in $\ImmutableBorrows\VEnv$, and suspended entries
in $\Suspended\VEnv = \Suspended[\kaff]\VEnv, \Suspended[\kun]\VEnv$
for affine and unrestricted entries. The
suspended entries cannot be used directly, but they can be activated
by appropriate borrowing on entry to a region.
\end{itemize}

\begin{mathpar}
  \inferrule{}{\SE \vdash \Sempty, \Sempty, \Sempty, \Sempty : \Eempty}
  \\
  \inferrule{
    \SE \vdash \Active\VEnv, \MutableBorrows\VEnv,
    \ImmutableBorrows\VEnv, \Suspended\VEnv : \E
    \\ \SE \vdash r : \schm
    \\ \Linear\SE r}
  {\SE \vdash \Active\VEnv[ x\mapsto r], \MutableBorrows\VEnv ,
    \ImmutableBorrows\VEnv, \Suspended\VEnv : \E;\bvar x\schm }

  \inferrule{
    \SE \vdash \Active\VEnv, \MutableBorrows\VEnv,
    \ImmutableBorrows\VEnv, \Suspended\VEnv : \E
    \\ \SE \vdash r : \schm
    \\ \Affine\SE r}
  {\SE \vdash \Active\VEnv, \MutableBorrows\VEnv[ x\mapsto r] ,
    \ImmutableBorrows\VEnv, \Suspended\VEnv : \E;\bvar x\schm }

  \inferrule{
    \SE \vdash \Active\VEnv, \MutableBorrows\VEnv,
    \ImmutableBorrows\VEnv, \Suspended\VEnv : \E
    \\ \SE \vdash r : \schm
    \\ \Unrestricted\SE r}
  {\SE \vdash \Active\VEnv, \MutableBorrows\VEnv,
    \ImmutableBorrows\VEnv[ x\mapsto r], \Suspended\VEnv : \E;\bvar
    x\schm }

  \inferrule{\SE \vdash \Active\VEnv, \MutableBorrows\VEnv,
    \ImmutableBorrows\VEnv, \Suspended[\kaff]\VEnv, \Suspended[\kun]\VEnv : \E \\
    \SE \vdash r : \schm}
  { \SE \vdash \Active\VEnv, \MutableBorrows\VEnv,
    \ImmutableBorrows\VEnv, \Suspended[\kaff]\VEnv, \Suspended[\kun]\VEnv[ x\mapsto r] :
    \E;\svar[\IBORROW] x\schm^n }

  \inferrule{\SE \vdash \Active\VEnv, \MutableBorrows\VEnv,
    \ImmutableBorrows\VEnv, \Suspended[\kaff]\VEnv, \Suspended[\kun]\VEnv : \E \\
    \SE \vdash r : \schm}
  { \SE \vdash \Active\VEnv, \MutableBorrows\VEnv,
    \ImmutableBorrows\VEnv, \Suspended[\kaff]\VEnv[ x\mapsto r], \Suspended[\kun]\VEnv :
    \E;\svar[\MBORROW] x\schm^n }

  \inferrule{\SE \vdash \Active\VEnv, \MutableBorrows\VEnv,
    \ImmutableBorrows\VEnv, \Suspended\VEnv : \E \\ \SE \vdash
    \IBORROW\Addr : \schm}
  {\SE \vdash \Active\VEnv, \MutableBorrows\VEnv,
    \ImmutableBorrows\VEnv[ x\mapsto \IBORROW\Addr], \Suspended\VEnv :
    \E;\bbvar[\IBORROW] x k{ \schm} }

  \inferrule{\SE \vdash \Active\VEnv, \MutableBorrows\VEnv,
    \ImmutableBorrows\VEnv, \Suspended\VEnv : \E \\ \SE \vdash
    \MBORROW\Addr : \schm}
  {\SE \vdash \Active\VEnv, \MutableBorrows\VEnv[ x\mapsto \MBORROW\Addr],
    \ImmutableBorrows\VEnv, \Suspended\VEnv
    : \E;\bbvar[\MBORROW] x k{ \schm} }
\end{mathpar}
\paragraph{Extending environments and stores}
\begin{mathpar}
  \inferrule{}{\SE \le \SE}

  \inferrule{\SE \le \SE' \\ \Loc \notin \Dom\Store}{\SE \le \SE' (\Loc : \schm)}
\\
  \inferrule{}{\Store\le\Store}

  \inferrule{\Store \le \Store' \\ \Loc \notin \Dom\Store
  }{\Store \le \Store'[ \Loc \mapsto w] }
\end{mathpar}

\begin{lemma}[Store Weakening]\label{lemma:store-weakening}
  $\SE \vdash \VEnv : \E$ and $\SE \le \SE'$ implies $\SE' \vdash
  \VEnv : \E$.
\end{lemma}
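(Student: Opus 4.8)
The plan is to prove this by a simultaneous induction over the family of $\SE$-indexed ``relating'' judgments: results to types/schemes ($\SE \vdash r : \tau$, $\SE \vdash r : \schm$), storables to types/schemes ($\SE \vdash w : \tau$, $\SE \vdash w : \schm$), and environments to contexts $\SE \vdash \VEnv : \E$ (the judgment in the statement). These judgments are mutually recursive, since a closure storable $\StClosure\VEnv kxe$ (and likewise a poly-closure storable) embeds a sub-derivation $\SE \vdash \VEnv : \E'$; however, every recursive appeal is to a strictly smaller derivation, so the induction is well-founded on derivation height. Informally, the content of the lemma is just that $\SE$ is consulted only positively and only at locations in its own domain.

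First I would record the monotonicity facts about the classification predicates. Since $\SE \le \SE'$ means $\SE'$ agrees with $\SE$ on $\Dom\SE$ and only adds fresh locations, for every $\Loc \in \Dom\SE$ we have $\SE'(\Loc) = \SE(\Loc)$; hence $\Affine\SE\Loc \Leftrightarrow \Affine{\SE'}\Loc$, $\Linear\SE\Loc \Leftrightarrow \Linear{\SE'}\Loc$, and $\Unrestricted\SE\Loc \Leftrightarrow \Unrestricted{\SE'}\Loc$, because each of these predicates is a function of $\SE(\Loc)$ alone. I would also observe that a derivation of $\SE \vdash r : \schm$ or $\SE \vdash w : \tau$ only inspects $\SE$ at locations of $\Dom\SE$: rule \TirName{ResultLocation} requires $\SE(\Loc)$ to be defined, and the remaining rules recurse on sub-results and sub-storables. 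This is precisely what makes the weakening go through.

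Then the induction itself is routine. For \TirName{ResultConstant} the conclusion does not mention $\SE$; for \TirName{ResultLocation}, $\Loc \in \Dom\SE$, so $\SE'(\Loc) = \SE(\Loc)$ yields $\SE' \vdash \Loc : \SE'(\Loc)$; \TirName{ResultBorrow} follows by the IH on the underlying location. For storables, \TirName{StorableFreed} is immediate, \TirName{StorablePair} and \TirName{StorableResource} follow by the IH on the components (the side condition in \TirName{StorablePair} concerns only types and kinds and is independent of $\SE$), and \TirName{StorableClosure} and the poly-closure rule follow by applying the IH (the environment case) to the embedded $\SE \vdash \VEnv : \E'$, noting that the disjointness requirement $\Disjoint{\E'}$ and the typing derivation $\inferS{C}{\E'; \bvar{x}{\tau_2}}{e}{\tau_1}$ are untouched by changing $\SE$ to $\SE'$. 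Finally, for $\SE \vdash \VEnv : \E$ itself, the empty case is immediate, and each of the six cons-rules extends $\VEnv$ by an entry $r$ whose classification is preserved by the monotonicity facts above and whose typing $\SE' \vdash r : \schm$ is supplied by the already-established result-weakening; the tail environment is handled by the IH.

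The main obstacle I anticipate is purely organizational: setting up the simultaneous induction hypothesis so that the appeal inside \TirName{StorableClosure} (and the poly-closure rule) to the environment-relating judgment is licensed, rather than any genuine mathematical difficulty. Since $\SE$, $\Store$, $\VEnv$, and every closure body are finite, a single structural induction on the relevant derivation tree discharges all cases, and no step requires any real calculation.
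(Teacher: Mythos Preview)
Your proposal is correct and is exactly the argument one would expect; the paper itself states this lemma without proof, so there is nothing to compare against beyond noting that your simultaneous induction on the mutually defined judgments (results, storables, environments) is the natural route and all cases go through for the reasons you give.
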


\begin{lemma}[Store Extension]\label{lemma:store-extension-transitive}
  \begin{itemize}
  \item If $\SE_1 \le \SE_2$ and $\SE_2 \le \SE_3$, then $\SE_1 \le
    \SE_3$.
  \item If $\Store_1 \le \Store_2$ and $\Store_2 \le \Store_3$, then
    $\Store_1 \le \Store_3$.
  \end{itemize}
\end{lemma}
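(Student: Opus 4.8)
The plan is to prove each clause of Lemma~\ref{lemma:store-extension-transitive} by a straightforward induction on the derivation of the \emph{second} extension hypothesis. Take the store-type case first: assume $\SE_1 \le \SE_2$ and $\SE_2 \le \SE_3$, and induct on the derivation of $\SE_2 \le \SE_3$. If this derivation is the reflexivity axiom, then $\SE_3 = \SE_2$ and the goal $\SE_1 \le \SE_3$ is literally the first hypothesis. If it ends with the extension rule, then $\SE_3 = \SE_3'(\Loc : \schm)$ for some $\SE_3'$, $\Loc$, $\schm$ with $\SE_2 \le \SE_3'$ and $\Loc$ fresh for $\SE_3'$. Applying the induction hypothesis to $\SE_1 \le \SE_2$ and $\SE_2 \le \SE_3'$ gives $\SE_1 \le \SE_3'$, and re-applying the extension rule — whose side condition ($\Loc \notin \Dom{\SE_3'}$) is available verbatim from the derivation we just decomposed — yields $\SE_1 \le \SE_3'(\Loc : \schm) = \SE_3$.

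The store case has exactly the same shape: induct on the derivation of $\Store_2 \le \Store_3$, with the reflexivity axiom as the base case and the ``add a fresh location $\Loc \mapsto w$'' rule as the inductive step, again reusing the freshness side condition $\Loc \notin \Dom{\Store_3'}$ directly. In particular, no auxiliary facts about domain inclusion are needed, because the only side condition required to re-apply the extension rule is precisely the one carried by the derivation being analysed.

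There is essentially no obstacle here; the one point to get right is to recurse on the right-hand derivation rather than the left one. Inducting on $\SE_1 \le \SE_2$ would not compose cleanly, since the fresh bindings introduced by $\SE_2 \le \SE_3$ sit on top of $\SE_2$ and would have to be threaded through; recursing on $\SE_2 \le \SE_3$ instead lets us peel off the topmost binding, invoke the induction hypothesis on the shorter chain, and glue that binding back on. (An alternative route would be to first establish the characterization ``$\SE \le \SE'$ iff $\Dom\SE \subseteq \Dom{\SE'}$ and $\SE,\SE'$ agree on $\Dom\SE$'' and then reduce transitivity of $\le$ to transitivity of $\subseteq$ together with agreement; but the direct induction is shorter and avoids that detour.)
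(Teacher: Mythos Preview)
The paper states this lemma without proof, so there is no reference argument to compare against. Your overall plan---induct on the derivation of the second hypothesis and peel off the topmost binding---is the natural one and works. However, you have misread the side condition of the extension rule. In the paper's rule
\begin{mathpar}
  \inferrule{\Store \le \Store' \\ \Loc \notin \Dom\Store}{\Store \le \Store'[ \Loc \mapsto w] }
\end{mathpar}
the freshness premise is $\Loc \notin \Dom\Store$, i.e.\ fresh for the \emph{left} operand, not for $\Store'$ (and the $\SE$ rule, modulo an evident typo, is analogous). So when you invert $\Store_2 \le \Store_3'[\Loc \mapsto w]$ what you obtain is $\Loc \notin \Dom{\Store_2}$, not $\Loc \notin \Dom{\Store_3'}$ as you write. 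To re-apply the rule and conclude $\Store_1 \le \Store_3'[\Loc \mapsto w]$ you then need $\Loc \notin \Dom{\Store_1}$; this does not come ``verbatim'' but requires the monotonicity $\Dom{\Store_1} \subseteq \Dom{\Store_2}$, which itself follows from $\Store_1 \le \Store_2$ by a one-line induction. So your parenthetical claim that ``no auxiliary facts about domain inclusion are needed'' is off; the missing step is routine, but it is there, and the same correction applies to the $\SE$ clause.
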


We write $\Rawloc\cdot$ for the function that extracts a multiset of
\emph{raw locations} from a result or from the range of the variable
environment.

\begin{align*}
  \Rawloc{\Multi\BQ\Loc} &= \{\Loc\} \\
  \Rawloc{c} &= \{ \} \\
  \Rawloc\Eempty &= \{\} \\
  \Rawloc{\VEnv( x \mapsto r)} &= \Rawloc\VEnv \cup \Rawloc r
\end{align*}

We write $\Reach\Store\VEnv$ for the multiset of all \emph{addresses}
reachable from $\Rawloc\VEnv$
assuming that $\Rawloc\VEnv \subseteq \Dom\Store$\footnote{In
  mixed comparisons between a multiset and a set, we tacitly convert
  a multiset $M$ to its supporting set $\{ x \mid \MultiNumber x M \ne 0\}$.}.
The function $\Reach\Store\cdot$ is defined in
two steps. First a helper function
for results, storables, and environments.

\begin{align*}
  \RS\Store\Eempty &= \Eempty \\
  \RS\Store{\VEnv (x \mapsto r)} &= \RS\Store\VEnv \cup
                                      \RS\Store r \\
  \RS\Store\Addr &= \{ \Addr \}  \\
  \RS\Store c &= \{ \} \\
  \RS\Store{\StPClosure \VEnv {\Multi\kvar} C k x e} &=
                                       \RS\Store\VEnv
  \\
  \RS\Store{\StClosure \VEnv k x e} &=
                                                   \RS\Store\VEnv
  \\
  \RS\Store{\StPair k {r_1} {r_2}} &=
                                                   \RS\Store{r_1}
                                                   \cup \RS\Store{r_2}
  \\
  \RS\Store{\StRes r} &=
                                   \RS\Store r
  \\
  \RS\Store{\StFreed} &= \{ \}
\end{align*}

This multiset is closed transitively by store lookup. We define
$\Reach\Store\VEnv$ as the smallest multiset $\REACH$ that fulfills
the following inequations. We assume a nonstandard
model of multisets such that an element $\Loc$ may occur infinitely often as in
$\MultiNumber\Loc\REACH = \infty$.
\begin{align*}
  \REACH &\supseteq \RS\Store\VEnv \\
  \REACH &\supseteq \RS\Store w & \text{if }
                                     \Multi\BQ\Loc
                                     \in \REACH \wedge w = \Store (\Loc)
\end{align*}

\begin{definition}[Wellformed permission]
  A permission $\Perm$ is \emph{wellformed} if it contains at most one
  address for each raw location.
\end{definition}
\begin{definition}[Permission closure]
  The closure of a permission $\Sclos\Perm$ is the set of addresses
  reachable from $\Perm$ by stripping an arbitrary number of borrows
  from it. It is the homomorphic extension of the closure
  $\Sclos\Addr$ for a single address.
  \begin{align*}
    \Sclos\Loc & = \{ \Loc \} & \Sclos{(\BQ\Addr)} & = \{ \BQ\Addr \}  \cup \Sclos\Addr
  \end{align*}
\end{definition}

\begin{lemma}[Containment]\label{lemma:containment}
  Suppose that $\vdash \Store : \SE$,
  $\SE \vdash r : \tau$, $\entail C
  {\Cleq{\tau}{k} \Cand \Cleq{k}{\klin_{m-1}}}$.
  Then $\Reach\Store r$ cannot contain addresses $\Addr$ such that
  $\Addr = \BORROW_n\Addr'$ with $n\ge m$.
\end{lemma}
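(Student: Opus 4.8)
The plan is to prove the lemma via a strengthened invariant on the whole reachable set rather than on $r$ alone: for every address $\Addr \in \Reach\Store r$ there is a type $\tau_\Addr$ with $\SE \vdash \Addr : \tau_\Addr$ and $\entail{C}{\Cleq{\tau_\Addr}{\klin_{m-1}}}$, i.e.\ the region level of $\tau_\Addr$ is at most $m-1$. Granting this, the lemma follows at once: if some reachable $\Addr$ were of the form $\BORROW_n\Addr'$ with $n\ge m$, inversion of the result-typing rule for borrows forces $\tau_\Addr = \borrowty[\BORROW]{\BORROW_n}{\tau''}$, so by \textsc{KBorrow} the kind of $\tau_\Addr$ is $\BORROW_n$; but $\BORROW_n \not\lk \klin_{m-1}$ whenever $n\ge m$, since the level side-condition $n\lk m-1$ of \textsc{Lat-Level} fails, contradicting the invariant (for the satisfiable $C$ that appears in any well-formed typing). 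The intermediate kind $k$ of the hypothesis is inessential, since $\entail{C}{\Cleq{\tau}{k}\Cand\Cleq{k}{\klin_{m-1}}}$ yields $\entail{C}{\Cleq{\tau}{\klin_{m-1}}}$ by transitivity of $\lk$, which is exactly the base case of the invariant.

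I would establish the invariant by rule induction on the least-fixpoint definition of $\Reach\Store{\cdot}$, equivalently by induction on the number of store dereferences needed to expose $\Addr$. In the base cases one considers $\RS\Store r$: this set is empty when $r$ is a constant, and is $\{r\}$ with witness $\tau$ when $r$ is an address. For the step, take $\Multi\BQ\Loc \in \Reach\Store r$ with a bounded witness, put $w = \Store(\Loc)$ (so $\SE \vdash w : \SE(\Loc)$ by $\vdash \Store : \SE$), and exhibit bounded witnesses for the members of $\RS\Store w$ by a case analysis on $w$ driven by the storable-typing rules. For a pair $\StPair k{r_1}{r_2} : \tyPair[k]{\tau_1}{\tau_2}$ the rule forces $\entail{}{\Cleq{\tau_1}{k}\Cand\Cleq{\tau_2}{k}}$, so the component kinds lie below the pair kind $k$; for a closure $\StClosure\VEnv k x e : \tau_2\tarr k\tau_1$ (and likewise for poly-closures) the rule forces $\SE \vdash \VEnv : \E$ with $\entail{C}{\Cleq{\E}{k}}$, where $k$ is the kind of the arrow by \textsc{KArr}, so by the environment-typing rules each captured result has kind below $k$; moreover rewriting $\Cleq{\E}{k}$ rejects suspended bindings (rule \textsc{ConstrSusp}), so no suspended entries occur; for a resource $\StRes{r'} : \tapp\tcon{\Multi\tau}$ the content type has kind $\kun_0$ by the resource signatures and is thus trivially bounded (here $m\ge 1$, since region nesting starts at $1$); and $\StFreed$ contributes nothing. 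In each non-degenerate case, provided the kind of $\SE(\Loc)$ is itself $\lk\klin_{m-1}$, the members of $\RS\Store w$ inherit the bound.

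The delicate step, and the place I expect the main work, is when $\Loc$ is reached \emph{underneath} a borrow, i.e.\ the witness of $\Multi\BQ\Loc$ is $\borrowty[\BORROW]{\BORROW_n}{\SE(\Loc)}$ with $n\le m-1$: here \textsc{KBorrow} ties the kind of the borrow type to $n$ but imposes nothing on the kind of $\SE(\Loc)$, so $\SE(\Loc)$ need not be level-bounded in isolation. What must still be shown is that the \emph{raw sub-addresses stored inside} $w = \Store(\Loc)$ respect the bound: a pair (resp.\ closure) storable that literally contained a level-$\ge m$ borrow as a component (resp.\ capture) would, by the storable-typing rules just used, itself have kind level $\ge m$, which is incompatible with being the contents of something genuinely reachable from the level-$\le m-1$ value $r$. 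Making this precise calls for an auxiliary structural fact --- that along any reachability chain from $r$ the region level of the payload types (pair components, closure captures, resource contents) never exceeds that of $r$ --- which I would fold into the same fixpoint induction by strengthening the invariant to also record, for each reachable $\Addr$, the level of the nearest enclosing non-borrow structure. If pure store-typing should prove too weak to license this on its own (an arbitrary well-typed store could a priori place a deep borrow inside a shallow resource), the fallback is to prove Containment simultaneously with the soundness theorem and invoke its reachability invariant R5, which confines a result's reachable set on every reachable state.
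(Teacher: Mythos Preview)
Your approach is essentially the paper's: case analysis on result typing for $r$ (constant, borrowed address, raw location) and, in the location case, on storable typing (freed, resource, pair, closure), with the kind bound propagated to pair components via $\Cleq{\tau_i}{k}$ and to closure captures via $\Cleq{\E}{k}$. The paper's proof is a terse instance of the same argument.

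You are right to single out the borrow case. The paper's treatment of it is the single line ``the typing constraint enforces that $n<m$'', which handles only the address $r$ itself and says nothing about addresses reachable through $\Store(\Loc)$ when $r=\Multi\BQ\Loc$. Your diagnosis is correct: by \textsc{KBorrow} the bound on the borrow type constrains only the level $n$, not the kind of the underlying $\SE(\Loc)$; store typing alone does not prevent $\SE(\Loc)$ from being, say, an arrow type at kind level $\ge m$ whose closure storable captures a level-$\ge m$ borrow. So from the stated hypotheses neither argument closes this case.

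Your attempted direct repair --- that a high-level storable is ``incompatible with being the contents of something genuinely reachable from the level-$\le m-1$ value $r$'' --- does not go through as written, precisely because reachability is untyped and the borrow type's kind places no constraint on the borrowed type's kind. Your fallback is the right instinct: the offending configuration (a low-level borrow of a cell holding a high-level storable) cannot arise operationally, since a value of kind level $\ge m$ is created inside a region at level $\ge m$ and cannot escape it to be borrowed at a shallower level. Closing the gap thus requires either reading Containment against the invariants maintained along the soundness induction, or strengthening its hypotheses so that the kind of every store cell reachable from $r$ is already level-bounded --- not just the kind of $r$'s own type.
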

\begin{proof}
  By inversion of result typing there are three cases.

  \textbf{Case }$\ruleResultConstant$. Immediate: reachable set
  is empty.

  \textbf{Case }$\ruleResultBorrow$. The typing constraint enforces
  that $n < m$.

  \textbf{Case }$\ruleResultLocation$. We need to continue by
  dereferencing $\Loc$ and inverting store typing.

  \textbf{Case }$\ruleStorableFreed$. Trivial.

  \textbf{Case }$\ruleStorableResource$. We assume the implementation
  type of a result to be unrestricted.

  \textbf{Case }$\ruleStorablePair$.

  The typing constraint yields that $k \le \klin_{m-1}$.
  By induction and transitivity of $\le$, we find that
  $\Reach\Store{r_1}$ and $\Reach\Store{r_2}$ cannot contain offending addresses.

  \textbf{Case }$\ruleStorableClosure$.

  The typing constraint yields that $k\le\klin_{m-1}$.
  By transitivity of $\le$ and $\SE \vdash \VEnv:\E$, we find that the
  types of all addresses in 
  $\VEnv$ have types bounded by $\klin_{m-1}$ and, by induction, they
  cannot contain offending addresses.
\end{proof}
\clearpage{}
\lstMakeShortInline[keepspaces,style=rule]@

\SoundnessThm

Some explanations are in order for the resource-related assumptions
and statements.

Incoming resources are always active (i.e., not freed).
Linear and affine resources as well as suspended affine borrows have
exactly one pointer in the environment.

The Frame condition states that only store locations reachable from
the current environment can change and that all permissions outside
the reachable locations remain the same.

Unrestricted values, resources, and borrows do not change their
underlying resource and do not spend their permission.

Affine borrows and resources may or may not spend their
permission. Borrows are not freed, but resources may be freed. The
permissions for suspended entries remain intact.


A linear resource is always freed.

Outgoing permissions are either inherited from the caller or they
refer to newly created values.

\newpage
\begin{proof}
  By induction on the evaluation of
  @eval \Store \Perm \VEnv i e@.

  The base case is trivial as
  @eval \Store \Perm \VEnv 0 e = \TimeOut@.

  For $i>0$ consider the different cases for expressions. For lack of
  spacetime, we only give details on some important cases.

  \textbf{Case $e$ of}
  \lstsemrule{slet}
  We need to invert rule \TirName{Let} for monomorphic let:
  \begin{gather*}
    \ruleSDILet
  \end{gather*}
  As $\Sp$ is the evidence for the splitting judgment and @vsplit@
  distributes values according to $\Sp$, we obtain
  \begin{gather}
    \label{eq:16}
    \SE \vdash \VEnv_1 : \E_1
    \\\label{eq:17}
    \SE \vdash \VEnv_2 : \E_2
  \end{gather}
  Moreover (using $\uplus$ for disjoint union),
  \begin{itemize}
  \item $\Active\VEnv = \Active{\VEnv_1} \uplus\Active{\VEnv_2}$,
  \item $\MutableBorrows\VEnv = \MutableBorrows{\VEnv_1} \uplus
    \MutableBorrows{\VEnv_2}$,
  \item $\ImmutableBorrows\VEnv =
    \ImmutableBorrows{\VEnv_1} = \ImmutableBorrows{\VEnv_2}$,
  \item $\Suspended\VEnv = \Suspended{\VEnv_1} \uplus
    \Suspended{\VEnv_2}$ (this splitting does not distinguish potentially
    unrestricted or affine bindings)
  \end{itemize}
  We establish the assumptions for the call
  @eval \Store \Perm \VEnv_1  i' e_1@.
  \begin{enumerate}[({A1-}1)]
  \item From inversion: $\inferS{C \Cand D}{\E_1}{e_1}{\tau_1} $
  \item From~\eqref{eq:16}: $\SE \vdash \VEnv_1 : \E_1$
  \item From assumption
  \item From assumption
  \item From assumption because $\VEnv_1$ is projected from $\VEnv$.
  \item From assumption because $\VEnv_1$ is projected from $\VEnv$.
  \item From assumption because $\VEnv_1$ is projected from $\VEnv$.
  \end{enumerate}
  Hence, we can apply the induction hypothesis and obtain
  \begin{enumerate}[({R1-}1)]
  \item \resultOk{}{_1}
  \item\label{item:8} \resultEnv{}{_1}
  \item\label{item:9} $\SE_1 \vdash r_1 : \tau_1$
  \item\label{item:14} \resultPermDom{}{_1}
  \item\label{item:13} \resultReachPerm{}{_1}
  \item\label{item:16} \resultFrame{}{_1}{_1}
  \item\label{item:24} \resultImmutables{}{}{_1}
  \item\label{item:26} \resultMutables{}{}{_1}
  \item\label{item:30} \resultResources{}{}{_1}
  \item\label{item:10} \resultThinAir{}{_1}
  \end{enumerate}
  To establish the assumptions for the call \\
  @eval delta_1 pi_1 gamma_2(x-:>r_1) i' e_2@,
  we write $\VEnv_2' = \VEnv (x \mapsto r_1)$.
  \begin{enumerate}[({A2-}1)]
  \item From inversion: $\inferS{C}{\E;\bvar{x}{\tau_1}}{e_2}{\tau_2}$
  \item From~\eqref{eq:17} we have $\SE \vdash \VEnv_2 : \E_2$.
    By store weakening (Lemma~\ref{lemma:store-weakening}) and
    using~\ref{item:8}, we have
    $\SE_1 \vdash \VEnv_2 : \E_2$.
    With~\ref{item:9}, we obtain
    $\SE_1 \vdash \VEnv_2 (x \mapsto r_1 ) : \E_2;\bvar x {\tau_1}$.
  \item Immediate from~\ref{item:8}.
  \item Immediate from~\ref{item:14}.
  \item Show \assumeReachable{_1}{'_2}\\
    From~\ref{item:12}, we have
    \assumeReachable{_1}{_2}
    The extra
    binding $(x \mapsto r_1)$ goes into one of the compartments
    according to its type. We conclude by~\ref{item:13}.
  \item Disjointness holds by assumption for $\VEnv_2$ and it remains
    to discuss $r_1$. But $r_1$ is either a fresh resource, a
    linear/affine resource from $\VEnv_1$ (which is disjoint), or
    unrestricted. In each case, there is no overlap with another
    compartment of the environment.
  \item We need to show
    \assumeIncoming{_1}{_2'}
    The first item holds by assumption, splitting, and (for $r_1$)
    by~\ref{item:14} and~\ref{item:13}.

    The second and third items hold by assumption~\ref{item:15},
    splitting, and framing~\ref{item:16}.
  \end{enumerate}
  Hence, we can apply the induction hypothesis and obtain
  \begin{enumerate}[({R2}-1)]
  \item\label{item:17} \resultOk{_1}{_2}
  \item\label{item:18} \resultEnv{_1}{_2}
  \item\label{item:19} $\SE_2 \vdash r_2 : \tau_2$
  \item\label{item:20} \resultPermDom{_1}{_2}
  \item\label{item:21} \resultReachPerm{_1}{_2}
  \item\label{item:22} \resultFrame{_1}{_2'}{_2}
  \item\label{item:23} \resultImmutables{_1}{_2'}{_2}
  \item\label{item:25} \resultMutables{_1}{_2'}{_2}
  \item\label{item:29} \resultResources{_1}{_2'}{_2}
  \item\label{item:31} \resultThinAir{_1}{_2}
  \end{enumerate}

  It remains to establish the assertions for the let expression.
  \begin{enumerate}[({R}-1)]
  \item \resultOk{_1}{_2}
    \\ Immediate from~\ref{item:17}.
  \item \resultEnv{}{_2}
    \\ Transitivity of store extension
    (Lemma~\ref{lemma:store-extension-transitive}), \ref{item:18},
    and~\ref{item:8}.
  \item  $\SE_2 \vdash r_2 : \tau_2$
    \\ Immediate from~\ref{item:19}.
  \item \resultPermDom{}{_2}
    \\ Immediate from~\ref{item:20}.
  \item \resultReachPerm{}{_2}
    \\ Immediate from~\ref{item:21}
    {because $\Reach{\Store_2}{\VEnv_1}
    \subseteq \Reach{\Store_2}{\VEnv} $ and
    $\Reach{\Store_2}{\Suspended{\VEnv_1}} \subseteq
    \Reach{\Store_2}{\Suspended\VEnv}$.
    Moreover, $\Dom\Store \subseteq \Dom{\Store_1}$.
  }
\item \resultFrame{}{}{_2}
    ~\\ Suppose that $\Loc \in \Dom\Store \setminus
    \Rawloc{\Reach{\Store_2}{\VEnv}}$.
    \\ Then $\Loc \in \Dom\Store \setminus
    \Rawloc{\Reach{\Store_1}{\VEnv_1}}$.
    \\ By~\ref{item:16}, $\Store_1 (\Loc) = \Store (\Loc)$ and for any
    $\Addr$ with $\Rawloc\Addr = \{\Loc\}$: $\Addr \in \Perm
    \Leftrightarrow \Addr \in \Perm_1$.
    \\ But also $\Loc \in \Dom{\Store_1} \setminus
    \Rawloc{\Reach{\Store_2}{\VEnv_2'}}$.
    \\ By~\ref{item:22}, $\Store_2 (\Loc) = \Store_1 (\Loc)$ for
    applicable $\Addr$, $\Addr \in \Perm_1
    \Leftrightarrow \Addr \in \Perm_2$.
    \\ Taken together, we obtain the claim.
  \item \resultImmutables{}{}{_2}
    \\ Follows from~\ref{item:23} or~\ref{item:24} because
    $\ImmutableBorrows\VEnv = \ImmutableBorrows{\VEnv_1} =
    \ImmutableBorrows{\VEnv_2}$.
  \item \resultMutables{}{}{_2}
    \\ Follows from~\ref{item:25},~\ref{item:26}, and framing.
  \item \resultResources{}{}{_2}
    \\ Follows from disjoint splitting of
    $\Active\VEnv$,~\ref{item:29},~\ref{item:30}, and framing.
  \item \resultThinAir{}{_2}
    \\ Immediate from~\ref{item:31}.
  \end{enumerate}

  \newpage{}
  \textbf{Case $e$ of}
  \lstsemrule{sappanf}

  We need to invert rule \TirName{App}:
  \begin{mathpar}
    \ruleSDAIApp
  \end{mathpar}

  We need to establish the assumptions for the recursive call
  @eval \Store' \Perm' \VEnv'(x'-:>r_2) i' e'@.
  We write $\VEnv_2' = \VEnv' (x'\mapsto r_2)$.
  \begin{enumerate}[({A1-}1)]
  \item $\inferS{C'}{\E'; \bvar{x'}{\tau_2}}{e'}{\tau_1}$, for some
    $C'$ and $\E'$\\
    Applying the first premise of \TirName{App} to $r_1 = \VEnv
    (x_1)$, $\SE \vdash \VEnv : \E$, and inversion of result typing
    yields that $r_1 = \Loc_1$ with $\SE (\Loc_1) = \tau_2 \tarr{k}
    \tau_1$.
    By inversion of store typing and storable typing, we find that
    there exist $\E'$ and $C'$ such that
    \begin{enumerate}
    \item $\Store (\Loc_1) = (\VEnv', \lam[k]{x'}{e'})$
    \item $\Disjoint{\E'}$
    \item\label{item:32} $\SE \vdash \VEnv', \E'$
    \item $\inferS{C'}{\E';\bvar {x'}{\tau_2}}{e'}{\tau_1}$
    \item $\addlin{\entail{C'}{\Cleq{\E'}{k}}}$
    \end{enumerate}
  \item $\SE \vdash \VEnv' (x' \mapsto r_2) : \E';
    \bvar{x'}{\tau_2}$\\
    By~\ref{item:32}, assumption on $\VEnv$, and the subtyping premise.
  \item $\vdash \Store' : \SE'$ \\
    by assumption and the released rule of store typing (where we
    write $\SE'=\SE$ henceforth)
  \item \assumeWellformed{'} \\
    the possible removal of a permission does not violate
    wellformedness; the permission is taken away exactly when the
    closure is destroyed
  \item
    \assumeReachable{}{'_2} \\
    as the reach set is a subset of the incoming environment's reach
  \item $\Rawloc{\Active{\VEnv_2'}}$,
    $\Rawloc{\MutableBorrows{\VEnv_2'}}$,
    $\Rawloc{\ImmutableBorrows{\VEnv_2'}}$, and
    $\Rawloc{\Suspended{\VEnv_2'}}$ are all disjoint follows from
    $\Disjoint{\E'}$ and since $r_2= \E (x_2)$ which is an entry
    disjoint from the closure $\E(x_1)$.
  \item \assumeIncoming{'}{_2'}
    The first item holds because of assumption~\ref{item:15}.

    The second item holds because $\REACH' \subseteq \REACH$ from
    assumption~\ref{item:15}.
  \end{enumerate}
  The inductive hypothesis yields that
    $\exists$ $\Store_3$, $\Perm_3$, $r_3$, $\SE_3$ such that
  \begin{enumerate}[({R1-}1)]
  \item \resultOk{}{_3}
  \item \resultEnv{'}{_3}
  \item $\SE_3 \vdash r_3 : \tau_1$
  \item \resultPermDom{}{_3}
  \item \resultReachPerm{}{_3}
  \item \resultFrame{'}{_2'}{_3}
  \item \resultImmutables{'}{_2'}{_3}
  \item \resultMutables{'}{_2'}{_3}
  \item \resultResources{'}{_2'}{_3}
  \item \resultThinAir{'}{_3}
  \end{enumerate}
  The desired results are immediate because $\Dom\Store = \Dom{\Store'}$.

  \newpage{}
  \textbf{Case $e$ of}
  \lstsemrule{sregion}

  We need to invert rule \TirName{Region}
  \begin{mathpar}
    \ruleSDRegion
  \end{mathpar}

  We need to establish the assumptions for the recursive call
  @eval \Store \Perm' \VEnv' i' e'@ where  $\VEnv' =
  \VEnv (x\mapsto \Addr')$.

  \begin{enumerate}[({A1-}1)]
  \item $\inferS{C}{\E'}{e}{\tau}$ \\
    immediate from the inverted premise
  \item $\SE \vdash \VEnv' : \E'$ \\
    the only change of the environments is at $x$; adding the borrow
    modifier $\BORROW$ succeeds due to the second premise; the address $\Addr'$
    stored into $x$ is compatible with its type by store typing
  \item $\vdash \Store : \SE$ \\
    Immediate by outer assumption
  \item\label{item:11} \assumeWellformed{} \\
    Immediate by outer assumption; adding the modifier does not change
    the underlying raw location
  \item\label{item:12} 
    \assumeReachable{}{'} \\
    locations were swapped simultaneously
  \item $\Rawloc{\Active{\VEnv'}}$,
    $\Rawloc{\MutableBorrows{\VEnv'}}$,
    $\Rawloc{\ImmutableBorrows{\VEnv'}}$, and
    $\Rawloc{\Suspended{\VEnv'}}$ are all disjoint \\
    Immediate by assumption
  \item\label{item:15} \assumeIncoming{}{'}
    Immediate by assumption.
  \end{enumerate}

  The induction hypothesis yields the following statements.
  $\exists$ $\Store_1$, $\Perm_1$, $r_1$, $\SE_1$ such that
  \begin{enumerate}[({R1-}1)]
  \item \resultOk{}{_1}
  \item \resultEnv{}{_1}
  \item $\SE_1 \vdash r_1 : \tau$
  \item \resultPermDom{}{_1}
  \item \resultReachPerm{}{_1}
  \item \resultFrame{}{'}{_1}
  \item \resultImmutables{}{'}{_1}
  \item \resultMutables{}{'}{_1}
  \item \resultResources{}{'}{_1}
  \item \resultThinAir{}{_1}
  \end{enumerate}

  It remains to derive the induction hypothesis in the last line. The
  only additional action is the exchange of permissions which
  withdraws the borrow.
  \begin{enumerate}[({R}1)]
  \item \resultOk{}{_1}\\
    Immediate
  \item \resultEnv{}{_1}\\
    Immediate
  \item $\SE_1 \vdash r_1 : \tau$\\
    Immediate
  \item \resultPermDom{}{_1} \\
    The addresses $\Addr$ and $\Addr'$ (as well as the elements of
    $\Perm'$ and $\Perm''$) have the same raw location, so
    exchanging them does not affect wellformedness. The underlying set
    of locations does not change.
  \item \resultReachPerm{}{_1} \\
    This case is critical for region encapsulation. Here we need to
    argue that $\Addr'$ (and hence $\Perm''$) is not reachable from $r_1$ because its type
    $\tau$ is bounded by $\klin_{n-1}$ according to the fourth
    premise. We conclude with Lemma~\ref{lemma:containment}.
  \item \resultFrame{}{'}{_1}
    Immediate
  \item \resultImmutables{}{'}{_1} \\
    Immediate
  \item \resultMutables{}{'}{_1} \\
    Immediate
  \item \resultResources{}{'}{_1} \\
    Immediate
  \item \resultThinAir{}{_1} \\
    Immediate
  \end{enumerate}

  \newpage{}
  \textbf{Case $e$ of}
  \lstsemrule{screateanf}

  We need to invert the corresponding rule
  \begin{mathpar}
    \ruleSDCreate
  \end{mathpar}

  It is sufficient to show that there is some $\SE_1 = \SE(\Loc_1 :
  \tapp\tres\tau)$ such that  
  $\Store_1$, $\Perm_1$, and $r_1 = \Loc_1$ fulfill the following requirements.
  \begin{enumerate}[({R}1)]
  \item \resultOk{}{_1}
  \item \resultEnv{}{_1} \\
    For the last item, we need to show that $\SE (\Loc_1) :
    \tapp\tres\tau$, but this follows from the setting of $w$ to a
    resource storable in the semantics.
  \item $\SE_1 \vdash r_1 : \tapp\tres\tau$ \\
    Immediate from the discussion of the preceding case
  \item \resultPermDom{}{_1} \\
    Follows from the assumption on $\Perm$ and for $\Loc_1$ from the
    allocation of the resource.
  \item \resultReachPerm{}{_1} \\
    Immediate from the assignment to $\Perm_1$.
  \item \resultFrame{}{}{_1} 
    Obvious as no existing location is changed.
  \item \resultImmutables{}{}{_1} \\
    Obvious as no existing location has changed and no permission is withdrawn.
  \item \resultMutables{}{}{_1} \\
    Obvious as no existing location has changed and no permission is
    withdrawn.
  \item \resultResources{}{}{_1} \\
    By the constraint on $\E$ in the \TirName{Create} rule, $\Active\VEnv = \emptyset$.
  \item \resultThinAir{}{_1} \\
    Immediate
  \end{enumerate}
  
  \newpage
  \textbf{Case $e$ of}
  \lstsemrule{sdestroyanf}

  We need to invert rule \TirName{Destroy}.
  \begin{mathpar}
    \ruleSDDestroy
  \end{mathpar}
  It is sufficient to show that $\SE_1 = \SE$,
  $\Store_1$, $\Perm_1$, and $r_1 = ()$ fulfill the following
  requirements.
  \begin{enumerate}[({R}1)]
  \item \resultOk{}{_1}
  \item \resultEnv{}{_1} \\
    Immediate: $\Loc$ was updated to void, which has any type.
  \item $\SE_1 \vdash () : \tunit$
  \item \resultPermDom{}{_1}\\
    By assumption on $\Perm$ and because $\Loc$ was removed.
  \item \resultReachPerm{}{_1} \\
    Immediate because the reach set is empty
  \item \resultFrame{}{}{_1} 
    Only $\Store (\Loc)$ was changed, which is not reachable from the frame.
  \item \resultImmutables{}{}{_1} \\
    Immediate because we updated (destroyed) a resource (in $\Active\VEnv$).
  \item \resultMutables{}{}{_1} \\
    Immediate because we updated (destroyed) a resource (in $\Active\VEnv$).
  \item \resultResources{}{}{_1} \\
    By the constraint on $\E$, $\Loc$ was the only resource passed to
    this invocation of eval. The claimed condition holds as $\Loc$ was
    removed from $\Perm_1$ and the location's contents cleared.
  \item \resultThinAir{}{_1}
    \\ Immediate
  \end{enumerate}

  \newpage
  \textbf{Case $e$ of}
  \lstsemrule{var}

  We need to invert rule \TirName{Var}.
  \begin{mathpar}
    \ruleSDIVar
  \end{mathpar}

  We establish that the claims hold for
  $\Store' = \Store$, $\Perm' = \Perm$, $r = \VEnv (x)$, and $\SE' = \SE$.
  \begin{enumerate}[({R}1)]
  \item \resultOk{}{}
  \item \resultEnv{}{}
    \\ Immediate by reflexivity and assumption.
  \item $\SE \vdash r : \tau$
    \\ Immediate by assumption~\ref{item:32}.
  \item \resultPermDom{}{}
    \\ Immediate by assumption~\ref{item:11}
  \item \resultReachPerm{}{}
    \\ Immediate
  \item \resultFrame{}{}{}
    Immediate as permissions and store stay the same.
  \item \resultImmutables{}{}{}
    \\ Immediate
  \item \resultMutables{}{}{}
    \\Immediate
  \item \resultResources{}{}{}
    \\ As $\Perm$ remains the same, a linear resource in $x$ is
    returned untouched.
  \item \resultThinAir{}{}
    \\ Immediate
  \end{enumerate}

  \newpage
  \textbf{Case $e$ of}
  \lstsemrule{const}

  We need to invert rule \TirName{Const}.
  \begin{mathpar}
    \ruleSDConst
  \end{mathpar}

  We need to establish the claims for $\Store' = \Store$,
  $\Perm'=\Perm$, $r' = c$, and $\SE' = \SE$:
  \begin{enumerate}[({R}1)]
  \item \resultOk{}{}
  \item \resultEnv{}{} \\
    By assumption~\ref{item:33}.
  \item $\SE \vdash c : \CType c$
    \\by result typing.
  \item \resultPermDom{}{}
    \\ By assumption~\ref{item:11}.
  \item \resultReachPerm{}{}
    \\ As $\Reach\Store c = \emptyset$.
  \item \resultFrame{}{}{}
    Immediate
  \item \resultImmutables{}{}{}
    \\ Immediate
  \item \resultMutables{}{}{}
  \item \resultResources{}{}{}
    \\ Immediate as $\Active\VEnv = \emptyset$.
  \item \resultThinAir{}{}
    \\ Immediate
  \end{enumerate}

  \newpage
  \textbf{Case $e$ of}
  \lstsemrule{spairanf}

  We need to invert rule \TirName{Pair}.
  \begin{mathpar}
    \ruleSDAIPair
  \end{mathpar}
  Show that $\Store'$, $\Perm'$, $r' = \Loc'$, $\SE' = \SE (\Loc' : \tyPair[k]{\tau_1}{\tau_2})$ such that
  \begin{enumerate}[({R}1)]
  \item \resultOk{}{'}
  \item \resultEnv{}{'}
  \item $\SE' \vdash \Loc' : \tyPair{\tau_1}{\tau_2}$
  \item \resultPermDom{}{'}
    \\ By assumption~\ref{item:11} and because $\Loc'$ is properly initialized.
  \item \resultReachPerm{}{'}
    \\ By assumption~\ref{item:12}, $\Reach{\Store'}{\Loc'} =
    \Reach\Store{r_1, r_2} \cup \{\Loc'\} \subseteq
    \Reach{\Store}\VEnv \cup \{\Loc'\}$ and $\{\Loc'\} = \Dom{\Store'}
    \setminus \Dom\Store$.
  \item \resultFrame{}{}{'}
    Immediate
  \item \resultImmutables{}{}{'}
    \\ Immediate
  \item \resultMutables{}{}{'}
    \\ Immediate
  \item \resultResources{}{}{'}
    \\ Every such $\Loc$ must be reachable either from $r_1$ or
    $r_2$. So they become reachable from $\Loc'$, as required.
  \item \resultThinAir{}{'}
    \\ Immediate
  \end{enumerate}

  \newpage
  \textbf{Case $e$ of}
  \lstsemrule{lam}

  We need to invert rule \TirName{Abs}
  \begin{mathpar}
    \ruleSDLam
  \end{mathpar}

  Show that $\Store'$, $\Perm'$, $r' = \Loc'$, and $\SE' = \SE (\Loc'
  : \tau_2\tarr{k}\tau_1)$ fulfill
  \begin{enumerate}[({R}1)]
  \item \resultOk{}{'}
  \item \resultEnv{}{'} \\
    Immediate by definition and store typing
  \item $\SE' \vdash r' : \tau_2\tarr{k}\tau_1$ \\
    Immediate by store typing
  \item \resultPermDom{}{'} \\
    Wellformedness holds by assumption on $\Perm$ and because $\Loc'$ is a new location.
    The domain constraint is assumed for $\Perm$ and $\Loc'$ is
    initialized to a  closure.
  \item \resultReachPerm{}{'} 
    \begin{align*}
      \Reach{\Store'}{r'} & = \{ \Loc' \} \cup \Reach{\Store'}{\VEnv} \\
      &= \Dom{\Store'}\setminus\Dom\Store \cup \Reach{\Store'}{\VEnv}
    \end{align*}
    Moreover, the constraint ${\Cleq{\E}{k}}$ implies that
    $\Suspended\VEnv = \emptyset$.
  \item \resultFrame{}{}{'} 
    Immediate
  \item \resultImmutables{}{}{'}   \\
    Immediate
  \item \resultMutables{}{}{'} \\
    Immediate
  \item \resultResources{}{}{'} \\
    The second case is immediately applicable.
  \item \resultThinAir{}{'} \\
    Immediate
  \end{enumerate}

  \newpage
  \textbf{Case $e$ of}
  \lstsemrule{sborrow}

  We have to invert rule \TirName{Borrow}
  \begin{mathpar}
    \ruleSDBorrow
  \end{mathpar}

  Show that $\Store' = \Store$, $\Perm' = \Perm$, $r' = \Addr$, $\SE'
  = \SE$ such that
  \begin{enumerate}[({R}1)]
  \item \resultOk{}{'}
  \item \resultEnv{}{'} \\
    Immediate, no changes.
  \item $\SE' \vdash r' : \borrowty{k}{\tau}$\\
    Immediate by result typing and because the interpreter checks that
    the permissions of the borrow are very restricted.
  \item \resultPermDom{}{'}\\
    Immediate (no change).
  \item \resultReachPerm{}{'} \\
    By typing, $\Addr$ is not in $\Suspended\VEnv$.
    Hence,  the condition is immediate.
  \item \resultFrame{}{}{'} 
    Immediate as no change.
  \item \resultImmutables{}{}{'}\\
    Immediate as no change
  \item \resultMutables{}{}{'} \\
    Immediate
  \item \resultResources{}{}{'} \\
    Immediate because $\REACH$, $\REACH'$ must be empty
  \item \resultThinAir{}{'}\\
    Immediate.
  \end{enumerate}

  \newpage
  \textbf{Case $e$ of}
  \lstsemrule{sobserveanf}

  We have to invert the rule \TirName{Observe}
  \begin{mathpar}
    \ruleSDObserve
  \end{mathpar}
  Show that $\Store' = \Store$, $\Perm' = \Perm$, $r'$, and $\SE' =
  \SE$ fulfill
  \begin{enumerate}[({R}1)]
  \item \resultOk{}{'}
  \item \resultEnv{}{'} \\
    By reflexivity and assumption.
  \item $\SE' \vdash r' : \tau$ \\
    Immediate by store typing
  \item \resultPermDom{}{'} \\
    Immediate: no changes.
  \item \resultReachPerm{}{'} \\
    Immediate
  \item \resultFrame{}{}{'}
    Immediate: no changes.
  \item \resultImmutables{}{}{'} \\
    Immediate: no changes to immutables.
  \item \resultMutables{}{}{'}  \\
    Immediate: one particular $\Addr$ is overwritten, but not freed.
  \item \resultResources{}{}{'} \\
    Immediate because $\REACH = \emptyset$
  \item \resultThinAir{}{'} \\
    Immediate
  \end{enumerate}

  \newpage
  \textbf{Case $e$ of}
  \lstsemrule{supdateanf}

  We need to invert rule \TirName{Update}
  \begin{mathpar}
    \ruleSDUpdate
  \end{mathpar}

  We need to show that $\Store'$, $\Perm'$, $r' = ()$, $\SE' = \SE$ fulfill
  \begin{enumerate}[({R}1)]
  \item \resultOk{}{'}
  \item \resultEnv{}{'} \\
    Immediate by store typing for $\Loc$
  \item $\SE' \vdash r' : \tunit$ \\
    Immediate
  \item \resultPermDom{}{'} \\
    Immediate, as we remove a permission from $\Perm$
  \item \resultReachPerm{}{'}\\
    Immediate, as we only update a reachable $\Loc$
  \item \resultFrame{}{}{'}
    Immediate
  \item \resultImmutables{}{}{'} \\
    Immediate
  \item \resultMutables{}{}{'} \\
    Immediate; for $\Loc$, we observe that it is overwritten, but not  freed.
  \item \resultResources{}{}{'} \\
    Immediate because $\Active\VEnv = \emptyset$ and hence $\REACH = \emptyset$.
  \item \resultThinAir{}{'}
  \end{enumerate}

  \newpage
  \textbf{Case $e$ of}
  \lstsemrule{smatchanf}

  We need to invert rule \TirName{MatchPair}
  \begin{mathpar}
    \ruleSDAIMatchPair
  \end{mathpar}
  The case VMatch corresponds to the match specification
  $\etransfm = \operatorname{id}$.

  Establish the assumptions for the recursive call
  with $\VEnv'_2 = \VEnv_2 (x\mapsto r_1) (x' \mapsto r_1')$ and $\SE' = \SE$:
  \begin{enumerate}[({A1-}1)]
  \item $\inferS{C}{\E_2 (x:\tau_1) (x':\tau'_1)}{e_2}{\tau_2}$ by inversion
  \item $\SE \vdash \VEnv_2 : \E_2$  by assumption; moreover, $\SE
    \vdash r_1 : \tau_1$ and $\SE \vdash r_1' : \tau'_1$ by inversion
    of the store typing for $\Loc$. As $\SE'=\SE$, we have $\SE' \vdash \VEnv_2' : \E_2 (x:\tau_1) (x':\tau_1')$.
  \item $\vdash \Store' : \SE'$ : the only change from assumption is in
    $\Loc$ which potentially maps to $\StFreed$.
  \item \assumeWellformed{'} : permission to $\Loc$ is removed iff
    $\Loc$ is mapped to $\StFreed$.
  \item \assumeReachable{'}{'_2} \\
    by assumption
  \item $\Rawloc{\Active{\VEnv'_2}}$,
    $\Rawloc{\MutableBorrows{\VEnv'_2}}$,
    $\Rawloc{\ImmutableBorrows{\VEnv'_2}}$, and
    $\Rawloc{\Suspended{\VEnv'_2}}$ are all disjoint: by assumption
    and splitting
  \item \assumeIncoming{'}{'_2}
  \end{enumerate}

  Hence the call to @eval@ yields
  $\exists$ $\Store_2$, $\Perm_2$, $r_2$, $\SE_2$ such that
  \begin{enumerate}[({R1-}1)]
  \item \resultOk{}{_2}
  \item \resultEnv{'}{_2}
  \item $\SE_2 \vdash r_2 : \tau_2$
  \item \resultPermDom{}{_2}
  \item \resultReachPerm{}{_2}
  \item \resultFrame{}{}{_2}
  \item \resultImmutables{}{}{'}
  \item \resultMutables{}{}{'}
  \item \resultResources{}{}{'}
  \item \resultThinAir{}{'}
  \end{enumerate}
  As $R_2$ is also returned from the match, these results carry over.
  
  \newpage
  \textbf{Case $e$ of}
  \lstsemrule{matchborrowanf}

  We need to invert rule \TirName{MatchPair}
  \begin{mathpar}
    \ruleSDAIMatchPair
  \end{mathpar}
  The case VMatchborrow corresponds to the match specification
  $\etransfm = \&^{\BORROW}$.
  In contrast to the non-borrowing match, the borrowed pair is never deallocated.

  Establish the assumptions for the recursive call
  with  $\SE'' = \SE$:
  \begin{enumerate}[({A1-}1)]
  \item $\inferS{C}{\E_2 (x:\borrow{\tau_1}) (x':\borrow{\tau'_1})}{e_2}{\tau_2}$ by inversion
  \item $\SE \vdash \VEnv_2 : \E_2$  by assumption; moreover, $\SE
    \vdash r_1'' : \borrow{\tau_1}$ and $\SE \vdash r_2'' : \borrow{\tau'_1}$ by inversion
    of the store typing for $\Addr$. As $\SE''=\SE$, we have $\SE'' \vdash \VEnv_2'' : \E_2 (x:\borrow{\tau_1}) (x':\borrow{\tau_1'})$.
  \item $\vdash \Store' : \SE''$ : the only change from assumption is in
    $\Loc$ which potentially maps to $\StFreed$.
  \item \assumeWellformed{''} : permission to $\Loc$ is removed iff
    $\Loc$ is mapped to $\StFreed$.
  \item \assumeReachable{''}{''_2} \\
    by assumption
  \item $\Rawloc{\Active{\VEnv''_2}}$,
    $\Rawloc{\MutableBorrows{\VEnv''_2}}$,
    $\Rawloc{\ImmutableBorrows{\VEnv''_2}}$, and
    $\Rawloc{\Suspended{\VEnv''_2}}$ are all disjoint: by assumption
    and splitting
  \item \assumeIncoming{''}{'_2}
  \end{enumerate}

  Hence the call to @eval@ yields
  $\Store_2$, $\Perm_2$, $r_2$, $\SE_2$ such that
  \begin{enumerate}[({R1-}1)]
  \item \resultOk{}{_2}
  \item\label{item:1} \resultEnv{''}{_2}
  \item\label{item:3} $\SE_2 \vdash r_2 : \tau_2$
  \item\label{item:4} \resultPermDom{}{_2}
  \item\label{item:5} \resultReachPerm{_2''}{_2}
  \item\label{item:6} \resultFrame{''}{_2''}{_2}
  \item\label{item:7} \resultImmutables{''}{_2''}{_2}
  \item\label{item:34} \resultMutables{''}{_2''}{_2}
  \item\label{item:35} \resultResources{''}{_2''}{_2}
  \item\label{item:36} \resultThinAir{''}{_2}
  \end{enumerate}

  It remains to relate to result with the original call to @eval@.
  \begin{enumerate}[(R1)]
  \item \resultOk{}{_2}
  \item \resultEnv{}{_2} because $\SE'' = \SE$ and \ref{item:1}.
  \item $\SE_2 \vdash r_2 : \tau_2$ by \ref{item:3}
  \item \resultPermDom{}{_2}
    Immediate from~\ref{item:4}.
  \item \resultReachPerm{}{_2} By~\ref{item:5} and because $\Store =\Store''$.
  \item \resultFrame{}{}{_2} Immediate from~\ref{item:6} because
    $\Store = \Store''$
  \item \resultImmutables{}{}{_2}
  \item \resultMutables{}{}{_2}
  \item \resultResources{}{}{_2}
    Immediate by~\ref{item:35} because the borrowing match does not deallocate.
  \item \resultThinAir{}{_2}
  \end{enumerate}

\end{proof}


\end{document}